\renewcommand{\ket}[1]{\vphantom{#1}\left\lvert\smash{#1}\right\rangle}
\renewcommand{\bra}[1]{\vphantom{#1}\left\langle\smash{#1}\right\rvert}
\renewcommand{\norm}[1]{\braces*{\lVert}{\rVert}{#1}}
\renewcommand{\braket}{\innerproduct*}
\newcommand{\kettbra}{\ketbra*}
\theoremstyle{plain}
\newtheorem{lemma}{Lemma}[]
\newtheorem{thm}[lemma]{Theorem}
\newtheorem{prop}[lemma]{Proposition}
\newtheorem{cor}[lemma]{Corollary}
\theoremstyle{definition}
\newtheorem{definition}{Definition}
\theoremstyle{remark}
\newtheorem{remark}{Remark}
\newtheorem{example}{Example}
\newcommand{\cA}{{\mathcal A}}
\newcommand{\cB}{{\mathcal B}}
\newcommand{\cD}{{\mathcal D}}
\newcommand{\cH}{{\mathcal H}}
\newcommand{\cK}{{\mathcal K}}
\newcommand{\cM}{{\mathcal M}}
\newcommand{\cZ}{{\mathcal Z}}
\newcommand{\bC}{{\mathbb C}}
\newcommand{\bE}{{\mathbb E}}
\newcommand{\E}{{\mathbb E}}
\DeclareMathOperator{\linspan}{span}
\DeclareMathOperator{\gap}{gap}
\DeclareMathOperator{\ran}{ran}
\DeclareMathOperator{\spec}{spec}
\newcommand{\idty}{\ensuremath{\mathds{1}}}
\newcommand{\floor}[1]{\lfloor #1 \rfloor}
\newcommand{\half}[1][1]{\ensuremath{\frac{#1}{2}} }
\newcommand{\heis}[2]{\bm S_{#1}\cdot \bm S_{#2}}
\begin{document}

\title{A Nonvanishing Spectral Gap for AKLT Models on Generalized Decorated Graphs}

\author{Angelo Lucia}
\affiliation{Departamento de Análisis y Matemática Aplicada, Universidad Complutense de Madrid, 28040 Madrid, Spain}
\affiliation{Instituto de Ciencias Matemáticas, 28049 Madrid, Spain}

\author{Amanda Young}\email{young@ma.tum.de}
\affiliation{Munich Center for Quantum Science and Technology and Zentrum Mathematik, TU M\"{u}nchen 85747 Garching, Germany}

\date{\today}

\begin{abstract}
    We consider the spectral gap question for AKLT models defined on decorated versions of simple, connected graphs $G$. This class of decorated graphs, which are defined by replacing all edges of $G$ with a chain of $n$ sites, in particular includes any decorated multi-dimensional lattice. Using the Tensor Network States (TNS) approach from [H. Abdul-Rahman et. al., Analytic Trends in Mathematical Physics, Contemporary Mathematics, Vol. 741, p. 1 (2020)], we prove that if the decoration parameter is larger than a linear function of the maximal vertex degree, then the decorated model has a nonvanishing spectral gap above the ground state energy.
\end{abstract}

\maketitle

\section{Introduction}

One of the most important classes of quantum spin models in the study of topological phases of matter is the family of the antiferromagnetic, $SU(2)$-invariant quantum spin systems introduced by Affleck, Kennedy, Lieb and Tasaki (AKLT) in Ref.~\onlinecite{AKLT87,AKLT88}. A fundamental quantity in the characterization of quantum phases is the existence or non-existence of a spectral gap above the ground state energy in the thermodynamic limit. In their seminal work, AKLT proved that their one-dimensional, spin-one chain satisfied the characteristic properties of the Haldane phase \cite{haldane:1983}, including a spectral gap of the finite volume Hamiltonians uniform in the system size. AKLT models on higher dimensional lattices were also introduced, and it was further conjectured that if the spatial dimension and coordination number are sufficiently large, then the model would exhibit N\'eel order and, hence, be gapless \cite{AKLT88}. This has been verified analytically for models on Cayley trees with coordination number at least five \cite{AKLT88, Fannes:92b}, and numerical evidence supports the conjecture on three-dimensional lattices \cite{PSA:2009}.

In contrast, the AKLT models on the hexagonal and square lattices were conjectured to be gapped. While it was proved that the AKLT state on the hexagonal lattice does not exhibit N\'eel order in Ref.~\onlinecite{kennedy:1988}, the nonvanishing gap was only recently shown in Ref.~\onlinecite{Lemm_2020,Pomata2019} using a combination of numerical and analytical techniques. The approach in Ref.~\onlinecite{Lemm_2020} uses DMRG with a finite size criterion in the spirit of Knabe \cite{Knabe1988}, while Ref.~\onlinecite{Pomata2019} combines a Lanzcos method with the general theory from Ref.~\onlinecite{decorated-aklt} for proving uniform gaps of quantum spin models with Tensor Network States (TNS) ground states on decorated graphs. The TNS method adapts the one-dimensional finitely correlated state approach from Ref.~\onlinecite{FNW} to a particular class of models defined on decorated graphs defined by replacing each edge of a graph $G$ with a chain of $n$ sites. AKLT ground states on decorated lattices are of interest, e.g., as they have been shown to constitute a universal quantum computation resource \cite{wei:2011, WHR:2014}. The authors of Ref.~\onlinecite{decorated-aklt} applied their theory to show that the AKLT model on the decorated hexagonal lattice was gapped as long as the decoration was sufficiently large. This was then extended in combination with numerical methods to decoration numbers $n\geq 0$ in Ref.~\onlinecite{Pomata_2020}, on the square lattice for decoration parameters $n\geq 2$ in Ref.~\onlinecite{Pomata2019}, and on the 3D diamond lattice and the 2D kagome lattice for $n\ge 1$ in Ref.~\onlinecite{Guo2021}.

Hence, a natural question is whether or not AKLT models on decorated graphs with higher coordination number are gapped. Concretely, we say a quantum spin model on a connected graph $G=(V,E)$ is uniformly gapped if there exists a sequence of finite subgraphs $G_k=(V_k, E_k),$ $k\geq 1$, such that $E_k \subseteq E_{k+1}$, $V_k \subseteq V_{k+1}$ and $\cup_kE_k = E$ for which the associated local Hamiltonians satisfy
\[
\gamma := \inf_k \gap(H_{G_k}) >0,
\]
where $\gap(H_{G_k})>0$ is the difference between the ground state and first excited state energies. The AKLT models have a well-defined infinite volume dynamics in the sense of Ref.~\onlinecite{nachtergaele:2006}. For such quantum spin models, a positive uniform gap implies a nonvanishing gap in the thermodynamic limit, in the sense that the GNS Hamiltonian associated to any weak-* limit of finite-volume ground states of this sequence has a spectral gap bounded below by $\gamma.$ See, e.g. Ref.~\onlinecite{BDN}, for a precise statement and proof.

In this work, we consider any (possibly infinite) simple graph $G=(V,E)$ such that $\Delta(G)=\sup_{v\in V}\deg(v)$ is finite. We show that if all edges of this graph are replaced with a chain of $n$-sites, then the AKLT model on the decorated graph is uniformly gapped as long as $n\geq n(\Delta(G))$ where $n(\Delta(G))$ is a linear function of the maximal vertex degree, see Theorem~\ref{thm:explicit_bound} below. Our proof follows from a slight variation of the analytical framework from Ref.~\onlinecite{decorated-aklt} that produces tighter bounds on the minimal decoration number. The main quantities for bounding the gap using this method depend on the transfer operators associated with the TNS defined by certain quasi one-dimensional subgraphs of the decorated graph. When the maximal vertex degree is small (i.e. 3 or 4), these quantities can be explicitly computed. However, this becomes nontrivial when $\Delta(G)$ is arbitrary. We overcome this challenge by finding the exact singular value decomposition of the transfer operator. 

In a recent work~\cite{2209.01141} with a co-author, we proved that the spectral gap of the decorated AKLT model on the hexagonal lattice is stable when $n\ge 5$, in the sense that the spectral gap remains positive when the model is perturbed by another sufficiently fast decaying interaction. In terms of classification of quantum phases, this means that the model belongs to a stable gapped phase. This stability is a result of proving a condition on the ground states known as \emph{local topological quantum order} (LTQO). It is natural to ask whether this is the case also for the general models considered here, i.e., if it is possible to prove a LTQO condition for sufficiently large decoration number $n$ for any graph $G$. We conjecture that this is in fact the case, although we make no claims on the scaling of the minimal decoration number required. The proof of the result of Ref.~\onlinecite{2209.01141} relies on a representation of the ground states of the AKLT model in terms of a gas of loops~\cite{kennedy:1988}. The LTQO condition is then a consequence of showing that the cluster expansion for the partition function of the loop model converges. While the loop model would be more complicated for general graphs $G$ (for example, it might allow loops to cross each other, something forbidden in trivalent graphs), increasing the decoration number has the effect of decreasing the weight of each segment of the loop, while leaving the other details of the model invariant. So it is reasonable to expect that a large enough decoration would make the expansion convergent even in the more general case.

The organization of this paper is as follows. In Section~\ref{sec:main_result}, we define and state the spectral gap result for the decorated AKLT models, and summarize the modified version of the uniform gap strategy from Ref.~\onlinecite{decorated-aklt}.  A special class of operators, called \emph{matching operators}, are introduced and studied in Section~\ref{sec:matchings}. We use these operators to establish the necessary SVD in Section~\ref{sec:Gap_Proof} and then prove the main result. In the appendices, we provide helpful combinatorial identities, and discuss the differences between the modified TNS approach used in this work and the one proved in Ref.~\onlinecite{decorated-aklt}. In particular, we prove that the modified version used here produces a tighter bound on the minimal decoration needed to guarantee the decorated model is uniformly gapped.

\section{Main result and proof strategy} \label{sec:main_result}

\subsection{The uniform gap for the decorated models}
We consider AKLT models on decorated versions of a (potentially infinite) simple graph $G = (V,E)$ such that
\[ \Delta(G) := \sup_{v\in V} \deg(v) < \infty.\] 
For each $n\geq 1$, the \emph{$n$-decorated graph} $G^{(n)} = (V^{(n)}, E^{(n)})$ is defined by adding $n$ additional vertices to each edge $e\in E$, see Figure~\ref{fig:Dec_graph}. The integer $n$ is called the \emph{decoration parameter}. We will also use $n=0$ to denote the original graph. 

\begin{figure}
    \centering
    \includegraphics[scale=1.1]{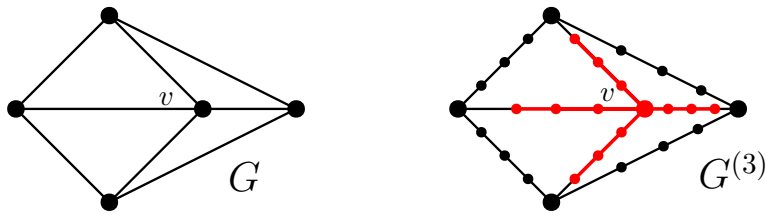}
    \caption{The $n=3$ decorated version of a graph $G$. The subgraph colored in red is $Y_v$.}
    \label{fig:Dec_graph}
\end{figure}

The AKLT model we consider for any $n\geq 0$ is defined as follows. At each vertex $v\in V^{(n)}$ we associate a spin-$\deg(v)/2$ particle represented by the local Hilbert space $\cH_v = \bC^{\deg(v)+1},$ and the interaction for any $e=(v,w) \in E^{(n)}$ is the orthogonal projection \[P^{(z(e)/2)}\in \mathcal{B}(\cH_v\otimes \cH_w)\] 
onto the subspace of maximal total spin, i.e., $z(e):=\deg(v) + \deg(w)$. It is well-known that this defines a frustration-free, nearest-neighbor interaction~\cite{KK89}, and we note that the interaction between any two neighboring decorated sites is simply that of the spin-1 AKLT chain \cite{AKLT88}. 

For any $v\in V$, let $Y_v\subset G^{(n)}$ be the subgraph consisting of the undecorated vertex $v$ and the $n\cdot\deg(v)$ sites decorating the edges incident to $v$, see Figure~\ref{fig:Dec_graph}. Our main result states that as long as the decoration $n$ is sufficiently large, the spectral gap of the finite-volume Hamiltonian
\begin{equation}\label{hamiltonian}
    H_{\Lambda^{(n)}} = \sum_{\substack{\text{edges} \\ (v,w)\in\Lambda^{(n)} }}P_e^{(z(e)/2)}, \qquad \Lambda^{(n)} = \bigcup_{v\in\Lambda}Y_v
\end{equation}
has a positive lower bound independent of $|\Lambda|$ for any finite $\Lambda \subseteq G$. This result, which we now state, depends on the decreasing function
\begin{equation}
    f(d) = 3\cdot\frac{2+(1+\frac{1}{4^d})^{d-1}}{[4-(1+\frac{\sqrt{3}}{2^{d}})^{d-1}]^2}.
\end{equation}

\begin{thm}\label{thm:explicit_bound}
Suppose that $G=(V,E)$ is a simple graph such that $3\leq \Delta(G)<\infty$. If $n\geq n(\Delta(G))$ where
\begin{equation}\label{eq:n_bound}
n(d) = 
\begin{cases}
d & d \leq 4 \\
\frac{\ln(2)}{\ln(3)}d + \frac{\ln(f(d)))}{\ln(3)} & d > 4
\end{cases}
\end{equation}
then there exists $\gamma(\Delta(G),n)>0$ such that
\begin{equation}
    \inf_{\substack{\Lambda\subseteq G : \\ |\Lambda|<\infty}} {\rm gap}(H_{\Lambda^{(n)}}) >\gamma(\Delta(G),n).
\end{equation}
\end{thm}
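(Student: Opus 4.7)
The plan is to apply the TNS-based gap estimation framework of Ref.~\onlinecite{decorated-aklt}, in the sharpened form discussed in the appendix of this work, to the family of star subgraphs $\{Y_v\}_{v\in V}$. For a frustration-free model with TNS ground states, this framework reduces the uniform lower bound on $\gap(H_{\Lambda^{(n)}})$ to two $v$-local computations: a bound $\gap(h_{Y_v}) \geq \gamma_v > 0$ on the Hamiltonian restricted to $Y_v$, and a bound on $\|\mathbb{E}_v\|$, where $\mathbb{E}_v$ is a transfer-operator-type object built from the AKLT TNS tensors on $Y_v$ together with those on the $\deg(v)$ decorated chains incident to $v$. The resulting gap criterion has the schematic form $\gamma_v > C(d)\,\|\mathbb{E}_v\|^2$, with $\|\mathbb{E}_v\|$ decaying like $3^{-n}$ in the decoration parameter.

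First I would establish a lower bound on $\gamma_v$ that is uniform in $n$ and $\deg(v)$. Since $Y_v$ is a spin-$\deg(v)/2$ hub glued to $\deg(v)$ spin-$1$ AKLT chains of length $n$, either a Knabe-style finite-size criterion or a martingale argument running outward along each arm should reduce $\gap(h_{Y_v})$ to the known spin-$1$ AKLT chain gap together with the gap of the two-body spin-$\tfrac{d}{2}$--spin-$1$ projector at the hub, yielding an $n$-independent positive constant. This is the quantity that eventually produces the $4$ in the denominator of $f(d)$.

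The decisive step is an exact singular value decomposition of $\mathbb{E}_v$, carried out via the \emph{matching operators} introduced in Section~\ref{sec:matchings}. Because $\mathbb{E}_v$ acts through the maximal-spin projector at the hub on a tensor product over the $\deg(v)$ boundary edges, a direct diagonalization is combinatorially intractable for large $\deg(v)$; this is precisely where earlier analyses restricted to $d\in\{3,4\}$. The idea is to expand the maximal-spin projector as a weighted sum over perfect matchings of the $2d$ auxiliary indices, so that inner products of $\mathbb{E}_v$ on TNS basis vectors reduce to sums over matchings of single-edge transfer-operator matrix elements, whose spectrum is $\{1,-\tfrac{1}{3}\}$. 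The resulting singular values should factorize into a leading $3^{-n}$ factor times a product over $d-1$ ``side'' arms, each contributing a factor close to $1$ with a correction of order $2^{-d}$ or $\sqrt{3}\cdot 2^{-d}$; this is precisely the structure of the $(1+4^{-d})^{d-1}$ and $(1+\sqrt{3}\cdot 2^{-d})^{d-1}$ terms appearing in $f(d)$.

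The main obstacle will be making this SVD sharp and explicit enough to yield the bound~(\ref{eq:n_bound}) in closed form, uniformly in $d$; this is where the combinatorial identities promised by the matching-operator machinery are essential. Once the estimate $\|\mathbb{E}_v\|^2 \lesssim 3^{-n}\cdot 2^d f(d)$ is in hand, the gap criterion reduces to the inequality $3^n \geq 2^d f(d)$, which after taking base-$3$ logarithms is exactly the condition $n\geq n(d)$ stated in~(\ref{eq:n_bound}) for $d>4$; the cases $d\leq 4$ can be treated by direct inspection as in the prior literature.
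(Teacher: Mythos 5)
Your outline captures the skeleton of the paper's argument (coarse-grain to the star subgraphs $Y_v$, diagonalize the transfer operator via matching operators, extract a criterion of the rough shape $3^n \gtrsim 2^d f(d)$), but it misidentifies the actual gap criterion and where the constants in $f(d)$ come from. The framework does not reduce to the schematic inequality $\gamma_v > C(d)\,\|\mathbb{E}_v\|^2$. Instead, the bound $\gap(H_{\Lambda^{(n)}}) \geq \tfrac{1}{2}\bigl(\inf_v \gap(H_{Y_v})\bigr)\bigl(1-\Delta(G)\,\epsilon_G(n)\bigr)$ separates the local gap $\gap(H_{Y_v})$ from the overlap quantity $\epsilon_G(n)=\sup_{(v,w)}\norm{G_v G_w - G_v\wedge G_w}$, and the condition determining $n(d)$ is entirely $\epsilon_G(n)<1/\Delta(G)$. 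The local gap $\gap(H_{Y_v})$ is never estimated — the paper only observes that it depends on $\deg(v)$, which takes finitely many values, so it supplies a strictly positive $n$-independent prefactor and plays no role in $n(d)$. Consequently, the Knabe/martingale argument you sketch for $\gap(h_{Y_v})$ is not needed, and your claim that $\gamma_v$ "produces the $4$ in the denominator of $f(d)$" is wrong: that $4$ arises from the lower bound $q_L^{(d,n)}\geq \tfrac{d+1}{3\cdot 2^{d-1}}\bigl(4-(1+3^{-n+1/2})^{d-1}\bigr)$ on the smallest eigenvalue of $Q_L=\bE_L^{(d,n)}(\idty)$, where the $4=3+1$ comes from the factor $\tfrac{1}{3}$ in the operator-norm estimate $\norm{R_L}\leq\tfrac{(1+\sqrt{3}/3^n)^{d-1}-1}{3}$ for $Q_L=\tfrac{d+1}{2^{d-1}}(\idty+R_L)$, not from any spectral gap.

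Two smaller inaccuracies: the matching operators live on $[d]$ or $[d-1]$ tensor factors (the auxiliary legs of a single spin-$d/2$ hub tensor), not on $2d$ indices, and the expansion of $P^{(d)}_{\rm sym}$ is over $r$-matchings for all $0\leq r\leq\lfloor d/2\rfloor$, not just perfect matchings. Also, the singular values of $\bE_L^{(d,n)}$ do not literally factorize over $d-1$ arms; they are $\tfrac{d+1}{2^{d-1/2}}\norm{V_{A/2}^{(d-1)}(\alpha_n)}_2$ with $\alpha_n=2(-1/3)^n$, and the $(1+4^{-d})^{d-1}$ and $(1+\sqrt{3}\cdot 2^{-d})^{d-1}$ factors in $f(d)$ appear only after a final binomial estimate in which the already-established constraint $n>\tfrac{\ln 2}{\ln 3}d$ is used to replace $3^{-n}$ by $2^{-d}$. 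So while the answer has the right flavor, the argument you describe would not assemble into the stated bound without reorganizing it around $\epsilon_G(n)<1/\Delta(G)$ and correctly locating the role of $q_L^{(d,n)}$ versus $\norm{\bE_L^{(d,n)}}_\infty$.
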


\begin{remark}
Our result can easily be generalized to the situation where the decoration varies on different edges. Given a bounded function $\bm{n}: E \to \mathds{N}$, consider the decorated graph $G^{(\bm{n})}$ obtained by adding $\bm{n}(e)$ additional vertices to each edge $e \in E$. Then the same arguments used to prove Theorem~\ref{thm:explicit_bound} imply that, for each finite $\Lambda \subset G$, the AKLT Hamiltonian on $\Lambda^{(\bm{n})}$ has a positive spectral gap uniform in $\Lambda$ as long as $\min_{e\in E} \bm{n}(e) \ge n(\Delta(G))$. Here, $\Lambda^{(\pmb{n})}$ is defined as in \eqref{hamiltonian} with $Y_v$ being the subgraph consisting of $v$ and all vertices decorating the edges incident to $v$. The modifications needed to obtain this results are discussed at the end of Section~\ref{sec:TheActualProof}.
\end{remark}

Several other comments regarding Theorem~\ref{thm:explicit_bound} are in order:

\begin{enumerate}
    \item The constraint $\Delta(G)\geq 3$ is not necessary, but the case $\Delta(G)\leq 2$ does not yield any new results. For (undecorated) regular graphs: the case $\Delta(G) = 2$ is the famous AKLT result, while $\Delta(G) =1$ corresponds to an interaction with commuting terms, which is trivially gapped. Moreover, for any graph with $\min_v\deg(v)=1$ and $\Delta(G) = 2$, e.g., a small variation of the one-dimensional finitely correlated states argument from Ref.~\onlinecite{Fannes:92b} would also imply a gap.
    \item In the case that $\Delta(G) = 3,4$, Theorem~\ref{thm:explicit_bound} extends the class of decorated graphs that were studied in Ref.~\onlinecite{decorated-aklt, Pomata2019, Pomata_2020}. Moreover, when comparing with the previous results that only use analytical techniques, the present result either improves or reproduces the lower bound on $n$.
    \item Since $1 \leq f(d) \leq f(5) \approx 1.17851$, Theorem~\ref{thm:explicit_bound} proves a positive uniform gap when $n$ is greater than a linear function of $\Delta(G)$. However, it is unknown if this bound on the minimal decoration is optimal. Since AKLT models on undecorated lattices with large coordination number are expected to exhibit N\'eel order, it would be interesting to determine the minimal decoration needed to guarantee these models are in a gapped phase.
    \item The lower bound on the spectral gap of the model is only a function of the maximal degree $\Delta(G)$. Therefore we can also apply Theorem~\ref{thm:explicit_bound} to show a uniform spectral gap estimate for a sequence of finite graphs $G_k$ having a uniform maximal degree $\Delta$. This allows to prove, for example, uniform bound on the spectral gaps for a sequence of finite volumes with periodic boundary conditions (e.g., $G_k = \mathds{Z}_k^{\times \nu}$ for a fixed $\nu$).
\end{enumerate}

A consequence of Theorem~\ref{thm:explicit_bound} is the following thermodynamic limit result, which is stated with respect to the $C^*$-algebra of (quasi-)local observables
\[
\cA_{G^{(n)}} = \overline{\bigcup_{\substack{\Lambda \subseteq G^{(n)}\\|\Lambda|<\infty}} \cA_{\Lambda}}^{\|\cdot\|}, \qquad \cA_\Lambda = \bigotimes_{\substack{\text{vertices} \\ v\in \Lambda}} \cB(\cH_v).
\]

\begin{cor}
    Let $G$ be as in Theorem~\ref{thm:explicit_bound}, $n\geq n(\Delta(G))$, and $\omega:\cA_{G^{(n)}}\to \bC$ be any weak-$*$ limit of finite-volume ground states 
    \[\omega_m(A) = \langle \psi_m, A \psi_m\rangle \qquad \forall \, A\in \cA_{\Lambda_m^{(n)}}\]
    where $\psi_m\in \ker(H_{\Lambda_m^{(n)}})$ is normalized and $\Lambda_m\uparrow G$. Then, the spectral gap above the ground state of the corresponding GNS Hamiltonian $H_\omega$ satisfies
    \begin{equation}
        \gap(H_\omega) := \sup\{\delta | \spec(H_\omega)\cap (0,\delta) = \emptyset\} \geq \gamma(\Delta(G),n).
    \end{equation}
\end{cor}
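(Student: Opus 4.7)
The plan is to reduce the statement to a direct application of the general principle of Ref.~\onlinecite{BDN}, recalled in the introduction, which transfers a uniform finite-volume spectral gap to the GNS Hamiltonian of any thermodynamic-limit ground state. Two ingredients are needed. First, the interaction generates a well-defined $C^*$-dynamics on $\cA_{G^{(n)}}$; this is Ref.~\onlinecite{nachtergaele:2006}, applicable because the AKLT interaction is finite-range and uniformly bounded on a graph of uniformly bounded local geometry. Second, the finite-volume Hamiltonians are uniformly gapped, which is Theorem~\ref{thm:explicit_bound}. Granted these, $(\pi_\omega,\cH_\omega,\Omega_\omega)$ carries a positive self-adjoint Hamiltonian $H_\omega$ with $H_\omega\Omega_\omega=0$, and it remains to prove the announced lower bound on $\spec(H_\omega)\setminus\{0\}$.

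The core finite-volume estimate is the following. Fix a local $A\in\cA_{\Lambda_0}$ with $\Lambda_0\subseteq G^{(n)}$ finite, and let $m$ be large enough that both $\Lambda_0$ and the supports of all interaction terms meeting $\Lambda_0$ lie inside $\Lambda_m^{(n)}$; for such $m$ the observable $B:=i[H_{\Lambda_m^{(n)}},A]$ is independent of $m$ and local. Frustration-freeness gives $\langle A\psi_m, H_{\Lambda_m^{(n)}} A\psi_m\rangle = -i\,\omega_m(A^*B)$, and applying Theorem~\ref{thm:explicit_bound} to the vector $A\psi_m$ yields
\begin{equation*}
    \langle A\psi_m, H_{\Lambda_m^{(n)}} A\psi_m\rangle \geq \gamma(\Delta(G),n)\bigl(\omega_m(A^*A) - \|P_{m,0} A\psi_m\|^2\bigr),
\end{equation*}
where $P_{m,0}$ is the orthogonal projection onto $\ker H_{\Lambda_m^{(n)}}$. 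Taking $m\to\infty$ along the subsequence defining $\omega$, the left side and the term $\omega_m(A^*A)$ converge to $\langle\pi_\omega(A)\Omega_\omega, H_\omega\pi_\omega(A)\Omega_\omega\rangle$ and $\|\pi_\omega(A)\Omega_\omega\|^2$, respectively, by the standard identification of $H_\omega$ as the generator of the GNS-implemented dynamics.

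The main obstacle, and the technical content imported from Ref.~\onlinecite{BDN}, is the bound $\limsup_m\|P_{m,0}A\psi_m\|^2\leq\|Q_0\pi_\omega(A)\Omega_\omega\|^2$, where $Q_0$ is the spectral projection of $H_\omega$ onto $\{0\}$. This is obtained by replacing $P_{m,0}$ and $Q_0$ by the same functional-calculus smoothing $f(H_{\Lambda_m^{(n)}})$ and $f(H_\omega)$, for a compactly supported $f$ with $f(0)=1$ vanishing outside $(-\gamma(\Delta(G),n),\gamma(\Delta(G),n))$; the uniform gap is exactly what makes this smoothing equal to the ground-state projection at each $m$, and Lieb-Robinson bounds let the smoothing be pushed through the weak-$*$ limit. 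Combining the three estimates gives $\langle\xi, H_\omega\xi\rangle\geq\gamma(\Delta(G),n)\|(I-Q_0)\xi\|^2$ on the dense subspace of GNS vectors $\xi=\pi_\omega(A)\Omega_\omega$ with local $A$, and by density on all of $\cH_\omega$, which yields the operator inequality $H_\omega\geq\gamma(\Delta(G),n)(I-Q_0)$ and hence the claimed gap bound.
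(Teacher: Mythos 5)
Your proposal is correct and follows the same route as the paper, which simply cites Theorem~\ref{thm:explicit_bound} together with the general result of Ref.~\onlinecite{BDN} and leaves the details to that reference. Your sketch of the BDN argument (the commutator identity from frustration-freeness, the uniform gap applied to $A\psi_m$, and the smoothing/Lieb--Robinson control of the ground-state projection in the weak-$*$ limit) is an accurate rendering of exactly those ``well-known arguments'' the paper invokes, so the two proofs coincide in substance.
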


As discussed in the introduction, this follows immediately from Theorem~\ref{thm:explicit_bound} by well-known arguments. Theorem~\ref{thm:explicit_bound} is proved using a mild modification of the general framework from Ref.~\onlinecite{decorated-aklt}, which we now review.

\subsection{Reduction to a quasi one-dimensional system} \label{sec:dimension_reduction}

As in Ref.~\onlinecite[Section 2]{decorated-aklt}, rather than estimate the gap of $H_{\Lambda^{(n)}}$ directly we instead consider the gap of the coarse-grained Hamiltonian
\[
\tilde{H}_{\Lambda^{(n)}} = \sum_{v\in \Lambda}P_v, \qquad P_v = \idty - G_v 
\]
where $G_v$ is the orthogonal projection onto the ground state space $\ker(H_{Y_v})$ of $H_{Y_v}.$ Since each edge $e\in \Lambda^{(n)}$ belongs to at most two subvolumes $Y_v$, it is easy to deduce that (analogous to Ref.~\onlinecite[eq. (2.4)]{decorated-aklt})
\begin{equation}\label{eq:equivalence}
    \frac{1}{2}\left(\inf_{v\in V}{\rm gap}(H_{Y_v})\right)\tilde{H}_{\Lambda^{(n)}} \leq H_{\Lambda^{(n)}} \leq \left(\sup_{v\in V}\|H_{Y_v}\|\right) \tilde{H}_{\Lambda^{(n)}}.
\end{equation}
Moreover, ${\rm gap}(H_{Y_v})$ only depends on $\deg(v)$, and so the assumption $\Delta(G)<\infty$ implies that the above infimum is strictly positive. Thus, as the two Hamiltonians have the same ground states, proving a uniform gap for $\tilde{H}_{\Lambda^{(n)}}$ implies a uniform gap of $H_{\Lambda^{(n)}}$.

Estimating the gap of $\tilde{H}_{\Lambda^{(n)}}$ can then be reduced to bounding the quantity
\begin{equation}\label{eq:epsilon_def}
    \epsilon_G(n) = \sup_{(v,w) \in E} \norm{G_v G_w - G_v \wedge G_w}
\end{equation}
where $G_v\wedge G_w$ is the orthogonal projection onto $\ran(G_v)\cap \ran(G_w) = \ker(H_{Y_v\cup Y_w})$ by frustration-freeness. Namely, since $\{P_v, P_w\}\geq 0$ if $(v,w)\notin E$ and 
\begin{align*}
    P_vP_w + P_wP_v & \ge - \norm{P_vP_w - P_v \wedge P_w}(P_v+P_w) \\
    \norm{P_vP_w - P_v \wedge P_w} & = \norm{G_vG_w - G_v \wedge G_w}
\end{align*}
(see Ref.~\onlinecite[Lemma~6.3]{FNW}), $\tilde{H}_{\Lambda^{(n)}}$ satisfies the operator inequality
\[
\tilde{H}_{\Lambda^{(n)}}^2 = \tilde{H}_{\Lambda^{(n)}} + \sum_{w\neq v \in \Lambda} \{P_v, P_w\} \ge \tilde{H}_{\Lambda^{(n)}} - \epsilon_G(n)\sum_{\substack{\text{edges} \\ (v,w)\in \Lambda}} (P_v + P_w).
\]
The final sum above is bounded by $\Delta(G)\tilde{H}_{\Lambda^{(n)}}$ as each vertex $v$ belongs to at most $\Delta(G)$ edges. Hence, ${\rm gap}(\tilde{H}_{\Lambda^{(n)}})\geq 1-\Delta(G)\epsilon_G(n),$ and combining these bounds yields
\begin{equation}
    \operatorname{gap}(H_{\Lambda^{(n)}})  \geq \frac{1}{2} \left(\inf_{v\in V}{\rm gap}(H_{Y_v})\right) \left(1-\Delta(G)\epsilon_G(n)\right) .
\end{equation}
Thus, Theorem~\ref{thm:explicit_bound} immediately follows from showing $\epsilon_G(n) < 1/\Delta(G)$ if $n\ge n(\Delta(G))$. The remainder of this paper is focused on proving this inequality.

\subsection{Transfer operator estimates}

AKLT models are the quintessential class of models with TNS ground states. As shown in Ref.~\onlinecite[Section 3]{decorated-aklt}, the TNS machinery can be used to estimate the norm on the right hand side of \eqref{eq:epsilon_def} for any edge $(v,w)\in E$. Namely, in the situation that the TNS is injective, the norm in \eqref{eq:epsilon_def} is bounded from above by a constant that depends only on the transfer operators associated with various subgraphs of $Y_{v}\cup Y_{w}.$ Our approach is a slight modification of this framework resulting from using a variation of Ref.~\onlinecite[Lemma 3.3]{decorated-aklt} which produces a tighter upper bound on \eqref{eq:epsilon_def}. This variant (namely, Lemma~\ref{lem:IPEstimates}) and that it leads to a better bound are proved in Appendix~\ref{sec:norm-comparison}. In this section, we introduce the necessary transfer operators and state the modified bound on this norm.

\begin{figure}
	\begin{center}
		\includegraphics[scale=.85]{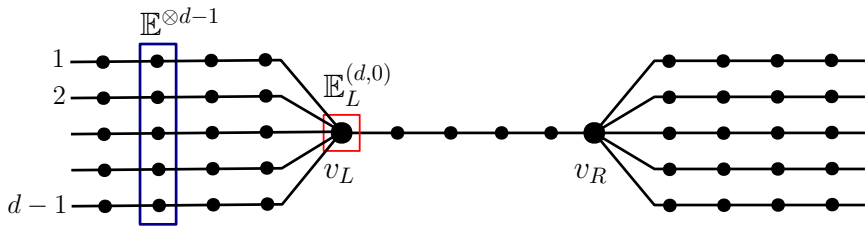}
	\end{center}
	\caption{The region $Y_{v_L}\cup Y_{v_R}$ for two vertices of degree $d=6$ and decoration $n=4$. $X_L$ is the region consisting of $v_L$ and the 5 decorated edges to its left.}
	\label{fig:tensor_vertex}
\end{figure}

\subsubsection{The transfer operators}\label{sec:transfer_ops}
Label $v_L$ and $v_R$ as the `left' and `right' vertex, respectively, associated with an edge $(v_L,v_R)\in E$, and decompose
\[
Y_{v_L} \cup Y_{v_R} = X_L \cup C_n \cup X_R
\]
where $C_n= Y_{v_L}\cap Y_{v_R}$ is the chain of $n$ sites decorating the edge $(v_L,v_R)$, and  $X_\# = Y_{v_\#} \setminus C_n$  for $\#\in \{L,R\}$, see Figure~\ref{fig:tensor_vertex}. The desired transfer operators are those associated with these three regions.

The transfer operator for the center region $C_n$ is the $n$-fold composition $\bE^{n}$ where $\bE:M_2(\bC)\to M_2(\bC)$ is the (well known) single vertex spin-1 AKLT transfer operator
\begin{equation}
    \label{eq:Spin1_transfer_op}
    \bE= \ketbra{\idty}{\rho} -\frac{1}{3} \sum_{U=X,Y,Z} \ketbra*{\sigma^U}{S^U}.
\end{equation} 
Here, $\rho=\idty/2$, $S^U=\sigma^U/2$, and the bra-ket notation is with respect to the Hilbert-Schmidt inner product. The convergence of $\bE^n$ to its fixed point is known. Namely,
\begin{equation}
    \label{eq:1d_convergence}
    a(n) = \norm{ \bE^n -\ketbra{\idty}{\rho}} = \frac{1}{3^n}.
\end{equation}

Now, let $d_{\#}=\deg(v_\#)$ for $\# \in \{L,R\}$, and denote by
\[
\bE_L^{(d_L,n)}: M_{2}(\bC) \to M_{2}(\bC)^{\otimes d_L-1}, \quad \bE_R^{(d_R,n)}:M_{2}(\bC)^{\otimes d_R-1} \to M_{2}(\bC),
\] 
the transfer operators associated to $X_L$ and $X_R$, respectively. These are the composition of the undecorated transfer operator and $n$ copies of $\bE^{\otimes d_{\#}-1}$:
\begin{equation}\label{eq:0_to_n_relation}
    \bE^{(d_L,n)}_L = (\bE^{\otimes d_L-1})^n \circ \bE^{(d_L,0)}_L, \quad  \bE^{(d_R,n)}_R =  \bE^{(d_R,0)}_R\circ (\bE^{\otimes d_R-1})^n.
\end{equation}
See Figure~\ref{fig:tensor_vertex} for a visualization. As $\cH_{v_L} = \bC^{d+1}$ when $d_L=d$, by definition
\begin{equation}\label{eq:undecorated_transfer_op}
    \bE^{(d,0)}_L(B) = \sum_{k=0}^d (W_k^d)^* B W_k^d,
\end{equation}
where $\{W_k^d : 0 \leq k \leq d\}$ are the tensors associated with $v_L$ in the TNS representation of the ground states. We construct these using the valence bond state formalism for the AKLT ground states where we use the convention that the edge(s) to the right-side of a vertex are projected into the singlet state \cite{fannes:1989}. 

Identify $\cH_v$ with the symmetric subspace of $d$ spin-1/2 particles, and let $\phi_k^d\in(\bC^2)^{\otimes d}$ be the normalized symmetric vector with $k$ up spins. These satisfy the recursive formula
\begin{equation}\label{eq:symmetric-vector-recursion}
\phi_k^d = \left(\frac{d-k}{d}\right)^{1/2}\ket{\downarrow}\otimes\ket{\phi_k^{d-1}}+\left(\frac{k}{d}\right)^{1/2}\ket{\uparrow}\otimes\ket{\phi_{k-1}^{d-1}}, \quad 0 \leq k \leq d.
\end{equation}
Then, denoting by $K=\kettbra{\uparrow}{\downarrow}-\kettbra{\downarrow}{\uparrow}$ the singlet tensor, one can take $W_k^d = K V_k^d$ where
\begin{equation}
	V_k^d =
	\left(\frac{d-k}{d}\right)^{1/2}\kettbra{\downarrow}{\phi_{k}^{d-1}}+\left(\frac{k}{d}\right)^{1/2}\kettbra{\uparrow}{\phi_{k-1}^{d-1}}.
\end{equation}
We note that when $d=2$, this produces a scaled version of the 1-dimensional AKLT transfer operator from \eqref{eq:Spin1_transfer_op}, namely, $\E_L^{(2,0)}= \frac{3}{2}\bE$. Such a scaling is inconsequential, but convenient for producing consistent formulas.

The same procedure works for defining $\bE_R^{(d,0)}$ via the tensors $\tilde{W}_k^d$ associated with $v_R$. This yields
$\tilde{W}_k^d := K^{\otimes d-1}(V_k^d)^*=(-1)^{k-1}(W_{d-k}^d)^*,$
which combined with \eqref{eq:Spin1_transfer_op}-\eqref{eq:undecorated_transfer_op} implies
\begin{equation}\label{eq:ER}
    \bE_{R}^{(d,n)} = (\bE_{L}^{(d,n)})^* \qquad \forall \, n\geq 0,
\end{equation}
where the adjoint is with respect to the Hilbert-Schmidt inner product.

\subsubsection{Bounding $\epsilon_G(n)$ via transfer operator quantities}

The desired estimate on $\epsilon_G(n)$ for $G=(V,E)$ is a consequence of bounding
\[
\epsilon_{(v_L,v_R)}(n):= \|G_{v_L}G_{v_R}-G_{v_L}\wedge G_{v_R}\| \qquad \forall\, (v_L,v_R)\in E
\]
by a constant that only depends on three types of quantities. Namely, the convergence rate $a(n)$ from \eqref{eq:1d_convergence}, the minimal eigenvalues $q_L^{(d,n)}$ and $q_R^{(d,n)}$, respectively, of
\begin{equation}\label{eq:QL-definition}
    Q_L^{(d,n)} := \bE^{(d,n)}_L(\idty), \qquad  Q_R^{(d,n)} :=  (\bE^{(d,n)}_R)^*(\rho),
\end{equation}
and the Schatten-$\infty$ norm of $\bE_L^{(d,n)}$ and $\bE_R^{(d,n)}$. Here, we recall that with respect to the Hilbert-Schmidt norm $\norm{\cdot}_2$,
\begin{equation}\label{eq:Shatten_norm}
  \|\bE_L^{(d,n)}\|_\infty  = \sup_{0\neq A\in M_{2}(\bC)}\frac{\|\bE_L^{(d,n)}(A)\|_2}{\|A\|_2} = \max \left\{ \sigma \mid \sigma \text{ singular value of } \bE^{(d,n)}_L \right\},
\end{equation}
and $\|\bE_R^{(d,n)}\|_\infty =\|\bE_L^{(d,n)}\|_\infty$ by \eqref{eq:ER}. Given these values, the following is the mild variation of Ref.~\onlinecite[Proposition 3.6]{decorated-aklt} that results from replacing Ref.~\onlinecite[Lemma~3.3]{decorated-aklt} with Lemma~\ref{lem:IPEstimates} (see Appendix~\ref{sec:norm-comparison}) everywhere in the former work.

\begin{prop}\label{prop:espilon-estimate}
Let $(v_L,v_R)$ be an edge of a simple graph $G$ between vertices with degrees $d_L$ and $d_R$, respectively, and define
\begin{equation}\label{eq:b_defs}
b_L(d,n) = 4 a(n) \frac{\|\bE_L^{(d,n)}\|_\infty}{q_L^{(d,n)}}, \quad b_R(d,n) = 2a(n) \frac{\|\bE_R^{(d,n)}\|_\infty}{q_R^{(d,n)}}.
\end{equation}
If $\max\left\{b_L(d_L,n),\, b_R(d_R,n),\, \tfrac{b_L(d_L,n)b_R(d_R,n)}{4a(n)}\right\} <1$, then $\epsilon_{(v_L,v_R)}(n) \leq \delta_{d_L,d_R}(n)$ where
\begin{equation}\label{eq:simplified_epsilon}
\delta_{d_L,d_R}(n) = 4 a(n) \left(\frac{1}{\sqrt{(1-b_L(d_L,n))(1-b_R(d_R,n))}} + \frac{4 a(n) + b_L(d_L,n)b_R(d_R,n)}{(1-b_L(d_L, n))(1-b_R(d_R,n))}\right).
\end{equation}
\end{prop}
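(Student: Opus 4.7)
My plan is to follow the TNS framework of Ref.~\onlinecite[Section~3]{decorated-aklt}, substituting Lemma~\ref{lem:IPEstimates} for Ref.~\onlinecite[Lemma~3.3]{decorated-aklt} at the one step where inner products of boundary-parameterized states are estimated. First I would recall the valence-bond parameterization of $\ker(H_{Y_{v_L}})$: its elements are obtained by contracting the tensors $W^{d_L}_k$ at $v_L$ and the AKLT tensors at the decorated sites with singlets on each internal edge, leaving the $d_L-1$ ``outer'' virtual indices of $Y_{v_L}$ together with the rightmost virtual index of the central chain $C_n$ uncontracted. Thus $G_{v_L}$ is the orthogonal projection onto the span of the resulting boundary-parameterized states on $Y_{v_L}\cup Y_{v_R}$; an analogous parameterization holds for $G_{v_R}$ and for $G_{v_L}\wedge G_{v_R}$, where in the latter the two free indices on $C_n$ are jointly contracted.

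The key identity to exploit is the overlap formula: the inner product of any two such parameterized states can be written as a scalar contraction of their boundary data against the transfer operators $\bE^{(d_L,n)}_L$, $\bE^n$, and $\bE^{(d_R,n)}_R$ from Section~\ref{sec:transfer_ops}. Using the spectral decomposition $\bE^n = \ketbra{\idty}{\rho} + (\bE^n-\ketbra{\idty}{\rho})$, one sees that $G_{v_L}\wedge G_{v_R}$ corresponds to retaining only the rank-one fixed-point term on $C_n$, while $G_{v_L}G_{v_R}$ retains the full power $\bE^n$; hence the defect $G_{v_L}G_{v_R}-G_{v_L}\wedge G_{v_R}$ is driven by the residual $\bE^n - \ketbra{\idty}{\rho}$, whose norm is $a(n)=3^{-n}$ by \eqref{eq:1d_convergence}.

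To convert this into an operator norm bound, I would then apply Lemma~\ref{lem:IPEstimates} to express $\|G_{v_L}G_{v_R}-G_{v_L}\wedge G_{v_R}\|$ as a supremum of weighted vector inner products, with the weighting given by the positive operators $Q_L^{(d_L,n)}$ and $Q_R^{(d_R,n)}$ from \eqref{eq:QL-definition}. Their minimal eigenvalues $q_L^{(d,n)},q_R^{(d,n)}$ control the pseudo-inverses required to invert the non-isometric boundary parameterization, while $\|\bE^{(d,n)}_L\|_\infty$ and $\|\bE^{(d,n)}_R\|_\infty$ control amplification through the outer transfer operators. Combining these factors with $a(n)$ reproduces the quantities $b_L(d,n),b_R(d,n)$ of \eqref{eq:b_defs}. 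A Neumann-type resummation over alternating insertions of $\bE^n$ and $\ketbra{\idty}{\rho}$, convergent precisely under the hypothesis $\max\{b_L,b_R,b_Lb_R/(4a(n))\}<1$, then produces the bound \eqref{eq:simplified_epsilon}.

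The main obstacle is the bookkeeping around the pseudo-inverse: because the TNS parameterization is neither injective nor isometric, each error of size $a(n)$ introduced by $\bE^n-\ketbra{\idty}{\rho}$ must be carefully propagated through the boundary transfer operators using the positivity of $Q_L^{(d_L,n)}$ and $Q_R^{(d_R,n)}$, keeping track of which factors of $\|\bE^{(d,n)}_L\|_\infty,\,\|\bE^{(d,n)}_R\|_\infty$ and $1/q_L^{(d,n)},\,1/q_R^{(d,n)}$ accumulate at each step. Once this bookkeeping is in place, the entire improvement over Ref.~\onlinecite[Proposition~3.6]{decorated-aklt} reduces to feeding the sharper constants of Lemma~\ref{lem:IPEstimates} into an otherwise identical derivation, yielding the tighter coefficients of \eqref{eq:b_defs} and \eqref{eq:simplified_epsilon}.
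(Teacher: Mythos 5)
Your proposal tracks the paper's own route: the proposition is obtained by running the proof of Ref.~\onlinecite[Proposition~3.6]{decorated-aklt} verbatim after substituting Lemma~\ref{lem:IPEstimates} (and the accompanying norm comparisons~\eqref{norm_equivs}) for Ref.~\onlinecite[Lemma~3.3]{decorated-aklt}, which is precisely what Appendix~\ref{sec:norm-comparison} states. The closing observation that the improved constants in \eqref{eq:b_defs}--\eqref{eq:simplified_epsilon} just propagate from the sharper prefactors $C_\Lambda$ of Lemma~\ref{lem:IPEstimates} is exactly the content of the paper's remark.

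One correction to the intermediate narrative: the hypothesis $\max\{b_L,b_R,b_Lb_R/(4a(n))\}<1$ is not a radius-of-convergence condition for a Neumann-type resummation over alternating insertions of $\bE^n$ and $\ketbra{\idty}{\rho}$. As the paper notes immediately after the proposition, it is precisely the condition that guarantees injectivity of the TNS parameterizations $\Gamma_\Lambda$ on $Y_{v_L}$, $Y_{v_R}$, and $Y_{v_L}\cup Y_{v_R}$ (cf.\ Ref.~\onlinecite[Corollary~3.4]{decorated-aklt}); the factors $(1-b_\#)^{-1}$ in $\delta_{d_L,d_R}$ arise from controlling the non-isometry of those maps via the surrogate inner products \eqref{eq:IP_vw}--\eqref{eq:IP_w} and the eigenvalue bounds \eqref{norm_equivs}, not from summing a geometric series along the chain. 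Likewise, the role of $\bE^n-\ketbra{\idty}{\rho}$ is as the error term in Lemma~\ref{lem:IPEstimates} comparing the physical overlap $\langle\Gamma_\Lambda(B),\Gamma_\Lambda(C)\rangle$ to $\langle B,C\rangle_\Lambda$, rather than a direct decomposition of $G_{v_L}G_{v_R}-G_{v_L}\wedge G_{v_R}$; your heuristic that the wedge ``keeps the fixed point'' and the product ``keeps $\bE^n$'' captures the spirit but is not literally how the estimate is organized. Keeping these roles straight matters if you actually try to push the derivation through rather than cite it.
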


We note that the maximum constraint in the proposition is sufficient to prove that the TNS representation of the ground states associated to $Y_{v_L}, \, Y_{v_R}$ and $Y_{v_L}\cup Y_{v_R}$ are all injective, see, e.g., Ref.~\onlinecite[Corollary~3.4]{decorated-aklt}, which is necessary for their approach. 

Since $\rho = \idty/2$, it follows from \eqref{eq:ER} that 
\begin{equation}\label{eq:b_reduction}
b_L(d,n)=b_R(d,n):=b(d,n).
\end{equation} 
Hence, the strategy for proving Theorem~\ref{thm:explicit_bound} is to bound $b(d,n)$ so that by \eqref{eq:epsilon_def} and Proposition~\ref{prop:espilon-estimate},
\begin{equation}\label{eq:strategy}
    \epsilon_G(n) \leq \sup_{(v_L,v_R)\in E} \delta_{d_L,d_R}(n) < \frac{1}{\Delta(G)} \qquad \forall \, n \geq n(\Delta(G)).
\end{equation}
In Lemma~\ref{lem:SVD_Decomp}, we determine the SVD of $\bE_L^{(d,n)}$, from which both $Q_L^{(d,n)}$ and $\|\bE_L^{(d,n)}\|_\infty$ are easily calculated. Bounds on the quantities defining $b(d,n)$ that will imply \eqref{eq:strategy} are then proved in Lemma~\ref{lem:quantity_bounds}. The final details establishing \eqref{eq:strategy} are the content of the proof of Theorem~\ref{thm:explicit_bound} in Section~\ref{sec:TheActualProof}.

While this is discussed more thoroughly in Appendix~\ref{sec:norm-comparison}, we point out that the main difference between the two approaches comes from using the Schatten-$\infty$ norm in \eqref{eq:b_defs} rather than the norm induced by the operator norm, i.e.
\begin{equation}\label{eq:op_norm}
  \|\bE_{\#}^{(d,n)}\|_{\rm op} = \sup_{A\neq 0}\frac{\|\bE_{\#}^{(d,n)}(A)\|}{\|A\|}.  
\end{equation}
Two additional comments are in order. First, $a(n)$ is invariant under this change of norms as an elementary, but tedious, calculation shows
\[\|\bE^n-\ketbra{\idty}{\rho}\|_\infty = \|\bE^n-\ketbra{\idty}{\rho}\|_{\rm op}.\]
Second, the complete positivity of the transfer operator guarantees that \[\|\bE_\#^{(d,n)}\|_{\rm op} = \|\bE_\#^{(d,n)}(\idty)\|,\qquad \#\in\{L,R\}.\] Contrary to the statement just before Ref.~\onlinecite[Equation 4.18]{decorated-aklt}, one can show that $\|\bE_L^{(d,n)}(\idty)\| \neq \|\bE_R^{(d,n)}(\idty)\|,$ see, e.g., Appendix~\ref{sec:norm-comparison}. Correcting this small error in the proof of Ref.~\onlinecite[Theorem~2.2]{decorated-aklt} yields the existence of a gap for $n\geq 4$ instead of the claimed $n\geq 3$. However, as Proposition~\ref{prop:bound_comparison} illustrates, Proposition~\ref{prop:espilon-estimate} gives a tighter bound on the minimal decoration needed to guarantee a spectral gap than the method from Ref.~\onlinecite{decorated-aklt}. Using the present method recovers the gap result for $n=3.$

\section{Matchings operators}\label{sec:matchings}

The goal of this section is to introduce a special collection of eigenvectors of 
\[\bE^{\otimes d-1}:M_2(\bC^2)^{\otimes d-1} \to M_2(\bC^{2})^{\otimes d-1},\] 
which we refer to as \emph{matching operators}. These will aid us in calculating the SVD of the decorated transfer operator $\bE_L^{(d, n)}$ from the SVD of the undecorated transfer operator $\bE_L^{(d, 0)}$.

In addition to introducing the matching operators, we show that they form an orthogonal basis (with respect to the Hilbert-Schmidt inner product) for the commutative algebra generated by the spectral projections of the Casimir operator. This follows from proving that the matching operators satisfy certain recursion relations. These relations will also provide a path for explicitly writing individual spectral projections of the Casimir operator in terms of matching operators, which is the key for determining the desired SVD.

\subsection{Basic notions of matchings and matching operators}

To begin, fix $m\geq 2$ and recall that the spectral decomposition of the Casimir operator $C^{(m)} = (\sum_{i=1}^m\mathbf{S}_i)^2$ acting on $(\bC^2)^{\otimes m}$ is
\begin{equation}
    \label{eq:Casimir}
    C^{(m)} = \sum_{j\in J_m} j(j+1)Q^{(m,j)}
\end{equation}
where $Q^{(m,j)}\in M_2(\bC^2)^{\otimes m}$ is the orthogonal projection onto the direct sum of all subspaces of total spin $j$, and 
\[
J_m = \left\{j_0 + k : 0 \leq k \leq \floor{\tfrac{m}{2}}\right\},
\]
with $j_0 = 0$ if $m$ is even and $j_0 = 1/2$ otherwise. Moreover, recall that $Q^{(m,j)}= P_{\rm sym}^{(m)}$ is the projection onto the symmetric subspace  when $j= \floor{\tfrac{m}{2}}+j_0$.

The commutative algebra generated by the Casimir operator is then
\begin{equation}
\cZ^{(m)} = \linspan\{ Q^{(m,j)} \mid j \in J_m\},
\end{equation}
which has dimension $\abs{J_m} = \floor{\frac{m}{2}} +1$. The goal is to show that the set of matching operators forms an orthogonal basis for $\mathcal{Z}^{(m)}$ which are eigenvalues of $\bE^{\otimes m}$. To introduce these operators, we first define the notion of a \emph{matching}.

\subsubsection{Definition and properties of matchings}
\begin{definition} For any $1\leq r \leq \floor{\frac{m}{2}}$, an \emph{$r$-matching} of $[m] := \{1,\dots, m\}$ is a collection of $r$ unordered\footnote{Meaning that $(a,b)$ is the same pair as $(b,a)$} pairs \[p = \{(a_1,b_1),\dots,(a_r,b_r)\}, \quad a_i, b_i\in [m] \] 
which are distinct in the sense that $\cup p:=\{a_i, b_i : 1\leq i \leq r\}$  has $2r$ elements. The set of all $r$-matchings is denoted by $\mathcal M^{m}_{r}$. For consistency,  $\mathcal{M}^{m}_0$ is the set consisting of the \emph{empty matching}, $p=\{\}$.
\end{definition}

\begin{example}
$\cM^4_{1}= \big\{\{(i,j)\}: 1\leq i<j\leq 4\big\}$ has six elements, and $\cM^4_2$ consists of
\[\{ (1,2), (3,4) \}, \quad \{ (1,3), (2,4) \} \quad \text{and} \quad \{ (1,4), (2,3) \}.
\]
\end{example}

The number of $r$-matchings is easily calculated with the multinomial coefficient:
\begin{equation}\label{eq:matching_count}
    \abs{\mathcal M^{m}_r} = \frac{1}{r!} \binom{m}{2, \ldots, 2, m-2r} = \frac{m!}{2^rr! (m-2r)!} = \binom{m}{2r} (2r-1)!!
\end{equation}
Here, $n!!$ denotes the double factorial, i.e., the product of all integers from 1 to $n$ having the same parity as $n$. By convention $0!! = (-1)!! = 1$, and so \eqref{eq:matching_count} also holds when $r=0$.

For the main recursion result, it will also be important to know how many matchings $p\in\cM_r^m$ contain either both, one, or none of the elements from a single pair $(i,j)$. As such, we introduce the following (possibly empty) sets, which form a partition of $\cM_r^m.$

\begin{definition} \label{def:matching_partition}
Fix $m\geq 2$. For any $0 \le r \le \floor{\half[m]}$ and distinct pair $i,j \in [m]$, set
  \begin{align*}
    A_r(i,j) &= \left\{ p \in \mathcal M_r^{m} \mid (i,j) \in p\right\},\\
    B_r(i, j) &= \left\{ p \in \mathcal M_r^{m} \mid i \in \cup p, j \not \in \cup p \;\,\text{or}\;\, j \in \cup p, i \not \in \cup p  \right\},\\
    C_r(i, j) &= \left\{ p \in \mathcal M_r^{m} \mid i, j \in \cup p, (i,j)\not \in p \right\},\\
    D_r(i, j) &= \left\{ p \in \mathcal M_r^{m} \mid i, j\not \in \cup p \right\}.
  \end{align*}
\end{definition}
Clearly, the cardinality of these sets is independent of the choice of $(i,j)$, and only depends on $m$ and $r$.
To shorten notation, we will denote by $\abs{A_r}$ the cardinality of the sets $A_r(i,j)$, and similarly for the others.

\begin{lemma}\label{lemma:cardinality} Fix $m>2$. The sets from Definition~\ref{def:matching_partition} have the following cardinalities for any $r$ with the convention that $\abs{\mathcal{M}_{r}^m}=0$ if $m< 0$ or $r<0$:
  \begin{align}
    \abs{A_r} &= \abs{D_{r-1}} = \abs{\mathcal{M}^{m-2}_{r-1}}\\
    \abs{B_r} &= 2(m-2)\abs{\mathcal{M}^{m-3}_{r-1}}\\
    \abs{C_r} &= (m-2)(m-3) \abs{\mathcal{M}^{m-4}_{r-2}}.
  \end{align}
  
\end{lemma}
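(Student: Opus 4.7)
The proof is an elementary counting/bijection argument, separating cases by how the distinguished pair $(i,j)$ interacts with the matching. My plan is to handle each of the four sets individually, and in each case to exhibit a map that realizes the stated count.

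For $A_r(i,j)$, the obvious bijection is "delete the pair $(i,j)$": every $p \in A_r(i,j)$ decomposes uniquely as $\{(i,j)\} \sqcup p'$ where $p'$ is an $(r-1)$-matching of $[m]\setminus\{i,j\}$. This gives a bijection onto $\mathcal{M}^{m-2}_{r-1}$, so $|A_r| = |\mathcal{M}^{m-2}_{r-1}|$. For $D_{r-1}(i,j)$, a matching with $r-1$ pairs using neither $i$ nor $j$ is literally an $(r-1)$-matching of $[m]\setminus\{i,j\}$, so $|D_{r-1}| = |\mathcal{M}^{m-2}_{r-1}|$ as well. This establishes the first identity.

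For $B_r(i,j)$, partition the set into $B_r^{(i)}$ and $B_r^{(j)}$ according to whether $i \in \cup p$ (and then $j\notin \cup p$) or vice versa; the two subsets are disjoint and interchanged by the involution swapping $i$ and $j$, so they have equal cardinality. For $p \in B_r^{(i)}$, $i$ is paired with some partner $k \in [m]\setminus\{i,j\}$, of which there are $m-2$ choices, and after removing the pair $(i,k)$ the remainder is an $(r-1)$-matching of $[m]\setminus\{i,j,k\}$, a set of size $m-3$. Summing gives $|B_r^{(i)}| = (m-2)|\mathcal{M}^{m-3}_{r-1}|$, and doubling yields the claimed formula. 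For $C_r(i,j)$, where $i$ and $j$ are both matched but not to each other, I pick the partner $k$ of $i$ (with $m-2$ choices in $[m]\setminus\{i,j\}$) and the partner $\ell$ of $j$ (with $m-3$ choices in $[m]\setminus\{i,j,k\}$); removing the pairs $(i,k)$ and $(j,\ell)$ leaves an $(r-2)$-matching of a set of size $m-4$. This gives $|C_r| = (m-2)(m-3)|\mathcal{M}^{m-4}_{r-2}|$.

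There is no real obstacle here: the only point that requires a moment of care is handling the degenerate ranges of $r$, i.e.\ verifying that the formulas remain correct when $r \le 1$ or $r \ge \lfloor m/2 \rfloor - 1$. Under the stated convention $|\mathcal{M}^{m'}_{r'}|=0$ whenever $m' < 0$ or $r' < 0$, and the base case $|\mathcal{M}^{m'}_0|=1$ for $m' \ge 0$, each of the bijections above is well-defined or vacuous in the expected way (for instance, $C_r$ is empty whenever $r < 2$ or $m < 4$, matching the right-hand side). So the lemma follows by combining the four bijective counts.
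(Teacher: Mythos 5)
Your proof is correct and matches the paper's argument essentially step for step: delete-the-pair bijection for $A_r$ and $D_{r-1}$, partner-of-$i$-or-$j$ decomposition for $B_r$, and partner-of-both decomposition for $C_r$, with the same counting of choices. The only cosmetic difference is that you spell out the $i\leftrightarrow j$ symmetry for $B_r$ and the degenerate $r$ ranges explicitly, which the paper leaves implicit.
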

\begin{proof} For each set $X_r(i,j)$ from Definition~\ref{def:matching_partition}, the result follows from constructing an $n$-to-1 mapping between $X_r(i,j)$ and the appropriate set of matchings.

\begin{enumerate}
    \item A bijection between $A_r(i,j)$ and $D_{r-1}(i,j)$ is given by
  \[ p \in A_r(i,j) \mapsto p\setminus\{(i,j)\} \in D_{r-1}(i,j),\]
while a bijection between $D_{r-1}(i,j)$ and $\mathcal{M}^{m-2}_{r-1}$ results from recognizing $D_{r-1}(i,j)$ as the set of all $r-1$ pairings on the set $[m]\setminus\{i,j\}$.
    \item For $p\in B_{r}(i,j)$ define the mapping $p \mapsto p'$ by decomposing
\[p = \{(i,x)\}\cup p' \quad \text{or}\quad p = \{(j,x)\}\cup p'\] 
for some $x\in [m]\setminus \{i,j\}.$ Moreover, after fixing $x$, the image $p'$ can be any $r-1$ matching on the set $[m]\setminus \{i,j,x\}$. As there are $m-2$ choices for $x$ and two possible pairings, i.e. $(i,x)$ and $(j,x)$, the result follows.
    \item For $p\in C_r(i,j)$, the mapping $p\mapsto p'$ is defined by writing
\[
p = \{(x,i),\,(y,j)\}\cup p'
\]
where $x,y\in[m]\setminus\{i,j\}$. As before, $p'$ is an $r-2$-matching on $[m]\setminus \{x,y,i,j\}$. Since there are $(m-2)(m-3)$ distinct choices for $x,$ and $y$ the result follows.
\end{enumerate}

\end{proof}

\subsubsection{Definition and properties of matching operators}

We now introduce the operators of interest: the \emph{matching operators.}

\begin{definition} Fix $m\geq 2$. Then, the \emph{$r$-matching operator}  $M^{(m)}_r\in \cB ( (\mathds{C}^2)^{\otimes m})$  is
\begin{equation}  \label{eq:matching_ops}
  M^{(m)}_r := \sum_{p \in \mathcal{M}^m_r} S_p, \qquad S_p: = \prod_{(i,j)\in p} \heis{i}{j}
\end{equation}
for all $1 \leq r \leq \floor{\tfrac{m}{2}}$ and $M^{(m)}_0 = S_{\emptyset} = \idty$. 
\end{definition}

Note that since the pairs are disjoint, the order of the product in $S_p$ does not matter. Moreover, each matching operator is clearly nonzero since
\begin{equation}\label{eq:M-nonzero}
    S_p\ket{\uparrow\ldots\uparrow} = \frac{1}{4^r}\ket{\uparrow\ldots\uparrow}, \qquad \forall\, p\in \cM_r^m
\end{equation}
Several other important observations are in order.
\begin{enumerate}
\item $\bE^{\otimes m}(M_r^{(m)})=3^{-2r}M_r^{(m)}$ since $\bE(\idty) = \idty$ and $\bE\otimes\bE (\heis{\phantom{i}}{\phantom{i}}) = \frac{1}{9}(\heis{\phantom{i}}{\phantom{i}}).$

\item $M^{(m)}_1$ is related to the Casimir operator $C^{(m)}$ via
    \begin{equation}\label{eq:Casimir_oneMatching}
      C^{(m)} = \frac{3m}{4} \idty + 2 M^{(m)}_1.
    \end{equation}  
\item Every matching operator $M^{(m)}_r$ is Hermitian as well as $SU(2)$ and permutation symmetric (since $\cM_r^m$ is permutation invariant). Therefore, $M_r^{(m)}\in \mathcal{Z}^{(m)}$ by Schur-Weyl duality. However, we provide an alternative way of verifying this inclusion. Lemma~\ref{lemma:product-m1-mr} establishes that
\[M_1^{(m)}\cdot M_r^{(m)}= c_r M_{r-1}^{(m)} + a_rM_{r}^{(m)} + b_rM_{r+1}^{(m)} \quad \forall r\geq 0,\]
for appropriate coefficients $a_r,b_r,c_r$. As $M_0^{(m)} = \idty$, repeatedly applying this relation shows that every matching operator can be written as a polynomial of $M_1^{(m)}$, and hence $C^{(m)}$ by \eqref{eq:Casimir_oneMatching}.
\end{enumerate}

\begin{prop}\label{prop:matchings-are-basis}
The set of matching operators $\{ M^{(m)}_r \}_{r=0}^{\lfloor \frac{m}{2} \rfloor }$ forms a Hilbert-Schmidt orthogonal basis of  $\mathcal{Z}^{(m)}$.
\end{prop}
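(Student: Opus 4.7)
The plan is to leverage the dimension count: since $\dim \cZ^{(m)} = |J_m| = \lfloor m/2 \rfloor + 1$ equals the number of matching operators, it suffices to verify that each $M_r^{(m)}$ lies in $\cZ^{(m)}$, is nonzero, and that the collection is pairwise orthogonal with respect to the Hilbert-Schmidt inner product.

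Membership in $\cZ^{(m)}$ has effectively been established in the discussion preceding the proposition. Each $M_r^{(m)}$ is Hermitian, $SU(2)$-invariant, and permutation symmetric, so it lies in $\cZ^{(m)}$ by Schur-Weyl duality. Alternatively, the three-term relation from Lemma~\ref{lemma:product-m1-mr} can be used inductively to express each $M_r^{(m)}$ as a polynomial in $M_1^{(m)}$, which is in turn a polynomial in the Casimir $C^{(m)}$ via \eqref{eq:Casimir_oneMatching}. Nonvanishing of the matching operators is already recorded in \eqref{eq:M-nonzero}.

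For orthogonality, the most efficient approach I would take is via the eigenvalue identity $\bE^{\otimes m}(M_r^{(m)}) = 3^{-2r}\, M_r^{(m)}$ noted immediately after the definition of $M_r^{(m)}$. Since the eigenvalues $\{3^{-2r} : 0 \leq r \leq \lfloor m/2 \rfloor\}$ are all distinct, it suffices to verify that $\bE^{\otimes m}$ is self-adjoint with respect to the Hilbert-Schmidt inner product, whereupon eigenvectors of distinct eigenvalues are automatically orthogonal. Self-adjointness of $\bE$ itself is the key step: using $\ket{\idty} = 2\ket{\rho}$ and $\ket{\sigma^U} = 2\ket{S^U}$, the definition \eqref{eq:Spin1_transfer_op} rewrites as
\[
\bE = 2\,\ket{\rho}\bra{\rho} - \frac{2}{3} \sum_{U = X,Y,Z} \ket{S^U}\bra{S^U},
\]
which is manifestly Hermitian. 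Tensor powers of a self-adjoint operator are self-adjoint, so $\bE^{\otimes m}$ is as well, and the orthogonality follows.

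I do not anticipate any substantive obstacle. The only place calling for modest care is the self-adjointness of $\bE$: the definition \eqref{eq:Spin1_transfer_op} appears asymmetric between bras and kets due to the different normalization conventions $\rho = \idty/2$ and $S^U = \sigma^U/2$, but the rewriting above exposes the symmetry immediately. A more laborious alternative would be to compute $\langle M_r^{(m)}, M_s^{(m)}\rangle_{\rm HS}$ directly as a double sum over matchings $p \in \cM_r^m$ and $q \in \cM_s^m$ of traces $\mathrm{tr}(S_p S_q)$, but routing through the spectral argument sidesteps this combinatorial bookkeeping entirely.
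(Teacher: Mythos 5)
Your proof is correct, but it takes a genuinely different route from the paper's. The paper proceeds by a direct trace computation: it observes that $\tr(S_p S_q) = 0$ unless $\cup p = \cup q$ (because any site contained in $\cup p$ but not in $\cup q$ contributes a traceless spin operator to $S_p S_q$), and since $|{\cup p}| = 2r_1 \neq 2r_2 = |{\cup q}|$ when $r_1 \neq r_2$, the cross terms in $\tr(M_{r_1}^{(m)} M_{r_2}^{(m)})$ all vanish. You instead exploit the fact that the $M_r^{(m)}$ are eigenvectors of $\bE^{\otimes m}$ with distinct eigenvalues $3^{-2r}$ (a fact the paper records as observation~1 after the definition, but does not use to prove this proposition), and then verify that $\bE$ is Hilbert--Schmidt self-adjoint — which is indeed the only nonobvious step, and your rewriting $\bE = 2\ketbra{\rho}{\rho} - \tfrac{2}{3}\sum_U \ketbra{S^U}{S^U}$ settles it cleanly, since $\ketbra{\idty}{\rho} = 2\ketbra{\rho}{\rho}$ and $\ketbra{\sigma^U}{S^U} = 2\ketbra{S^U}{S^U}$. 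Both arguments are short and correct; the paper's is marginally more elementary and self-contained (no appeal to the spectral theorem), while yours ties orthogonality structurally to the transfer-operator eigendecomposition, which is conceptually satisfying given the role that eigendecomposition plays in the rest of the paper. Note only that the self-adjointness of $\bE$ is a special feature of the spin-1 AKLT transfer operator (it hinges on $\idty$ and $\sigma^U$ being proportional to $\rho$ and $S^U$), so your argument is somewhat less robust to generalization than the combinatorial one, though that is not an issue here.
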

\begin{proof} Since each $M^{(m)}_r\in \cZ^{(m)}$ is nonzero by \eqref{eq:M-nonzero}, it is sufficient to show the set of matching operators is orthogonal, as the number of matching operators is the same as the dimension of $\mathcal{Z}^{(m)}$.

Let $r_1\neq r_2$. Since each matching operator is Hermitian, it trivially follows that
 \begin{equation}\label{eq:hs-product}
  \tr[ (M_{r_1}^{(m)})^* (M_{r_2}^{(m)}) ] = \tr[ (M_{r_1}^{(m)}) (M_{r_2}^{(m)}) ] = \sum_{ \substack{p \in \mathcal{M}_{r_1}^{m} \\ q \in \mathcal{M}_{r_2}^m}} \tr(S_p S_q).
\end{equation}
Considering \eqref{eq:matching_ops}, on sees that $\tr(S_p S_q) = 0$ unless $\cup p = \cup q$.  As $\abs{\cup p} = 2r_1 \neq 2r_2 = \abs{\cup q}$, this cannot occur. This proves orthogonality.
\end{proof}

We conclude this section with calculating the norm of $M_r^{(m)}$. To do so, we first recall some useful relationships between spin operators. Let $\epsilon_{i,j,k}$ denote the Levi-Civita symbol, and define
\begin{equation} \label{eq:anti-symmetric}
    E_{a,b,c}= \sum_{i,j,k} \epsilon_{i,j,k}\, S^i_a \otimes S^j_b \otimes S^k_c,
\end{equation} 
which is antisymmetric under transpositions $\tau$ of the indices, i.e. $E_{\tau(a),\tau(b),\tau(c)}=-E_{a,b,c}.$ Using $S^i S^j = \frac{1}{4}\delta_{i,j} \idty + \frac{i}{2} \epsilon_{i,j,k} S^k$, the following relations hold:
  \begin{equation}\label{eq:heisenberg-contractions}
    (\heis{a}{b})^2 = \frac{3}{16}\idty - \frac{1}{2} \heis{a}{b},
    \quad (\heis{a}{b})(\heis{b}{c}) = \frac{1}{4} \heis{a}{c} - \frac{i}{2}E_{a,b,c}, \quad \forall a \neq b \neq c.
  \end{equation}

\begin{lemma}\label{lemma:matchings-norm}
  For all $0\le r\le \lfloor \frac{m}{2} \rfloor$, the Hilbert-Schmidt norm of the $r$-matching operator satisfies
  \begin{equation}
    \|M^{(m)}_r\|_2^2 = \qty(\frac{ (2r-1)!!}{2^{2r}})^2  (2r+1) \binom{m}{2r} 2^m.
  \end{equation}
\end{lemma}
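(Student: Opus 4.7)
The plan is to compute $\|M_r^{(m)}\|_2^2 = \sum_{p,q\in\cM_r^m}\tr(S_pS_q)$ directly, using the expansion $\heis{i}{j}=\sum_a S^a_iS^a_j$ and single-qubit traces. Since each $S_p$ is a sum of operators having a single spin component on every site of $\cup p$ and identity elsewhere, and $\tr_{\bC^2}S^a=0$, the same vanishing argument already invoked in the proof of Proposition~\ref{prop:matchings-are-basis} shows that $\tr(S_pS_q)=0$ unless $\cup p=\cup q$. Grouping surviving terms by their common support $S\subseteq[m]$ (necessarily of size $2r$), noting that the inner double sum is independent of the choice of $S$, and factoring out the trace over the $m-2r$ identity sites reduces the computation to
\[
\|M_r^{(m)}\|_2^2=\binom{m}{2r}\,2^{m-2r}\!\!\sum_{p,q\in\cM_r^{2r}}\!\tr_{(\bC^2)^{\otimes 2r}}(S_pS_q).
\]

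For $p,q\in\cM_r^{2r}$, the expansion $S_p=\sum_{\vec a}\bigotimes_{(i,j)\in p}S^{a_{ij}}_iS^{a_{ij}}_j$ together with $\tr_{\bC^2}(S^aS^b)=\tfrac12\delta_{ab}$ gives
\[
\tr_{(\bC^2)^{\otimes 2r}}(S_pS_q)=2^{-2r}\!\!\sum_{\vec a,\vec b\in\{x,y,z\}^r}\,\prod_{i=1}^{2r}\delta_{a_{\pi_p(i)},\,b_{\pi_q(i)}},
\]
where $\pi_p(i),\pi_q(i)$ denote the pair of $p$, resp.\ $q$, containing site $i$. These delta constraints encode a bipartite multigraph on the $r$ $p$-pairs and $r$ $q$-pairs, with one edge contributed by each site; every vertex has degree exactly $2$, so the graph decomposes into some number $c(p,q)$ of cycles, and summing over the (constant-per-component) color indices yields $3^{c(p,q)}$. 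The lemma thereby reduces to proving the combinatorial identity
\[
F(r):=\sum_{p,q\in\cM_r^{2r}}3^{c(p,q)}=\bigl((2r-1)!!\bigr)^2(2r+1).
\]

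By the transitive action of $S_{2r}$ on $\cM_r^{2r}$, $F(r)=(2r-1)!!\,T_r$ where $T_r:=\sum_{q\in\cM_r^{2r}}3^{c(p_0,q)}$ for the reference matching $p_0=\{(1,2),(3,4),\dots,(2r-1,2r)\}$; so it suffices to show $T_r=(2r+1)!!$. I would establish this by induction on $r$, splitting on the $q$-pair containing site $1$. If $q\ni(1,2)$, that pair yields an isolated length-$2$ cycle of $p_0\cup q$, and the remaining matching on $\{3,\dots,2r\}$ contributes $T_{r-1}$, for a total of $3\,T_{r-1}$. If instead $q\ni(1,j),(2,k)$ with $j,k\in\{3,\dots,2r\}$ distinct, contracting these two $q$-edges into a single edge $(j,k)$ produces a matching $q'$ of $\{3,\dots,2r\}$ with $c(p_0'\cup q')=c(p_0\cup q)$ where $p_0':=p_0\setminus\{(1,2)\}$; since each $q'$ has $2(r-1)$ preimages (one per pair of $q'$ selected as the contracted edge, times two orderings), this case contributes $2(r-1)\,T_{r-1}$. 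Adding the two cases gives the recursion $T_r=(2r+1)T_{r-1}$ with $T_0=1$, hence $T_r=(2r+1)!!$, and reassembling the factors proves the lemma.

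The step I expect to require the most care is the cycle-count invariance $c(p_0\cup q)=c(p_0'\cup q')$ used in the inductive case; it follows from the alternating structure of the cycles of $p_0\cup q$, since the subpath $j\!-\!1\!-\!2\!-\!k$ consists of a $q$-edge, a $p_0$-edge, and a $q$-edge, whose replacement by the single $q$-edge $(j,k)$ merely shortens the unique cycle passing through $1$ and $2$ by two vertices and leaves all other cycles intact. Once that observation is in hand, the remaining bookkeeping is routine.
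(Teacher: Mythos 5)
Your proof is correct, and it reaches the key identity $\sum_{p,q\in\cM_r^{2r}}\tr(S_pS_q) = 2^{-2r}\bigl((2r-1)!!\bigr)^2(2r+1)$ by a genuinely different route than the paper. The first half of your argument is essentially the same: you perform the same reduction $\|M_r^{(m)}\|_2^2 = \binom{m}{2r}2^{m-2r}\|M_r^{(2r)}\|_2^2$ and the same interpretation of $\tr(S_pS_q)$ as $2^{-2r}3^{c(p,q)}$, where $c(p,q)$ counts the alternating cycles of $p\cup q$ (the paper packages this as $N(\pi)$ for the associated even-cycle permutation $\pi\in\mathfrak{S}_{2r}^e$, but the content is identical). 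Where you diverge is in evaluating $F(r)=\sum_{p,q}3^{c(p,q)}$. The paper invokes the exponential formula for exponential generating functions (Lemma~\ref{lemma:gf-even-cyles}) to compute $\sum_{k}h(2r,k)3^k$, whereas you first use transitivity of the $\mathfrak{S}_{2r}$-action to write $F(r)=(2r-1)!!\,T_r$, and then establish $T_r=(2r+1)!!$ by an elementary two-case induction ($q\ni(1,2)$ versus contracting $(1,j),(2,k)$ to $(j,k)$), with the cycle-count invariance under contraction verified correctly. Your route is more self-contained and avoids the generating-function machinery entirely, at the mild cost of having to argue the contraction step carefully; the paper's route is shorter at the point of use but pushes the real work into the appendix. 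Both are valid, and the recursion $T_r=(2r+1)T_{r-1}$ you find is a clean combinatorial statement that could replace Lemma~\ref{lemma:gf-even-cyles} altogether.
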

\begin{proof}
Since the trace of $E_{a,b,c}$ over the $b$-th index is zero, it follows from \eqref{eq:heisenberg-contractions} that
  \begin{equation}\label{eq:cycle-contraction}
    \tr[ (\heis{a_1}{a_2})(\heis{a_2}{a_3})\cdots(\heis{a_k}{a_1})] = \frac{3}{4^k} \tr(\idty_{a_1,\dots,a_k}) = \frac{3}{2^k}.
  \end{equation}
As indicated in \eqref{eq:hs-product},
  \[
  \|M_r^{(m)}\|_2^2 = \sum_{\substack{p,q\in \mathcal{M}_r^{m}: \\ \cup p = \cup q}} \tr(S_p S_q).
  \]
where we again use that $\tr(S_p S_p) = 0$ unless $\cup p = \cup q$. Since $\tr(S_p S_q)$ is invariant under permutations of the factors, without loss of generality, assume that $\cup p = \cup q = \{1,\dots, 2r\}$. Thus, the norm calculation reduces to
  \begin{equation}\label{eq:M-norm-reduction}
      \|M^{(m)}_r\|_2^2 = \binom{m}{2r} \tr(\idty_{2r+1,\dots,m}) \|M^{(2r)}_r\|_2^2 = \binom{m}{2r} 2^{m-2r} \|M^{(2r)}_r\|_2^2.
  \end{equation}

To calculate $\|M^{(2r)}_r\|_2^2$, consider first an arbitrary pair $p,q\in\mathcal{M}^{2r}_r$, which we refer to as \emph{perfect matchings} over $2r$ elements. As we now show, pairs $(p,q)$ of perfect matchings of $2r$ elements are in bijection with permutations $\pi \in \mathfrak{S}_{2r}$ of $2r$ elements which have no odd length cycles, which we denote by $\mathfrak{S}_{2r}^e$. 

Given $(p,q)$, the corresponding permutation $\pi$ is built as follows: let $a_1\in [2r]$ be arbitrary, and inductively define $a_{2i}$ to be the element connected to $a_{2i-1}$ in $p$, and $a_{2i+1}$ to be the element connected to $a_{2i}$ in $q$. Then, by properties of the matching, it must be that there is some finite $l$ such that $a_{2l+1}=a_1$. Thus, $(a_1,\dots, a_{2l})$ is a cycle of length $2l$ which is added to to $\pi$. If there are no more elements, then $\pi$ is the desired permutation. Otherwise pick an element which has not yet been considered and iterate this process to create a new cycle. After a finite number of steps the iteration terminates with the desired permutation $\pi$.

Vice-versa, given $\pi \in \mathfrak{S}_{2r}^e$, define $(p,q)$ as follows: for each cycle $(a_1,\dots,a_{2l})$ in $\pi$, add $(a_{2i-1},a_{2i})$ to $p$ and $(a_{2i},a_{2i+1})$ to $q$, for each $i=0,\dots, l$ (where addition is taken modulo $2l$). Then $p$ and $q$ are perfect matchings as they both have $r$ disjoint pairs.

Let $\pi \in \mathfrak{S}_{2r}^e$ be the permutation corresponding to a pair $p,q \in \mathcal{M}_r^{2r}$, $N(\pi)$ the number of cycles of $\pi$, and and $\ell_1,\dots, \ell_{N(\pi)}$ be the length of the cycles. Then by \eqref{eq:cycle-contraction},
\[
\tr(S_p S_q) = \prod_{j=1}^{N(\pi)} \frac{3}{2^{\ell_j}} = \frac{3^{N(\pi)}}{2^{2r}},
\]
as $\ell_1+\dots+\ell_{N(\pi)} = 2r$. Thus, we have proved that
\begin{equation}
\|M_r^{(2r)}\|_2^2 = \frac{1}{2^{2r}} \sum_{\pi \in \mathfrak{S}_{2r}^e}  3^{N(\pi)}.
\end{equation}
Denoting by $h(2r,k)$ the number of permutations in $\mathfrak{S}_{2r}^e$ which have exactly $k$ cycles~\cite{A204420}, the last equation can be rewritten in terms of the generating function of $h(2r,k)$. This is computed in Lemma~\ref{lemma:gf-even-cyles}, which shows
\[
 \|M_r^{(2r)}\|_2^2 = \frac{1}{2^{2r}} \sum_{k=1}^{r} h(2r,k)3^{k} = \frac{1}{2^{2r}}  2^{r} \qty(\frac{3}{2})^{\bar r} (2r-1)!!, 
\]
where $\alpha^{\bar r}$ is the raising factorial. This evaluates to
\[
2^r \qty(\frac{3}{2})^{\bar r} =   2^r \frac{3}{2}\qty(\frac{3}{2}+1) \cdots \qty(\frac{3}{2} + r-1) = 3(3+2)\cdots (2r +1) = (2r+1)!!,
\]
which produces
\[  \|M_r^{(2r)}\|_2^2 = \qty(\frac{ (2r-1)!!}{2^r})^2 (2r+1). \]
Inserting this into \eqref{eq:M-norm-reduction} produces the final result.

\end{proof}

\subsection{A recursion relation for the matching operators}

The projection $P_{\rm sym}^{(m)}$ onto the highest-weight spin subspace naturally arises when considering the transfer operator associated with an undecorated AKLT model. The motivation for introducing the operators $M_r^{(m)}$ is that they form a basis for $\cZ^{(m)}$ and are eigenvectors of the transfer operator $\bE^{\otimes m}$. If one determines coefficients so that
\begin{equation}\label{eq:sym_decomp}
P_{\rm sym}^{(m)} = \sum_{r} c^{(m)}(r) M_r^{(m)},
\end{equation}
then (as we show in the next section) it will be possible to write down the SVD of transfer operator of the decorated AKLT model from the SVD of the transfer operator of the undecorated model. The goal of this section is to explicitly determine \eqref{eq:sym_decomp}. This result will be a consequence of the following recursion relation, which shows how to rewrite the product $M_1^{(m)} M_r^{(m)}$ as a sum of matching operators.

\begin{lemma}\label{lemma:product-m1-mr}
Fix $m\geq 2$, and for all $0 \le r \le s:=\floor{\half[m]}$ define
\begin{equation} \label{eq:recursion_coeffs}
	a_r = \frac{r}{2}(m-2r-1), \qquad b_r = r+1, \qquad c_r = \frac{2r+1}{16}\binom{m-2r+2}{2}.
\end{equation}
Then, the matching operators satisfy 
\begin{equation}\label{eq:Sr_recursion}
    M^{(m)}_1 M^{(m)}_r = c_r M^{(m)}_{r-1} 
    + a_r \, M^{(m)}_r 
    + b_r\, M^{(m)}_{r+1},
\end{equation}
where, for consistency, one takes $M_{-1}^{(m)}=M_{s+1}^{(m)}=0$.
\end{lemma}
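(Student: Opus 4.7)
The plan is to expand the product by definition,
\[
M_1^{(m)} M_r^{(m)} = \sum_{\{i,j\}} \sum_{p \in \cM_r^m} (\heis{i}{j}) S_p,
\]
where the first sum is over unordered pairs $\{i,j\} \subset [m]$, and then to partition the inner sum using Definition~\ref{def:matching_partition}: each $p$ lies in exactly one of $A_r(i,j)$, $B_r(i,j)$, $C_r(i,j)$, $D_r(i,j)$. In case $D$, where $\{i,j\}$ is disjoint from $\cup p$, the product is simply $S_{\{(i,j)\}\cup p}$, a summand of $M_{r+1}^{(m)}$; since each $(r+1)$-matching arises in exactly $r+1$ ways (one per pair in it), case $D$ contributes $b_r M_{r+1}^{(m)}$. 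In case $A$, where $(i,j)\in p$, the identity $(\heis{i}{j})^2 = \tfrac{3}{16}\idty - \tfrac{1}{2}(\heis{i}{j})$ rewrites the product as a combination of $M_{r-1}^{(m)}$ and $M_r^{(m)}$ summands; counting multiplicities using $|A_r(i,j)| = |\cM^{m-2}_{r-1}|$ from Lemma~\ref{lemma:cardinality} gives coefficients $\tfrac{3}{16}\binom{m-2r+2}{2}$ for $M_{r-1}^{(m)}$ and $-\tfrac{r}{2}$ for $M_r^{(m)}$.

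Cases $B$ and $C$ are handled using the contraction identity $(\heis{a}{b})(\heis{b}{c}) = \tfrac{1}{4}(\heis{a}{c}) - \tfrac{i}{2} E_{a,b,c}$. In case $B$, letting $u$ be the unique element of $\{i,j\} \cap \cup p$, $a$ its partner in $p$, and $v$ the remaining element of $\{i,j\}$, a single application yields
\[
(\heis{u}{v}) S_p = \tfrac{1}{4}(\heis{v}{a}) S_{p \setminus \{(u,a)\}} - \tfrac{i}{2} E_{v,u,a} S_{p \setminus \{(u,a)\}}.
\]
In case $C$, where $(i,a),(j,b) \in p$, iterating the contraction twice, together with a further reduction of $E_{j,i,a}(\heis{j}{b})$ using $S^u S^x = \tfrac{1}{4}\delta_{ux}\idty + \tfrac{i}{2}\epsilon_{uxy} S^y$ and the standard $\epsilon\epsilon$-contraction identity, gives
\[
(\heis{i}{j})(\heis{i}{a})(\heis{j}{b}) = \tfrac{1}{16}(\heis{a}{b}) + \tfrac{1}{4}(\heis{j}{a})(\heis{i}{b}) - \tfrac{1}{4}(\heis{i}{j})(\heis{a}{b}) - \tfrac{i}{8}\bigl(E_{a,j,b} + E_{b,i,a}\bigr),
\]
which when multiplied by $S_{p''}$ with $p'' = p \setminus \{(i,a),(j,b)\}$ produces one $(r-1)$-matching summand, two $r$-matching summands, and two $E$-terms.

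The main obstacle is to show that all $E$-contributions vanish in the total sum. For case $B$, the map $(\{u,v\}, p) \mapsto (\{a,v\}, p)$ is a fixed-point-free involution on $B_r$, and the $E$-contributions of such a pair cancel by antisymmetry since $E_{v,a,u} = -E_{v,u,a}$. For case $C$, given a fixed $p$ and two of its pairs $(i,a),(j,b)$, all four choices $e \in \{\{i,j\}, \{a,b\}, \{i,b\}, \{a,j\}\}$ lie in $C_r$ and share the same $p''$; a direct inspection shows that the resulting eight $E$-terms partition into four cancelling pairs indexed by the four three-element subsets of $\{i,j,a,b\}$, each pair canceling via $E_{x,y,z} = -E_{x,z,y}$.

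With the $E$-terms eliminated, the remaining matching-operator contributions can be counted. Case $B$ produces only $M_r^{(m)}$ contributions: for each $r$-matching $q$, the number of $(e, p) \in B_r$ yielding $q$ via $\tfrac{1}{4}(\heis{v}{a}) S_{p \setminus \{(u,a)\}}$ is $2r(m-2r)$ (pick a pair of $q$ with an orientation, then a vertex outside $\cup q$), totalling $\tfrac{r(m-2r)}{2} M_r^{(m)}$. Case $C$ produces $\tfrac{1}{16} \cdot 2(r-1) \binom{m-2r+2}{2} M_{r-1}^{(m)}$ from the leading $(\heis{a}{b})$ term, while the two $M_r^{(m)}$ contributions of $\pm \tfrac{r(r-1)}{2}$ cancel exactly. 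Summing all four cases, the coefficient of $M_{r-1}^{(m)}$ is $\bigl(\tfrac{3}{16} + \tfrac{r-1}{8}\bigr)\binom{m-2r+2}{2} = \tfrac{2r+1}{16}\binom{m-2r+2}{2} = c_r$, and the coefficient of $M_r^{(m)}$ is $-\tfrac{r}{2} + \tfrac{r(m-2r)}{2} = \tfrac{r}{2}(m-2r-1) = a_r$, completing the recursion.
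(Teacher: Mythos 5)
Your proof is correct and follows essentially the same route as the paper: partition the sum $\sum_{\{i,j\}}\sum_{p\in\cM_r^m}S_{i,j}S_p$ via the sets $A_r,B_r,C_r,D_r$ of Definition~\ref{def:matching_partition}, apply the contraction identities \eqref{eq:heisenberg-contractions}, show the $E$-terms cancel by antisymmetry, and count multiplicities as in Lemma~\ref{lemma:summation}. The only notable difference is organizational: to cancel the $E$-terms you group $(e,p)$-pairs with a fixed matching $p$ (an involution in case $B$, quadruples in case $C$), whereas the paper fixes the edge $(i,j)$ and groups the matchings $p\in B_r(i,j)$ or $C_r(i,j)$; both bookkeepings yield the same cancellations and coefficients.
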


We first prove the main technical result needed to prove Lemma~\ref{lemma:product-m1-mr}, which makes use of the partition elements $X_r(i,j)$ of $\cM_r^m$ from Definition~\ref{def:matching_partition}.

\begin{lemma}\label{lemma:summation} Fix $m\geq 2$. Then the following relations hold for any $0\leq r\leq \floor{\tfrac{m}{2}}$:
\begin{align}
  \sum_{i<j}  \sum_{p \in A_r(i,j)} S_p &= r \, M^{(m)}_r \label{eq:Asum}\\
  \sum_{i<j}  \sum_{p \in B_r(i,j)} S_p &= 2r(m-2r)\, M^{(m)}_r \label{eq:Bsum}\\
  \sum_{i<j}  \sum_{p \in C_r(i,j)} S_p &= 2r(r-1)  \, M^{(m)}_r \label{eq:Csum}\\
  \sum_{i<j}  \sum_{p \in D_r(i,j)} S_p &= \binom{m-2r}{2} \, M^{(m)}_r \label{eq:Dsum}
\end{align}
where summations over empty sets are by convention taken to be zero.
\end{lemma}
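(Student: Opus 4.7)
My plan is to prove all four identities simultaneously by a double-counting argument. Since the sets $A_r(i,j), B_r(i,j), C_r(i,j), D_r(i,j)$ partition $\cM_r^m$ for each fixed pair $i<j$, I will first swap the order of summation so that for each $p \in \cM_r^m$ the inner sum becomes a simple multiplicity count. Concretely, I will rewrite
\[\sum_{i<j}\sum_{p\in A_r(i,j)}S_p \;=\; \sum_{p\in \cM_r^m} N_A(p)\,S_p,\]
where $N_A(p) := \#\{(i,j) : i<j,\ p \in A_r(i,j)\}$, and analogously define $N_B(p), N_C(p), N_D(p)$ for the other three sets.

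The key observation is that each multiplicity depends only on $m$ and $r$, not on the specific $p$. Reading off from Definition~\ref{def:matching_partition}: $N_A(p)$ is just the number of pairs comprising $p$, which is $r$; $N_B(p)$ counts unordered pairs $(i,j)$ with exactly one of $i,j$ in $\cup p$, giving $|\cup p|\cdot |[m]\setminus \cup p| = 2r(m-2r)$; $N_C(p)$ counts unordered pairs of elements of $\cup p$ that are not themselves a pair of $p$, giving $\binom{2r}{2} - r = 2r(r-1)$; and $N_D(p) = \binom{m-2r}{2}$ counts pairs in $[m]\setminus\cup p$. With each multiplicity independent of $p$, it factors out of the sum, and the remaining $\sum_{p\in\cM_r^m} S_p$ equals $M_r^{(m)}$ by definition, yielding the four claimed identities.

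I do not expect any real obstacle; this is essentially a counting exercise. The only care needed is at the boundary cases $r = 0$ and $2r \in \{m-1,m\}$, which are handled automatically by the convention $\binom{k}{2} = 0$ for $k \leq 1$, together with the observation that at $r=0$ the left-hand sides of \eqref{eq:Asum}--\eqref{eq:Csum} are empty sums matching the zero prefactors on the right, while \eqref{eq:Dsum} correctly reduces to $\binom{m}{2}\,\idty$ via $M_0^{(m)} = \idty$. As a useful sanity check, one verifies $r + 2r(m-2r) + 2r(r-1) + \binom{m-2r}{2} = \binom{m}{2}$, reflecting the fact that for each fixed $p$ every one of the $\binom{m}{2}$ pairs $(i,j)$ places $p$ in exactly one of the four sets.
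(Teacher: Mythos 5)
Your proof is correct and takes essentially the same approach as the paper: both arguments swap the order of summation and compute, for a fixed $p \in \cM_r^m$, the multiplicity of pairs $(i,j)$ placing $p$ into each of the four sets, obtaining identical counts. Your explicit introduction of the multiplicities $N_X(p)$ and the final sanity check $r + 2r(m-2r) + 2r(r-1) + \binom{m-2r}{2} = \binom{m}{2}$ are minor stylistic additions but the substance is identical.
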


We note that for each of the sets from Definition~\ref{def:matching_partition}, there are one or two possible values of $r$ for which the set is empty. These are precisely the values of $r$ for which the coefficient on the RHS of \eqref{eq:Asum}-\eqref{eq:Dsum} is zero and so the equality holds.

\begin{proof}

By the previous remark, we need only consider the values of $r$ such that $X_r(i,j)$ is nonempty. This is independent of $(i,j).$

The identity \eqref{eq:Asum} follows from noting that any matching $p\in \cM_r^m$ belongs to precisely $r$ different sets $A_r(i,j)$, namely, those that are associated with the elements $(i,j)\in p$.

For \eqref{eq:Bsum}, every matching $p\in \cM_r^m$ belongs to precisely $2r(m-2r)$ different sets $B_r(i,j)$ labeled by taking a pair $(i,j)$ where one element belongs to $\cup p$ and one element belongs to $[m]\setminus \cup p$. The result follows.

To establish \eqref{eq:Csum}, begin by recalling that that $p\in C_r(i,j)$ if and only if $i,j\in \cup p$ but $(i,j)\notin p$. Hence, given any fixed $p\in \cM_r^m$ there are exactly $\frac{2r(2r-2)}{2} = 2r(r-1)$ distinct choices for the set $\{i,j\}$ such that $p\in C_r(i,j)$. These are obtained by first picking any $i\in\cup p$, then taking an element $j\in\cup p$ that belongs to an element $(j,k)\in p$ with $i\neq j,k$ and recalling that $(i,j)$ is unordered.

Finally, the equality in \eqref{eq:Dsum} is a consequence of the fact that $p\in D_r(i,j)$ if and only if $i,j\in [m]\setminus \cup p$. For any $r$-matching $p$, there are $\binom{m-2r}{2}$ distinct choices for $(i,j)$.

\end{proof}

\begin{proof}[Proof of Lemma~\ref{lemma:product-m1-mr}] We drop the superscript $m$ and set $S_{i,j} = \heis{i}{j}$ to simplify notation.

  For the case $r=1$, the result is a consequence of breaking up the summation \[(M_1)^2 = \sum_{i<j}\sum_{k<l}S_{i,j}S_{k,l}\] 
  in terms of $|\{i,j\}\cap\{k,l\}|$, and then applying the identities from \eqref{eq:heisenberg-contractions}. Since $E_{i,j,k}=-E_{k,j,i}$, this yields
  \begin{align*}
    (M_1)^2 & = \sum_{i<j}S_{i,j}^2 + \sum_{i<k}\sum_{j\neq i,k} \acomm{S_{i,j}}{S_{j,k}} +
    \sum_{i<j}\sum_{\substack{k<l : \\ k,l\neq i,j}} S_{i,j}S_{k,l} \\
    & = \frac{3}{16}\binom{m}{2} \idty + \qty(-\frac{1}{2} +
    \frac{m-2}{2})M_1 + 2 M_2 = \frac{3}{16}\binom{m}{2} \idty + \frac{m-3}{2} M_1 + 2 M_2 .
  \end{align*}
 
  Now, fix $r>1$. We write the product $M_1M_r$ as
  \begin{equation}
      \label{eq:M1Mr_general}
       M_1M_r = \sum_{i<j}\sum_{p\in\cM_r}S_{i,j}S_p = \sum_{X\in \{A, B, C, D\}}\sum_{i<j} \sum_{p\in X_r(i,j)} S_{i,j}S_{p}
  \end{equation}
and consider the cases $X\in \{A, B, C, D\}$ separately.
\medskip

Fix $i<j$ and consider the simplest case, $D_{r}(i,j)$. Since $D_{r}(i,j)\ni p \mapsto p\cup(i,j) \in A_{r+1}(i,j)$ is a bijection, it follows that
  \[
  \sum_{p \in D_r(i,j)} S_{i,j} S_p = \sum_{p \in A_{r+1}(i,j)} S_p.
  \]
  Summing over $i<j$ and applying Lemma~\ref{lemma:summation} yields
  \begin{equation} \label{eq:Dsum_final}
      \sum_{i<j} \sum_{p\in D_r(i,j)} S_{i,j}S_p = (r+1) M_{r+1}.
  \end{equation} 
\medskip

For the case $p\in A_r(i,j)$, one can write $S_p = S_{i,j} S_{p\setminus(i,j)}$ which by \eqref{eq:heisenberg-contractions} implies
    \[ S_{i,j} S_p = (S_{i,j})^2 S_{p\setminus(i,j)} = \frac{3}{16} S_{p\setminus(i,j)} - \frac{1}{2} S_{p} .\]
Thus, applying the bijection between $A_r(i,j)$ and $D_{r-1}(i,j)$ and then Lemma~\ref{lemma:summation} gives
    \begin{align}
        \sum_{i<j}\sum_{p \in A_r(i,j)} S_{i,j}S_p & = \frac{3}{16} \sum_{i<j}\sum_{p \in D_{r-1}(i,j)} S_p -\frac{1}{2} \sum_{i<j}\sum_{p \in A_r(i,j)} S_p \nonumber \\
        &= \frac{3}{16} \binom{m-2r+2}{2}  M_{r-1}  -\frac{1}{2} r M_r.\label{eq:ASum_final}
    \end{align}
\medskip

Now consider $B_r(i, j)$, for which one can write
 \[
 \sum_{p\in B_r(i,j)}S_{i,j}S_p = \sum_{x\neq i,j}\sum_{p'\in \cM_{r-1}^{m-3}}S_{i,j}\left(S_{i,x} + S_{j,x}\right)S_{p'}
 \]
 where $\cM_{r-1}^{m-3}$ is identified with the set of $r-1$-matchings on $[m]\setminus\{x,i,j\}$. Then, applying \eqref{eq:heisenberg-contractions} and using $E_{i,j,x}=-E_{j,i,x}$ one finds 
    \[
    S_{i,j}\left(S_{i,x} + S_{j,x}\right)S_{p'} = \frac{1}{4}\left(S_{j,x} + S_{i,x}\right)S_{p'}.
    \]
Summing the final expression over all $x$, $p'$ and then all $i<j$ produces the final identity,
 \begin{equation}\label{eq:BSum_final}
     \sum_{i<j} \sum_{p\in B_r(i,j)}S_{i,j}S_p = \frac{1}{2}r(m-2r)M_r
 \end{equation}
 where we again apply Lemma~\ref{lemma:summation}.

\medskip
For the case of $C_r(i,j)$, we begin by expanding
  \begin{equation}\label{eq:c_sum_ops}
  \sum_{p\in C_r(i,j)} S_{i,j}S_p = \sum_{\substack{x<y \, :\\ x,y\neq i,j}}\sum_{p'\in \cM_{r-2}^{m-4}}S_{i,j}(S_{i,x}S_{j,y}+S_{j,x}S_{i,y})S_{p'}
  \end{equation}
  where $\cM_{r-2}^{m-4}$ is identified with all $r-2$ pairings on $[m]\setminus\{i,j,x,y\}.$ Then, by \eqref{eq:heisenberg-contractions},
\begin{equation*}
S_{i,j}S_{i,x}S_{j,y}=\qty( \frac{1}{4} S_{j,x}  - \frac{i}{2}E_{j,i,x} )S_{j,y} = \frac{1}{16} S_{x,y} - \frac{i}{8} E_{x,j,y} -\frac{i}{2} E_{j,i,x} S_{j,y}.
\end{equation*}
Expanding $E_{j,i,x}$ and using $S^aS^b = \frac{1}{4}\delta_{a,b}\idty + \frac{i}{2}\epsilon_{a,b,c}S^c$, the last term above can be calculated as
\begin{align*}
-\frac{i}{2} E_{j,i,x} S_{j,y} & = -\frac{i}{8}E_{y,i,x} + \frac{1}{4}\sum_{b,c,d,e}\left(\sum_a \epsilon_{a,b,c}\epsilon_{a,d,e}\right) S_i^b\otimes S_j^e\otimes S_x^c\otimes S_y^d \nonumber\\
& =-\frac{i}{8}E_{y,i,x} + \frac{1}{4}\left(S_{i,y}S_{j,x}-S_{i,j}S_{x,y}\right)
\end{align*}
where we use that the Levi-Civita symbol satisifies $ \sum_a \epsilon_{a,b,c} \epsilon_{a,d,e} = \delta_{b,d}\delta_{c,e} - \delta_{b,e} \delta_{c,d}$. 

The analogous calculation holds for $S_{i,j}S_{j,x}S_{i,y}$ by exchanging $i$ and $j$ in the above formulas. Putting all of this together, and using again the anti-symmetric property of $E_{a,b,c}$, one finds
\begin{equation*}\label{eq:C_sum_term}
 S_{i,j}(S_{i,x}S_{j,y}+S_{j,x}S_{i,y})S_{p'} = \frac{1}{4}\left(S_{i,x}S_{j,y}+S_{j,x}S_{i,y}\right)S_{p'}+\left(\frac{1}{8}-\frac{1}{2}S_{i,j}\right)S_{x,y}S_{p'}
\end{equation*}
Comparing the first term above with \eqref{eq:c_sum_ops}, it is clear that summing over all possible $x,y$ and $p'$ produces 
\begin{equation}\label{eq:sum_part1}
\frac{1}{4}\sum_{\substack{x<y: \\x,y\neq i,j}}\sum_{p'\in \cM_{r-2}^{m-4}}\left(S_{i,x}S_{j,y}+S_{j,x}S_{i,y}\right)S_{p'} =\frac{1}{4}\sum_{p\in C_r(i,j)}S_p.
\end{equation}
While for the other term
\begin{equation}\label{eq:sum_part2}
  \sum_{\substack{x<y: \\x,y\neq i,j}}\sum_{p'\in \cM_{r-2}^{m-4}}\left(\frac{1}{8}-\frac{1}{2}S_{i,j}\right)S_{x,y}S_{p'} = \frac{r-1}{8}\sum_{p\in D_{r-1}(i,j)}S_p - \frac{r-1}{2}\sum_{p\in A_r(i,j)}S_p
\end{equation}
where we use $p'\cup (x,y)\in D_{r-1}(i,j)$, that each $p= \{(x_1,y_1), \ldots, (x_{r-1},y_{r-1})\}\in D_{r}(i,j)$ appears exactly $r-1$ times on the LHS, and the bijection \[A_{r}(i,j)\ni p\mapsto p\setminus(i,j)\in D_{r-1}(i,j).\]

Finally, summing \eqref{eq:sum_part1}-\eqref{eq:sum_part2} over $i<j$ and applying Lemma~\ref{lemma:summation} shows
\begin{align} \label{eq:CSum_final}
\sum_{i<j}\sum_{p\in C_{r}(i,j)} S_{i,j}S_p & = \frac{r-1}{8}\binom{m-2r+2}{2}M_{r-1}.
\end{align}
\medskip

Inserting \eqref{eq:Dsum_final}-\eqref{eq:BSum_final} and \eqref{eq:CSum_final} into \eqref{eq:M1Mr_general} produces the desired expression for $M_1M_r$.

\end{proof}

The recursion from Lemma~\ref{lemma:product-m1-mr} can be nicely rewritten in terms of a operator-valued vector product. Denoting by $s:=\floor{\half[m]}$ and following the convention for spin matrices, let $\bm{M}^{(m)}$ the vector of $s+1$ operators given by
\begin{equation}
\bm{M}^{(m)} = \mqty( M_0^{(m)}, M_1^{(m)}, \dots, M_s^{(m)} ).
\end{equation}
Then, for any $v \in \bC^{s+1}$, define
\begin{equation}
v \cdot \bm{M}^{(m)} = \sum_{r=0}^{s} v_r \, M^{(m)}_r \in \mathcal{Z}^{(m)}.
\end{equation}
Thus, $\mathcal{Z}^{(m)} = \{v\cdot \bm{M}^{(m)} : v\in \bC^{s+1}\}$ by Proposition~\ref{prop:matchings-are-basis}. With this notation, Lemma~\ref{lemma:product-m1-mr} can be rephrased as follows.
\begin{lemma}\label{lemma:product-m1} Fix $m\geq 3$, and let $B$ be the $(s+1)\times (s+1)$ tridiagonal matrix
    \begin{equation}\label{eq:B_matrix}
	B = \begin{pmatrix}
		a_0 & c_1 & 0 & \ldots & 0 \\
		b_0 & a_1 & c_2 & \ddots & 0 \\
		\vdots & \ddots & \ddots& \ddots & \vdots \\
		\vdots & \ddots & \ddots & \ddots & c_s \\
		0 & \ldots & \ldots & b_{s-1} & a_s
	\end{pmatrix}
\end{equation}
whose entries $a_r$, $b_r$, $c_r$ are defined in Lemma~\ref{lemma:product-m1-mr}. Then for each $v \in \bC^{s+1}$, 
\begin{equation}
    M_1^{(m)} (v \cdot \bm{M}^{(m)})= (B v) \cdot \bm{M}^{(m)}.
\end{equation}
\end{lemma}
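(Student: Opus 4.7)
The plan is to reduce the claim to the scalar recursion already established in Lemma~\ref{lemma:product-m1-mr} by exploiting linearity in $v$. Since $v \mapsto v \cdot \bm{M}^{(m)}$ is linear and $M_1^{(m)}$ acts by left multiplication (also linear), it suffices to verify the identity on the standard basis vectors $e_0, e_1, \dots, e_s \in \bC^{s+1}$, where $e_r \cdot \bm{M}^{(m)} = M_r^{(m)}$. On each such basis element, Lemma~\ref{lemma:product-m1-mr} gives
\begin{equation*}
M_1^{(m)} M_r^{(m)} = c_r M_{r-1}^{(m)} + a_r M_r^{(m)} + b_r M_{r+1}^{(m)},
\end{equation*}
with the boundary conventions $M_{-1}^{(m)} = M_{s+1}^{(m)} = 0$ (and correspondingly $c_0 = 0$, $b_s = 0$ playing no role).

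Next, I would read off from \eqref{eq:B_matrix} the action of $B$ on the basis: $B e_r$ is the $r$-th column of $B$, which has entries $c_r$ in row $r-1$, $a_r$ in row $r$, and $b_r$ in row $r+1$ (with the obvious modifications at the endpoints). Hence
\begin{equation*}
(B e_r) \cdot \bm{M}^{(m)} = c_r M_{r-1}^{(m)} + a_r M_r^{(m)} + b_r M_{r+1}^{(m)},
\end{equation*}
which coincides exactly with $M_1^{(m)}(e_r \cdot \bm{M}^{(m)})$ by the recursion above. Extending by linearity in $v$ gives $M_1^{(m)}(v \cdot \bm{M}^{(m)}) = (Bv) \cdot \bm{M}^{(m)}$ for all $v \in \bC^{s+1}$.

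There is no real obstacle here; the only point deserving a quick sanity check is the consistency of the boundary terms, namely that the missing entries in the first and last columns of $B$ correspond precisely to the vanishing operators $M_{-1}^{(m)}$ and $M_{s+1}^{(m)}$ in the recursion, so that the column-by-column matching is exact at the endpoints as well. All content is inherited from Lemma~\ref{lemma:product-m1-mr}; this lemma is purely a repackaging of that recursion as a linear map on $\cZ^{(m)}$ written in the basis $\bm{M}^{(m)}$.
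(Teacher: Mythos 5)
Your proof is correct and is essentially the same argument as the paper's: both are direct bookkeeping consequences of Lemma~\ref{lemma:product-m1-mr}. The paper expands $M_1^{(m)}(v\cdot \bm{M}^{(m)})$ for a general $v$, reindexes the sum, and identifies the resulting coefficient vector as $Bv$ (noting $a_0=0$ to reconcile the $r=0$ term), whereas you verify the identity on the basis vectors $e_r$ and invoke linearity — a purely presentational difference, with the boundary conventions $M_{-1}^{(m)}=M_{s+1}^{(m)}=0$ handled equivalently in both.
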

\begin{proof}
By Lemma~\ref{lemma:product-m1-mr},
\begin{align*}
M_1^{(m)} (v \cdot \bm{M}^{(m)}) & =
\sum_{r=0}^{s} 
v_r\qty( c_r M_{r-1}^{(m)} + a_r M_r^{(m)} + b_r M_{r+1}^{(m)}) = \sum_{r=0}^{s} v'_r M_r^{(m)}
\end{align*}
where,
\[
    v'_r = \begin{cases} 
    c_1\, v_1 & \text{if $r = 0$,} \\ 
    b_{s-1} v_{s-1} + a_s v_{s} & \text{if $r = s$,} \\ 
    b_{r-1} v_{r-1} + a_r v_r + c_{r+1} v_{r+1} & \text{otherwise.}
    \end{cases}
\]
Since $a_0=0$, this is equivalent to $v' = Bv$.
\end{proof}

The following are two immediate consequences of Lemma~\ref{lemma:product-m1}:

\begin{cor}\label{cor:B_properties} Fix $m\geq 3$ and define $e_0 = (1,0,\ldots, 0)\in \bC^{s+1}$. The matrix $B$ from Lemma~\ref{lemma:product-m1} satisfies the following two properties:
\begin{enumerate}
    \item $q(M_1^{(m)}) = (q(B)e_0)\cdot \bm{M}^{(m)}$ for any polynomial $q\in \bC[t]$,
\item The set $\{e_0, \ldots, e_s\}$ is a basis for $\bC^{s+1}$ where $e_r = B^re_0.$
\end{enumerate}
\end{cor}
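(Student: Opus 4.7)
My plan is to prove the two claims in sequence. Claim~(1) will follow from a straightforward induction on the degree of $q$ using Lemma~\ref{lemma:product-m1} together with $M_0^{(m)} = \idty = e_0 \cdot \bm{M}^{(m)}$. Claim~(2) will follow from exploiting the tridiagonal structure of $B$ and the observation that all subdiagonal entries $b_r = r+1$ are strictly positive.

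For Claim~(1), by linearity of $v \mapsto v \cdot \bm{M}^{(m)}$ it suffices to establish $(M_1^{(m)})^k = (B^k e_0) \cdot \bm{M}^{(m)}$ for every $k \ge 0$. The base case $k=0$ reduces to $\idty = \idty$ since $M_0^{(m)} = \idty$ and $e_0 = (1,0,\dots,0)$. For the inductive step, applying Lemma~\ref{lemma:product-m1} to $(B^{k-1}e_0)\cdot \bm{M}^{(m)}$ gives
\[
    (M_1^{(m)})^k = M_1^{(m)}\bigl( (B^{k-1}e_0)\cdot \bm{M}^{(m)} \bigr) = (B\cdot B^{k-1}e_0)\cdot\bm{M}^{(m)} = (B^k e_0)\cdot \bm{M}^{(m)}.
\]

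For Claim~(2), let $\hat{\epsilon}_0,\ldots,\hat{\epsilon}_s$ denote the standard basis of $\bC^{s+1}$, so that in particular $e_0 = \hat{\epsilon}_0$. I would prove by induction on $r$ that $e_r$ lies in $\linspan\{\hat{\epsilon}_0,\ldots,\hat{\epsilon}_r\}$ and has a nonzero $\hat{\epsilon}_r$-component. The base case $r=0$ is trivial. For the inductive step, using
\[
    B\hat{\epsilon}_j = c_j \hat{\epsilon}_{j-1} + a_j\hat{\epsilon}_j + b_j \hat{\epsilon}_{j+1}
\]
(with $c_0$ and $b_s$ treated as zero), the vector $e_{r+1}=Be_r$ clearly lies in $\linspan\{\hat{\epsilon}_0,\ldots,\hat{\epsilon}_{r+1}\}$, and its $\hat{\epsilon}_{r+1}$-component is $b_r$ times the $\hat{\epsilon}_r$-component of $e_r$. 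Both factors are nonzero by the inductive hypothesis and the fact that $b_r = r+1 > 0$. Thus the matrix expressing $\{e_0,\ldots,e_s\}$ in terms of $\{\hat{\epsilon}_0,\ldots,\hat{\epsilon}_s\}$ is lower triangular with nonzero diagonal entries, so $\{e_0,\ldots,e_s\}$ is a linearly independent set of $s+1$ vectors in $\bC^{s+1}$ and hence a basis.

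No step in this plan looks genuinely difficult; the only thing to watch is the edge case where the recurrence in the matrix $B$ truncates (i.e., the boundary rows corresponding to $r = 0$ and $r = s$), but the convention that $c_0$ and $b_s$ vanish ensures that the triangular structure is preserved throughout the induction.
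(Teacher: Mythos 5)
Your proof is correct and follows essentially the same two-step argument as the paper: induction plus linearity for claim (1), and a lower-triangular-with-nonzero-diagonal argument for claim (2). The paper merely states the diagonal entries explicitly as $e_r(r)=r!$, which is exactly the product $b_0b_1\cdots b_{r-1}$ that your induction produces.
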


\begin{proof}
Since $e_0 \cdot \bm{M}^{(m)} = \idty$, it follows from Lemma~\ref{lemma:product-m1} that
\[ 
    (M_1^{(m)})^p = (B^p e_0) \cdot \bm{M}^{(m)}, \qquad p \geq 0,
\]
which extends to arbitrary polynomials $q$ by linearity. Moreover, for any $0\leq r \leq s$, by the definition of $B$ in Lemma~\ref{lemma:product-m1}, $e_r = (e_r(0), \ldots, e_r(s))$ where
\[		e_r(r) = r!\;\;\text{and}\;\; e_r(p)=0 \;\;\text{for}\;\; p>r.
\]
\end{proof}

From this corollary, an approach for calculating \eqref{eq:sym_decomp} starts to emerge. From \eqref{eq:Casimir_oneMatching}, it is trivial that
\[
\spec(M_1^{(m)}) = \left\{\lambda_j := \frac{j(j+1)}{2}-\frac{3m}{8} : j \in J_m\right\},
\]
As a consequence, for any polynomial $p_j$ such that $p_j(\lambda_j)= 1$ and $p_j(\lambda_k)=0$ for all $k \neq j$, one has
\begin{equation}\label{eq:strategy-q}
  Q^{(m,j)} = p_j(M_1^{(m)}) = (p_j(B)e_0) \cdot \bm{M}^{(m)}.
\end{equation}
Such a polynomial $p_j$ is easy to find, e.g. $p_j(t) = \frac{\prod_{i\neq j} t-\lambda_i}{\prod_{i\neq j}\lambda_j-\lambda_i}.$ However, now one needs to evaluate $p_j(B)$, and so we turn to considering the spectrum and eigenvectors of $B$.

\begin{lemma}\label{lemma:spectrum-b}
The spectrum of $B$ and of $M_1^{(m)}$ are the same up to multiplicities.
If $w \in (\bC^2)^{\otimes m}$ is an eigenvector of $M_1^{(m)}$, then 
$y \in \bC^{s+1}$ defined by
\begin{equation}\label{eq:B_eigvec}
y(r) = \bra{w} M_r^{(m)} \ket{w} \qc r = 0,\dots, s:= \floor{\tfrac{m}{2}}
\end{equation}
is a left eigenvector of $B$ with the same eigenvalue.
\end{lemma}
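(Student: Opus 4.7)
The plan is to derive a three-term recursion for $y(r)$ directly from the product relation of Lemma~\ref{lemma:product-m1-mr}, and to recognize this recursion as precisely the left eigenvalue equation $y^T B = \lambda y^T$.

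First, I would fix an eigenvector $w \in (\bC^2)^{\otimes m}$ of $M_1^{(m)}$ with eigenvalue $\lambda$, and compute $\bra{w} M_1^{(m)} M_r^{(m)} \ket{w}$ in two ways. On one hand, since $M_1^{(m)}$ is Hermitian and $M_1^{(m)}\ket{w} = \lambda \ket{w}$, this quantity equals $\lambda \, \bra{w} M_r^{(m)} \ket{w} = \lambda\, y(r)$. On the other hand, Lemma~\ref{lemma:product-m1-mr} gives the identity $M_1^{(m)} M_r^{(m)} = c_r M_{r-1}^{(m)} + a_r M_r^{(m)} + b_r M_{r+1}^{(m)}$, so taking the expectation in $\ket{w}$ yields $c_r y(r-1) + a_r y(r) + b_r y(r+1)$. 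Equating the two expressions produces the recursion
\begin{equation*}
c_r \, y(r-1) + a_r \, y(r) + b_r \, y(r+1) \;=\; \lambda \, y(r), \qquad 0 \le r \le s,
\end{equation*}
with the convention $y(-1) = y(s+1) = 0$ inherited from $M_{-1}^{(m)} = M_{s+1}^{(m)} = 0$.

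Second, I would compare this recursion with the action of $y^T$ on $B$. Reading the matrix \eqref{eq:B_matrix}, the nonzero entries are $B_{r,r} = a_r$, $B_{r,r+1} = c_{r+1}$, and $B_{r,r-1} = b_{r-1}$, so
\begin{equation*}
(y^T B)_k \;=\; y(k-1)\, B_{k-1,k} + y(k)\, B_{k,k} + y(k+1)\, B_{k+1,k} \;=\; c_k\, y(k-1) + a_k\, y(k) + b_k\, y(k+1).
\end{equation*}
Therefore the recursion is exactly $y^T B = \lambda\, y^T$, confirming that $y$ is a left eigenvector of $B$ with eigenvalue $\lambda$, provided $y \neq 0$.

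Third, the nonvanishing of $y$ is immediate: since $M_0^{(m)} = \idty$, one has $y(0) = \langle w, w\rangle > 0$ for any nonzero $w$. To conclude that the spectra coincide as sets, I would note that $M_1^{(m)}$ has exactly $|J_m| = s+1$ distinct eigenvalues by \eqref{eq:Casimir_oneMatching}, each yielding a nonzero left eigenvector of the $(s+1)\times(s+1)$ matrix $B$ by the construction above. Since $B$ can have at most $s+1$ eigenvalues, this accounts for $\spec(B)$ entirely, proving $\spec(B) = \spec(M_1^{(m)})$ as sets. The argument is essentially a self-contained bookkeeping exercise; the only subtlety is matching the indices of the recursion to the correct off-diagonal entries of $B$, which is handled by the explicit reading of \eqref{eq:B_matrix} above.
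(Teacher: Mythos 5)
Your proof is correct, and it takes a genuinely different and more elementary route than the paper. The paper's argument goes through the polynomial machinery of Corollary~\ref{cor:B_properties}: it considers the polynomials $q_p(x) = (x-\lambda)x^p$, uses the identity $q_p(M_1^{(m)}) = (q_p(B)e_0)\cdot\bm{M}^{(m)}$ to translate $\langle w, q_p(M_1^{(m)})w\rangle = 0$ into $\langle y, (B-\lambda)e_p\rangle = 0$ for all $p$, and then invokes the fact (from Corollary~\ref{cor:B_properties}) that $\{e_0,\dots,e_s\}$ spans $\bC^{s+1}$ to conclude $(B^t-\lambda)y=0$. Your argument bypasses all of this: you evaluate $\bra{w}M_1^{(m)}M_r^{(m)}\ket{w}$ once via Hermiticity of $M_1^{(m)}$ and once via the three-term recursion of Lemma~\ref{lemma:product-m1-mr}, and read off the left eigenvalue equation $y^TB=\lambda y^T$ directly from the tridiagonal structure of $B$. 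The tradeoff is that the paper's formulation is consistent with its broader strategy of manipulating polynomials in $B$ (which it needs again, e.g., in the proof of Theorem~\ref{thm:Psym_Sr} via \eqref{eq:strategy-q}), whereas your derivation is shorter and self-contained but does not set up that machinery. The concluding remark that $\spec(M_1^{(m)})$ already has $s+1$ distinct values (so it must exhaust $\spec(B)$, with $B$ having each with multiplicity one) matches the paper's reasoning exactly.
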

\begin{proof}
Note that $\spec(M_1^{(m)})$ has $s+1$ distinct real eigenvalues. Hence, one only needs to show that each eigenvalue of $M_1^{(m)}$ is an eigenvalue of $B$ with the corresponding left eigenvector.

Let $\lambda \in \spec(M_1^{(m)})$ and $w\in (\bC^2)^{\otimes m}$ be a corresponding eigenvector. Then, for all $0\leq p \leq s$,
\[
\braket{w}{q_p(M_1^{(m)})w} = 0, \quad \text{where} \quad q_p(x) = (x-\lambda)x^p.
\]
Defining $v_p := (B-\lambda)e_p$, it follows from Corollary~\ref{cor:B_properties} that
\[ q_p(M_1^{(m)}) = (q_p(B) e_0) \cdot \bm{M}^{(m)} = v_p \cdot \bm{M}^{(m)}.\]
Since $q_p(M_1^{(m)})w=0$ for all $p$, and the vectors $v_p$ and $y$ have real entries, the above equality implies
\begin{equation}\label{eq:zero_ip}
  0= \braket{w}{q_p(M_1^{(m)})w} =  \sum_{r=0}^s v_p(r) \braket{w}{M_r^{(m)}w} =  \sum_{r=0}^s v_p(r) y(r) = \braket{y}{v_p},  
\end{equation}
where we use the standard inner product on $\bC^{s+1}.$ The vector $y$ is nonzero since $y(0) = \|w\|^2 \neq 0$. Hence, \eqref{eq:zero_ip} is equivalent to
\[0 = \braket{y}{v_p} = \braket{(B^t-\lambda)y}{e_p} \quad \forall\, 0 \leq p \leq s.\]
As $\{e_0, \ldots, e_s\}$ is a basis for $\bC^{s+1}$, it must be that $(B^t - \lambda)y= 0$, i.e., $y$ is a left eigenvector of $B$ with eigenvalue $\lambda$.
\end{proof}

While Lemma~\ref{lemma:spectrum-b} characterizes the left eigenvectors of $B$, the next result explains how to obtain the corresponding right eigenvectors.
\begin{lemma}\label{lemma:eigenvectors-b}
Fix $m\geq 2$ and let $D=\operatorname{diag}(d_0, \dots, d_{s})\in M_{s+1}(\bC)$ be the diagonal matrix whose entries are given by
    \begin{equation}
        d_0 = 1 \qc d_r = \frac{b_{r-1}}{c_r} d_{r-1} \quad r = 1,\dots, s.
    \end{equation}
    If $y_r$ is a left eigenvector of $B$ with eigenvalue $\lambda_r$, then $Dy_r$ is a right eigenvector of $B$ with the same eigenvalue. As a consequence,
    \begin{equation}\label{eq:spectral-decomposition-b}
        B = \sum_{r=0}^s \lambda_r \frac{1}{\expval{D}{y_r}} D \dyad{y_r},
    \end{equation}
    and $\mel{y_{r}}{D}{y_{r'}} = \delta_{r,r'}\expval{D}{y_r}$ for every $r,r' = 0,\dots,s$.
\end{lemma}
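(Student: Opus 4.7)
The plan is to recognize that $D$ is chosen precisely to symmetrize $B$, and then to reduce every claim in the statement to this single observation. First I would verify the similarity identity
\[
D^{-1} B D = B^{T}.
\]
Both sides are tridiagonal with the same diagonal $a_r$, so the identity reduces to matching off-diagonals: one needs $d_r^{-1} c_{r+1} d_{r+1} = b_r$, i.e. $d_{r+1}/d_r = b_r/c_{r+1}$, which is exactly the recursion $d_r = (b_{r-1}/c_r)\, d_{r-1}$ defining $D$ (the subdiagonal gives the same condition). Since $c_r = \tfrac{2r+1}{16}\binom{m-2r+2}{2} > 0$ for $1 \le r \le s = \floor{m/2}$ and $b_{r-1} = r > 0$ for $r \ge 1$, every $d_r$ is strictly positive, so $D$ is invertible and $D^{1/2}$ is real (which I will use for positivity below).

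Once $D^{-1}BD = B^{T}$ is in hand, the right-eigenvector claim is immediate: for $y_r$ with $B^{T} y_r = \lambda_r y_r$,
\[
B (D y_r) = D (D^{-1} B D) y_r = D B^{T} y_r = \lambda_r\, D y_r.
\]
For the orthogonality relation I would use the standard biorthogonal-eigenvector computation: for $r \neq r'$,
\[
\lambda_r\, \mel{y_r}{D}{y_{r'}} = (B^{T} y_r)^{T} D y_{r'} = y_r^{T} (B D y_{r'}) = \lambda_{r'}\, \mel{y_r}{D}{y_{r'}},
\]
and since Lemma~\ref{lemma:spectrum-b} guarantees the eigenvalues $\{\lambda_r\}$ are distinct (they coincide with those of $M_1^{(m)}$), the bra-ket must vanish. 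The positivity $\expval{D}{y_r} > 0$ follows from $\expval{D}{y_r} = \lVert D^{1/2} y_r \rVert^{2}$.

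Finally, the spectral decomposition is assembled from these pieces. Because $B$ has $s+1$ distinct eigenvalues, the right eigenvectors $\{D y_r\}_{r=0}^{s}$ form a basis of $\bC^{s+1}$, and the $D$-biorthogonality just established shows that $\{y_r / \expval{D}{y_r}\}_{r=0}^{s}$ is the associated dual basis. Expanding $\idty$ in these dual bases and multiplying by $B$ yields
\[
B = \sum_{r=0}^{s} \lambda_r\, \frac{D \dyad{y_r}}{\expval{D}{y_r}},
\]
which is the stated formula. I do not anticipate a serious obstacle: the only slightly delicate points are the index bookkeeping in verifying $D^{-1}BD = B^{T}$ and confirming that $c_r$ never vanishes in the relevant range, both of which follow directly from the explicit formulas for $a_r,b_r,c_r$ in Lemma~\ref{lemma:product-m1-mr}.
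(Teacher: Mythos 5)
Your proposal is correct and takes essentially the same route as the paper: the heart of both arguments is that the recursion $d_r = (b_{r-1}/c_r)\,d_{r-1}$ is precisely what makes $D$ conjugate $B$ into its transpose (equivalently, makes $D^{-1/2}BD^{1/2}$ symmetric, which is how the paper phrases it), and everything else follows from simplicity of the eigenvalues via Lemma~\ref{lemma:spectrum-b}. The only cosmetic difference is that you run a direct biorthogonality computation to get $\mel{y_r}{D}{y_{r'}} = 0$ for $r \neq r'$ and then assemble the resolution of the identity by hand, whereas the paper cites the spectral theorem for the symmetric matrix $B' = D^{-1/2}BD^{1/2}$ and conjugates back; these are interchangeable.
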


We note that $D$ is a strictly positive matrix since $b_{r-1},c_r>0$ for all $1\leq r\leq s$. As such, the square root and inverse of $D$ are well-defined.

\begin{proof}
Let $y$ be a left eigenvector of $B$ with eigenvalue $\lambda$. By definition, it satisfies
\begin{equation}\label{eq:Bt_eigenvector}
   (B^t y)(r) = c_r y(r-1) + a_r y(r) + b_r y(r+1) = \lambda y(r) \quad \forall 0 \le r \le s,
\end{equation}
where the cases $r=0$ and $r=s$ are interpreted by setting $y(-1) = y(r+1) = 0$. Now consider $y' = Dy$. Then,
\begin{align*}
    (By')(r) & = b_{r-1} y'(r-1) + a_r y'(r) + c_{r+1} y'(r+1) \\
             & = b_{r-1} d_{r-1} y(r-1) + a_rd_r y(r) + c_{r+1} d_{r+1}y(r+1).
\end{align*}
The definition of the diagonal elements are such that $d_{r-1}/d_r  = c_r/ b_{r-1}$ for $1\leq r \leq s$ which, considering \eqref{eq:Bt_eigenvector}, implies $y'=Dy$ is a right eigenvector of $B$ as then
\begin{equation}\label{eq:B_right_eigenvector}
 (By')(r) = d_r( c_r y(r-1) + a_r y(r) + b_r y(r+1)) = \lambda y'(r).
\end{equation}

Denote by $\spec(B)=\{\lambda_r : 0 \leq r \leq s\}$ and let $\{y_r : 0 \leq r \leq s\}$ be the corresponding set of left eigenvectors. It is easy to verify that $B'=D^{-1/2}BD^{1/2}$ is symmetric, and hence has an orthonormal basis of eigenvectors. Since each eigenvalue of $B$ is simple by Lemma~\ref{lemma:spectrum-b}, \eqref{eq:B_right_eigenvector} implies that
\[
v_r = \frac{1}{\norm{ D^{1/2} y_r}} D^{1/2} y_r, \qquad 0 \leq r \leq s
\]
is an orthonormal eigenbasis of $B'$. Hence, $\mel{y_{r}}{D}{y_{r'}} = \delta_{r,r'}\expval{D}{y_r}$, and by the spectral theorem
\[
B = D^{1/2}B'D^{-1/2} = \sum_{r=0}^s \lambda_r \frac{1}{\expval{D}{y_r}} D \dyad{y_r}.
\]

\end{proof}

This final decomposition of $B$ allows us to calculate $p(B)$ for any polynomial $p$. As such, we can determine the values of the coefficients from \eqref{eq:sym_decomp}.

\begin{thm}\label{thm:Psym_Sr}
 	For any $m\geq 2$, the projection onto the symmetric subspace $P^{(m)}_{\rm sym}$ is
 	\begin{equation}\label{eq:Psym_matchings_sum}
 		P^{(m)}_{\rm sym} = \sum_{r=0}^{\lfloor\frac{m}{2} \rfloor}c^{(m)}(r) M_r^{(m)}, \qquad c^{(m)}(r) = \frac{m+1}{2^m}\frac{4^{r}}{(2r+1)!!}.
 	\end{equation}
 \end{thm}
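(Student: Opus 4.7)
The plan is to identify $P^{(m)}_{\rm sym}$ as the spectral projection of $M_1^{(m)}$ at its largest eigenvalue and then apply the machinery built in Corollary~\ref{cor:B_properties} and Lemma~\ref{lemma:eigenvectors-b}. Since $j_0 + \floor{m/2} = m/2$ in both parities, $P^{(m)}_{\rm sym} = Q^{(m,m/2)}$. Taking $p$ to be any polynomial with $p(\lambda_{m/2}) = 1$ and $p(\lambda_j) = 0$ for every other $j \in J_m$, \eqref{eq:strategy-q} yields $P^{(m)}_{\rm sym} = (p(B)\, e_0) \cdot \bm{M}^{(m)}$. Inserting the spectral decomposition \eqref{eq:spectral-decomposition-b} of $B$ collapses this sum to its single nonzero term,
\[
p(B)\, e_0 \;=\; \frac{y_{\max}(0)}{\expval{D}{y_{\max}}}\, D\, y_{\max},
\]
where $y_{\max}$ is the left eigenvector of $B$ associated with $\lambda_{m/2}$. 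Consequently, $c^{(m)}(r) = N\, d_r\, y_{\max}(r)$ for a single normalization constant $N$ to be determined.

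Both factors can be made explicit. For $y_{\max}$, Lemma~\ref{lemma:spectrum-b} permits me to compute it from any eigenvector of $M_1^{(m)}$ at $\lambda_{m/2}$; the state $w = \ket{\uparrow\cdots\uparrow}$ is convenient since it lies in the symmetric subspace, and then \eqref{eq:M-nonzero} together with \eqref{eq:matching_count} give
\[
y_{\max}(r) \;=\; \expval{M_r^{(m)}}{w} \;=\; \frac{|\cM_r^m|}{4^r} \;=\; \frac{m!}{8^r\, r!\,(m-2r)!}.
\]
Iterating $d_r = (b_{r-1}/c_r)\, d_{r-1}$ with the explicit coefficients from \eqref{eq:recursion_coeffs} produces, after telescoping,
\[
d_r \;=\; \frac{32^r\, r!\,(m-2r)!}{(2r+1)!!\, m!},
\]
and the product cancels almost entirely, leaving
\[
d_r\, y_{\max}(r) \;=\; \frac{4^r}{(2r+1)!!},
\]
which is precisely the $r$-dependence claimed in \eqref{eq:Psym_matchings_sum}.

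To fix $N$, rather than evaluating the combinatorial sum $\expval{D}{y_{\max}}$ directly, I would take the trace of \eqref{eq:Psym_matchings_sum}. Because each Pauli component is traceless, $\tr(\heis{i}{j})=0$, and since the factors in $S_p$ act on disjoint sites for any non-empty matching $p$, the whole product is traceless. Hence $\tr(M_r^{(m)}) = 2^m\,\delta_{r,0}$. Combining this with $\tr(P^{(m)}_{\rm sym}) = m+1$, the dimension of the spin-$m/2$ subspace of $(\bC^2)^{\otimes m}$, yields $m+1 = N \cdot 2^m$, so $N = (m+1)/2^m$ and the stated formula for $c^{(m)}(r)$ follows. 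The main obstacle I anticipate is the algebraic bookkeeping needed to verify the clean telescoping cancellation of $d_r\, y_{\max}(r)$; conceptually, the neat step is recognizing that the trace identity bypasses evaluating $\expval{D}{y_{\max}}$ as an explicit combinatorial sum.
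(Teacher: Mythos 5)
Your proof is correct and follows the same main strategy as the paper: identify $P^{(m)}_{\rm sym}$ as $p(M_1^{(m)})$ for a suitable interpolating polynomial, pass to $p(B)e_0$ via Corollary~\ref{cor:B_properties}, use the spectral decomposition of $B$ from Lemma~\ref{lemma:eigenvectors-b} to collapse to the single term associated with the top eigenvalue, and read off $y_{\max}$ from $w=\ket{\uparrow\cdots\uparrow}$ via Lemma~\ref{lemma:spectrum-b}. Your telescoped expressions $y_{\max}(r)=\frac{m!}{8^r r!(m-2r)!}$ and $d_r=\frac{32^r r!(m-2r)!}{(2r+1)!!\, m!}$ agree with the paper's $\binom{m}{2r}\frac{(2r-1)!!}{4^r}$ and $\frac{4^{2r}}{(2r+1)!!(2r-1)!!\binom{m}{2r}}$, and the cancellation $d_r\, y_{\max}(r)=\frac{4^r}{(2r+1)!!}$ goes through cleanly.

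The one place where you genuinely diverge is the normalization. The paper evaluates $\expval{D}{y_s}=\sum_{r}\frac{1}{2r+1}\binom{m}{2r}=\frac{2^m}{m+1}$ head-on using the binomial identities $\frac{1}{2r+1}\binom{m}{2r}=\frac{1}{m+1}\binom{m+1}{2r+1}$ and $\binom{m+1}{2r+1}=\binom{m}{2r}+\binom{m}{2r+1}$, whereas you sidestep the sum entirely by taking traces: $\tr(M_r^{(m)})=2^m\delta_{r,0}$ because every $\heis{i}{j}$ is traceless and the factors of $S_p$ act on disjoint sites, while $\tr(P^{(m)}_{\rm sym})=m+1$. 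This trace argument is a cleaner way to fix the constant, trading a combinatorial identity for an elementary structural observation; it also implicitly re-derives the paper's identity $\sum_r\frac{1}{2r+1}\binom{m}{2r}=\frac{2^m}{m+1}$ as a free by-product. Both are correct; yours is slightly more self-contained.
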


\begin{proof} Let $s= \floor{\tfrac{m}{2}}$ and $\lambda_s = \max \spec(M_1^{(m)})$. Since $P_{\rm sym}^{(m)}= Q^{(m,m/2)}$, it follows from \eqref{eq:strategy-q} that
    % As $P^{(m)}_{\rm sym} \in \mathcal{Z}^{(m)}$, the existence of a vector $c^{(m)}$ such that $P^{(m)}_{\rm sym} = c^{(m)} \cdot \bm{M}^{(m)}$ is an immediate consequence of Proposition~\ref{prop:matchings-are-basis}. We need to show that the vector given in \eqref{eq:Psym_matchings_sum} is indeed the right one.

    % As $M_1^{(m)}$ is a shifted and rescaled version of $C^{(m)}$, we can find a polynomial $p$ such that $p(M_1^{(m)}) = P^{(m)}_{\rm sym}$. If we denote by $\lambda_r = \frac{r(r+1)}{2} - \frac{3m}{8}$, for $r=0,\dots, s$, the eigenvalues of $M_1^{(m)}$, 
    % it is sufficient to choose $p$ such that $p(\lambda_r) = 0$ for $r=0, \dots, s-1$ and $p(\lambda_s)=1$. Using~\eqref{eq:B-polynomials}, we have that
    \[ 
        P^{(m)}_{\rm sym} = p(M_1^{(m)}) = (p(B)e_0) \cdot \bm{M}^{(m)},
    \]
    where $p$ is such that $p(\lambda_s)=1$ and $p(\lambda) = 0$ for all $\lambda\in \spec(M_1^{(m)})\setminus\{\lambda_s\}$. Thus, $ c^{(m)}(r) = (p(B)e_0)(r)$, and so by Lemma~\ref{lemma:eigenvectors-b}, 
    \[
    p(B)e_0 = \frac{1}{\expval{D}{y_s}} D \ket{y_s} \braket{y_s}{e_0} = \frac{y_s(0)}{\expval{D}{y_s}} D \ket{y_s}.
    \]
The proof is completed by determining $y_s$, $y_s':=Dy_s$, and their scalar product.

Since $w = \ket{\uparrow\ldots \uparrow}$ is an eigenvector of $M_1^{(m)}$ associated with $\lambda_s$, Lemma~\ref{lemma:spectrum-b} implies that the corresponding left eigenvector of $B$ is defined by
\begin{equation}\label{eq:right-largest-eigenvector}
y_s(r) = \bra{\uparrow\ldots \uparrow} M_r^{(m)} \ket{\uparrow\ldots \uparrow}=  \binom{m}{2r} \frac{(2r-1)!!}{4^r}.
\end{equation}
where we have used \eqref{eq:M-nonzero} and \eqref{eq:matching_count}. In particular, $y_s(0) = 1$.

Now, by  the definition of $D,$ $y'_s(0) = 1$ and $y'_s(r) = d_r y_s(r)$ for $1\leq r \leq s$ where
\[
d_r = \prod_{i=1}^{r} \frac{b_{i-1}}{c_i} =  \frac{4^{2r} r!}{(2r+1)!!} \prod_{i=1}^r  \frac{2}{(m-2i+2)(m-2i+1)}.
\]
This can be further simplified by recognizing $\prod_{i=1}^r  (m-2i+2)(m-2i+1) = \frac{ m! }{(m-2r)!},$ which yields
\[
d_r =  \frac{4^{2r}}{(2r+1)!!} \cdot \frac{2^r r!(m-2r)!}{m!} = \frac{4^{2r}}{(2r+1)!! (2r-1)!! \binom{m}{2r}}.
\]
Thus, $y'_s(r) = \frac{4^r}{(2r+1)!!}$ and
\begin{equation}\label{eq:scalar-product-largest-eigenvectors}
\braket{y_s}{y'_s} = \expval{D}{y_s} =  \sum_{r=0}^s \frac{1}{2r+1} \binom{m}{2r} = \frac{2^m}{m+1}.
\end{equation}
The last equality can be seen, e.g., by first using $\frac{1}{2r+1}\binom{m}{2r}=\frac{1}{m+1}\binom{m+1}{2r+1}$ and then applying $\binom{m+1}{2r+1}=\binom{m}{2r+1}+\binom{m}{2r}$. Combining these produces the desired result:
\[
\left(p(B)e_0\right)(r) =  \frac{y'_s(r) y_s(0) }{\expval{D}{y_s}} =  \frac{m+1}{2^m} \frac{4^r}{(2r+1)!!}.
\]
\end{proof}

\section{The uniform spectral gap}\label{sec:Gap_Proof}

The proof of Theorem~\ref{thm:explicit_bound} follows from Proposition~\ref{prop:espilon-estimate} together with a sufficiently tight upper bound on $\|\E_L^{(d,n)}\|_\infty$ and lower bound on $q_L^{(d,n)}$. Recalling that $Q_L = \bE_L^{(d,n)}(\idty)$, both of these bounds will be a consequence of writing
\begin{equation}\label{eq:n-to-0_relation}
	\bE_L^{(d,n)} = (\bE^{\otimes d-1})^{n} \circ \bE^{(d,0)}
\end{equation}
in terms of its singular value decomposition. The goal of Section~\ref{sec:SVD} is to determine this and obtain the desired bounds. The proof of Theorem~\ref{thm:explicit_bound} is given in Section~\ref{sec:TheActualProof}.

\subsection{Bounding $\|\bE_L^{(d,n)}\|_\infty$ and $q_L^{(d,n)}$} \label{sec:SVD}

Recall that $\rho=\idty/2$. Since $\bE_L^{(d,n)}: M_2(\bC) \to M_2(\bC)^{\otimes d-1}$ and $\cB := \{\idty, \, \sigma^X, \, \sigma^Y, \, \sigma^Z\}$ is an orthogonal basis with respect to the Hilbert-Schmidt inner product, one can write
\begin{equation}\label{eq:EL_SVD_decomp}
\E_L^{(d,n)} = \ketbra*{\E_L^{(d,n)}(\idty)}{\rho} + \sum_{U=X,Y,Z} \ketbra*{\E_L^{(d,n)}(\sigma^U)}{S^U}.
\end{equation}

What is special in this case (and is proved in Lemma~\ref{lem:SVD_Decomp} below) is that
\[\left\{\E_L^{(d,n)}(\idty),\,\E_L^{(d,n)}(\sigma^X), \,\E_L^{(d,n)}(\sigma^Y),\, \E_L^{(d,n)}(\sigma^Z) \right\}\subseteq M_2(\bC)^{\otimes d-1}\] 
is also an orthogonal set. Therefore, up to constants, \eqref{eq:EL_SVD_decomp} is the desired SVD. This result is stated with respect to the following matrices, which depend on a parameter $\alpha \in \bC$:
\begin{align}
	V^{(d-1)}_{\rho}(\alpha) &=  \sum_{r=0}^{\floor{\half[d-1]}} \frac{\alpha^{2r}}{(2r+1)!!} M_r^{(d-1)} \label{eq:V1}\\
	V^{(d-1)}_{S^U}(\alpha)	&= -\sum_{r=0}^{\floor{\half[d-2]}} \sum_{j=1}^{d-1} \frac{\alpha^{2r+1}}{(2r+3)!!} S^U_j\otimes (M_{r}^{(d-2)})_{[d-1]\setminus \{j\}} \label{eq:VU}
\end{align}
Above, $U \in \{X,Y,Z\}$ and the notation $(M_{r}^{(d-2)})_{[d-1]\setminus \{j\}} $ denotes the matching operator associated with the $d-2$ indices that remain after removing $j$ from $[d-1]$.

\begin{lemma}[SVD of $\bE_L^{(d,n)}$]\label{lem:SVD_Decomp} Let $d\geq 3$, $n\geq 0,$ and set $\alpha_n = 2(-1/3)^n$. Then, 
\begin{equation}\label{eq:EL_on_basis}
	\bE_L^{(d,n)}(A) = \frac{d+1}{2^{d-1}}V_{A/2}^{(d-1)}(\alpha_n), \qquad \forall \, A \in \cB.
\end{equation}
The set $\{V_{A/2}^{(d-1)}(\alpha_n) : A\in \cB \}$ is orthogonal, and the singular values of $\bE_L^{(d,n)}$ are
\begin{equation}
	\left\{ \frac{d+1}{2^{d-1/2}} \| V_{A/2}^{(d-1)}(\alpha_n)\|_{2} : A\in \cB \right\}.
\end{equation}
\end{lemma}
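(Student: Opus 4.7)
The plan is to reduce to the $n=0$ case via \eqref{eq:n-to-0_relation} and then compute $\bE_L^{(d,0)}$ explicitly on each element of $\cB$. The key observation is that every building block appearing in $V_\rho^{(d-1)}$ and $V_{S^U}^{(d-1)}$ is an eigenvector of $\bE^{\otimes(d-1)}$: from \eqref{eq:Spin1_transfer_op} one has $\bE(\sigma^U) = -\sigma^U/3$, so $\bE\otimes \bE(\heis{a}{b}) = \heis{a}{b}/9$, and hence $\bE^{\otimes m}(M_r^{(m)}) = 9^{-r} M_r^{(m)}$ for every $r$. Consequently, $n$ applications of $\bE^{\otimes(d-1)}$ send $4^r \mapsto 4^r 9^{-nr} = (4/9^n)^r = \alpha_n^{2r}$ inside the $V_\rho$ formula, and $2\cdot 4^r \mapsto 2(-1/3)^n(4/9^n)^r = \alpha_n^{2r+1}$ inside the $V_{S^U}$ formula. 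So once \eqref{eq:EL_on_basis} is established at $n=0$ (where $\alpha_0 = 2$), the general statement follows.

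For $A = \idty$, using $W_k^d = K V_k^d$ together with $K^*K = \idty$ reduces the sum to $\sum_k (V_k^d)^* V_k^d$, which evaluates to $\tfrac{d+1}{d} P_{\rm sym}^{(d-1)}$ by a direct partial-isometry computation; Theorem~\ref{thm:Psym_Sr} then yields $\tfrac{d+1}{2^{d-1}} V_\rho^{(d-1)}(2)$. For $A = \sigma^U$ I would first compute $K^* \sigma^U K = \epsilon_U \sigma^U$ case by case (obtaining $\epsilon_X = \epsilon_Z = -1$ and $\epsilon_Y = +1$), and then evaluate $\sum_k (V_k^d)^* \sigma^U V_k^d$ on the symmetric basis $\{|\phi_l^{d-1}\rangle\}$ using the spin-$(d-1)/2$ identification $|\phi_l^{d-1}\rangle = |s, l-s\rangle$ with $s=(d-1)/2$. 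A short calculation then shows that the $U$-dependent signs arising in this evaluation combine with $\epsilon_U$ to produce the uniform identity $\bE_L^{(d,0)}(\sigma^U) = -\tfrac{2}{d} P_{\rm sym}^{(d-1)} S^U_{\rm tot} = -\tfrac{2}{d}\sum_j P_{\rm sym}^{(d-1)} S^U_j$. To recast this in the matching-operator basis I would expand $P_{\rm sym}^{(d-1)}$ via Theorem~\ref{thm:Psym_Sr} and split each $M_{r'}^{(d-1)}$ according to whether the site $j$ lies in some pair of the matching, producing terms of the form $S^U_j \otimes (M_r^{(d-2)})_{[d-1]\setminus\{j\}}$ together with cross terms $S_{(j,k)} S^U_j \otimes (M_{r-1}^{(d-3)})_{[d-1]\setminus\{j,k\}}$. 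The antisymmetric part of $S_{(j,k)} S^U_j$, namely $\tfrac{i}{2}\sum_{V,W}\epsilon_{VUW} S^W_j\otimes S^V_k$, cancels upon summing over $j$ and $k$ because $S^W_j S^V_k + S^V_j S^W_k$ is symmetric in $V,W$ while $\epsilon_{VUW}$ is antisymmetric. The combinatorial identity $\sum_{k\neq j}(M_{r-1}^{(d-3)})_{[d-1]\setminus\{j,k\}} = (d-2r)(M_{r-1}^{(d-2)})_{[d-1]\setminus\{j\}}$ (obtained by counting the free indices of an $(r-1)$-matching on $d-2$ elements) then lets the $r'=r$ and $r'=r+1$ contributions combine into a single coefficient that simplifies via $(2r+3)+(d-2r-2) = d+1$ to exactly $\tfrac{d+1}{2^{d-1}} V_{S^U}^{(d-1)}(2)$.

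Orthogonality of $\{V_{A/2}^{(d-1)}(\alpha_n) : A\in \cB\}$ follows from a symmetry argument. The element $V_\rho^{(d-1)}(\alpha_n) \in \cZ^{(d-1)}$ is an $SU(2)$-scalar, while each $V_{S^U}^{(d-1)}(\alpha_n)$ is a vector operator with polarization $U$ (built from a single $S^U_j$ tensored with matching operators on the remaining $d-2$ sites). Since the Hilbert-Schmidt pairing is $SU(2)$-invariant, averaging the pairing over $SU(2)$ forces scalar-vector traces to vanish and vector-vector traces to be proportional to $\delta_{U,U'}$. The singular values are then read off immediately: $\{A/\sqrt 2 : A\in \cB\}$ is an orthonormal basis of $M_2(\bC)$ whose image under $\bE_L^{(d,n)}$ is the orthogonal family $\{\tfrac{d+1}{\sqrt 2\cdot 2^{d-1}} V_{A/2}^{(d-1)}(\alpha_n) : A\in \cB\}$, so the singular values are the norms of these images, namely $\tfrac{d+1}{2^{d-1/2}}\|V_{A/2}^{(d-1)}(\alpha_n)\|_2$.

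The hard part will be the computation for $A=\sigma^U$: one has to track the $U$-dependent signs appearing both through $K$-conjugation and through the direct symmetric-subspace calculation, verify that they combine uniformly, and then carry out the index-$j$ decomposition of the matching operators while confirming both the vanishing of the antisymmetric cross terms and the precise collapse of coefficients. The remaining pieces---the eigenvalue step reducing $n\geq 0$ to $n=0$, orthogonality, and the singular-value identification---are either routine eigenvalue computations or standard $SU(2)$-invariance arguments.
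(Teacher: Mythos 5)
Your proof is correct, and it is complete: I verified all the intermediate identities you sketch, including $\sum_k (V_k^d)^*V_k^d = \tfrac{d+1}{d}P_{\rm sym}^{(d-1)}$, the sign bookkeeping $K^*\sigma^U K = \epsilon_U \sigma^U$ with $(\epsilon_X,\epsilon_Y,\epsilon_Z)=(-1,+1,-1)$, the resulting uniform identity $\bE_L^{(d,0)}(\sigma^U) = -\tfrac{2}{d}P_{\rm sym}^{(d-1)}S^U_{\rm tot}$, the cancellation $S_{a,b}(S^U_a+S^U_b)=\tfrac{1}{4}(S^U_a+S^U_b)$ of the antisymmetric cross terms, the counting identity with the factor $d-2r$, and the coefficient collapse via $(2r+3)+(d-2r-2)=d+1$.

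Your route, however, differs from the paper's in a meaningful way. The paper avoids evaluating $\bE_L^{(d,0)}(A)$ separately for each $A\in\cB$: it computes the Choi matrix $\tau = \sum_A\bE_L^{(d,0)}(A)\otimes\overline{A}/2$, observes that $\tau = (\idty\otimes K)P_{\rm sym}^{(d)}(\idty\otimes K^*)$ on $d$ sites, applies Theorem~\ref{thm:Psym_Sr} to $P_{\rm sym}^{(d)}$, and peels off the last tensor factor via the decomposition $M_r^{(d)} = M_r^{(d-1)}\otimes\idty + \sum_j \pi_{(j,d-1)}[M_{r-1}^{(d-2)}\otimes\heis{d-1}{d}]$ together with $KS^UK^*=-\overline{S^U}$. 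Reading off coefficients of $\overline{A}/2$ then delivers all four formulas at once. Your approach instead works directly on $d-1$ sites, handling $\idty$ (a one-line partial-isometry computation) and $\sigma^U$ (via the structural identity $\bE_L^{(d,0)}(\sigma^U)=-\tfrac{2}{d}P_{\rm sym}^{(d-1)}S^U_{\rm tot}$, followed by splitting each $M_{r'}^{(d-1)}$ according to whether the chosen index $j$ lies in $\cup p$) as two separate calculations. Both rely on Theorem~\ref{thm:Psym_Sr} and both involve a single-index surgery on matching operators, so the combinatorial core is parallel; the paper's Choi-matrix framing trades your two-case split for one slightly larger ($d$ versus $d-1$) expansion, while your intermediate identity $\bE_L^{(d,0)}(\sigma^U)=-\tfrac{2}{d}P_{\rm sym}^{(d-1)}S^U_{\rm tot}$ is more structurally transparent and makes the $SU(2)$-covariance manifest. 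For orthogonality, you use $SU(2)$-invariance (scalar/vector decomposition and Schur), while the paper uses a simpler parity count on the number of $S^U$ factors; either is fine. The eigenvalue reduction $n\geq 0\to n=0$ and the singular-value read-off are the same in both.
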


\begin{proof}
Notice that it is sufficient to prove the result for $n=0$, as $\bE(S^U) = -\frac{1}{3} S^U$ for each $U=X,Y,Z$ implies
	\begin{align*}
  \bE^{\otimes d-1} ( M_r^{(d-1)} ) = \frac{1}{3^{2r}} M_r^{(d-1)}, \quad  \bE^{\otimes d-1} ( S^U\otimes M_r^{(d-2)}) = -\frac{1}{3^{2r+1}} S^U\otimes M_r^{(d-2)},
	\end{align*}
from which the general case follows from \eqref{eq:n-to-0_relation}. To simplify notation, we drop the superscript and write $\bE_L$ for $\bE_L^{(d,0)}$ in the remainder of this proof. 

Considering \eqref{eq:EL_SVD_decomp}, we proceed by calculating the Choi matrix $\tau$ corresponding to $\bE_L$ as a simple, but tedious, calculation shows
\begin{equation}\label{eq:Choi}
\tau := \sum_{i,j \in \{ \uparrow, \downarrow\}}  \bE_L(\dyad{i}{j}) \otimes \dyad{i}{j} = \sum_{A\in \cB}  \bE_L(A) \otimes \overline{A}/2.
\end{equation}
Using the  the definition of $\bE_L$ \eqref{eq:undecorated_transfer_op} and the recursion relation~\eqref{eq:symmetric-vector-recursion}, one quickly finds
\begin{align}
\sum_{i,j \in \{ \uparrow, \downarrow\}}  \bE_L(\dyad{i}{j}) \otimes \dyad{i}{j} & = (\idty_{2^{d-1}} \otimes K) P^{(d)}_{\text{sym}} (\idty_{2^{d-1}} \otimes K^*)\nonumber\\
& =  \sum_{r=0}^{\lfloor\frac{d}{2} \rfloor}c^{(d)}(r) (\idty_{2^{d-1}} \otimes K) M_r^{(d)} (\idty_{2^{d-1}} \otimes K^*) \label{eq:choi_2}
\end{align}
where the last equality follows from Theorem~\ref{thm:Psym_Sr}. We use properties of the set of matchings to rewrite $\tau$ in the latter form from \eqref{eq:Choi}.

Fix $0 \leq r \leq \floor{d/2}$. For each matching $p\in \mathcal M^d_r$ either $(j,d)\in p$ for some $1\leq j \leq d-1$, or $d\notin \cup p$ (in which case $p\in \cM_r^{d-1}$). Therefore, $\cM_r^{d}$ can be partitioned as
\begin{equation}
	\mathcal M^d_r = \mathcal M^{d-1}_r \cup \bigcup_{j=1}^{d-1} \left\{ \{(j,d)\} \cup x \mid x \in \mathcal M^{d-2}_{r-1} \right\}
\end{equation}
where $\cM^{d-2}_{r-1}$ is interpreted as the set of matchings on $[d-1]\setminus \{j\}$. The corresponding matching operator then factorizes as
\begin{equation}\label{eq:Mr_last_factor}
	M_r^{(d)} = M_r^{(d-1)} \otimes \idty + \sum_{j=1}^{d-1} \pi_{(j,d-1)}[ M_{r-1}^{(d-2)}\otimes (\heis{d-1}{d}) ]
\end{equation}
where $\pi_{(j,d-1)}$ is the permutation which swaps the tensor factors $d-1$ and $j$, and any previously undefined matching operators are taken to be zero, e.g. $M_{-1}^{(d-2)}=0$. 

Substituting \eqref{eq:Mr_last_factor} into \eqref{eq:choi_2}, one then finds
\begin{align}
\tau & = \sum_{r=0}^{\lfloor \frac{d}{2} \rfloor} c^{(d)}(r)\qty[ M_r^{(d-1)} \otimes \idty + \sum_{j=1}^{d-1} \pi_{(j,d-1)}[ M_{r-1}^{(d-2)} \otimes (\bm{S}_{d-1} \cdot K\bm{S}_d K^*)] ]. \label{eq:matchings_tau}
\end{align}
Consider the two summations above separately. Inserting the value of $c^{(d)}(r)$ from Theorem~\ref{thm:Psym_Sr}, the first summation can be rewritten as:
\begin{align}
\sum_{r=0}^{\lfloor \frac{d}{2} \rfloor} c^{(d)}(r)M_r^{(d-1)} \otimes \idty  
& = \frac{d+1}{2^d}\sum_{r=0}^{\lfloor \frac{d-1}{2} \rfloor} \frac{2^{2r}}{(2r+1)!!} M_r^{(d-1)} \otimes \idty \nonumber \\
& = \frac{d+1}{2^{d-1}}V_{\rho}^{(d-1)}(2)\otimes \overline{\idty/2}. \label{eq:SVD1}
\end{align}
The bound change in the first equality follows since either $\floor{\frac{d}{2}} = \floor{\frac{d-1}{2}}$, or $\floor{\frac{d}{2}}>\frac{d-1}{2}$ in which case the matching operator $M^{(d-1)}_{\floor{\frac{d}{2}}}=0.$

For the remaining terms from \eqref{eq:matchings_tau}, applying the relation $KS^UK^* = -\overline{S^U}$ and again inserting $c^{(d)}(r)$ shows
\begin{align}
\sum_{r=0}^{\lfloor \frac{d}{2} \rfloor} c^{(d)}(r)\sum_{j=1}^{d-1} &\pi_{(j,d-1)}[ M_{r-1}^{(d-2)} \otimes (\bm{S}_{d-1} \cdot K\bm{S}_d K^*)] \nonumber\\
& = -\frac{d+1}{2^{d-1}}\sum_{r=1}^{\lfloor \frac{d-2}{2} \rfloor +1} \frac{2^{2r-1}}{(2r+1)!!} \sum_{j=1}^{d-1} \pi_{(j,d-1)}[ M_{r-1}^{(d-2)} \otimes (\bm{S}_{d-1} \cdot \overline{\bm{S}_d})] \nonumber \\
& = \frac{d+1}{2^{d-1}}\sum_{U=X,Y,Z} V_{S^U}^{(d-1)}(2)\otimes \overline{S^U}, \label{eq:SVD2}
\end{align}
where, for the last line we have used that
\[
 \pi_{(j,d-1)}[ M_{r-1}^{(d-2)} \otimes (\bm{S}_{d-1} \cdot \overline{\bm{S}_d})]  = \left( M_{r-1}^{(d-2)} \right)_{[d-1]\setminus\{j\}}\otimes (\bm{S}_{j} \cdot \overline{\bm{S}_d}).
\]
Inserting \eqref{eq:SVD1}-\eqref{eq:SVD2} into \eqref{eq:matchings_tau} and comparing with \eqref{eq:Choi} establishes \eqref{eq:EL_on_basis}. 

To see that the set of operators $\{V_{A/2}^{(d-1)}(2): A\in \cB\}$ forms an orthogonal set, fix $p=\{(i_1,j_1), \ldots, (i_r,j_r)\}\in \cM_r^m$ and recall that by \eqref{eq:matching_ops}
\[
S_p = \sum_{U_1, \ldots, U_r}\bigotimes_{k=1}^r S_{i_k}^{U_k}\otimes S_{j_k}^{U_k},
\]
is a sum of simple tensors, each of which has an even number of the spin operator $S^U$ for all $U= X,Y,Z$. As the $r$-th matching operator is the sum of all such $S_p$, by  \eqref{eq:V1} $V_\rho^{(d-1)}(2)$ is also a sum of simple tensors, each of which has an even number of the spin operator $S^U$ for all $U$. The same argument shows that $V_{S^U}^{(d-1)}(2)$ is a sum of simple tensors with an odd number of $S^U$, and an even number of $S^V$ for $V\neq U$. Since $\cB$ is an orthogonal basis of $M_2(\bC)$, the orthogonality claim is a consequence of these observations. The set of singular values then follows from normalizing \eqref{eq:EL_SVD_decomp} appropriately.

\end{proof}

We now produce the necessary bounds on $q_L^{(d,n)}$ and $\|\bE_L^{(d,n)}\|_\infty$ to prove Theorem~\ref{thm:explicit_bound}.

\begin{lemma}\label{lem:quantity_bounds} Fix $d\geq 3$. The minimal eigenvalue $q_L^{(d,n)}$ of $Q_L^{(d,n)}=\bE_L^{(d,n)}(\idty)$ satisfies
	\begin{equation}\label{eq:q_L}
			q_L^{(d,n)} \geq  \frac{d+1}{3\cdot 2^{d-1}}\left(4- (1+3^{-n+1/2})^{d-1}\right).
	\end{equation}
In particular, $Q_L^{(d,n)}$ is invertible for any $n\geq \frac{\ln(d-1)}{\ln(3)}-\frac{\ln(\ln(3))}{\ln(3)}+\frac{1}{2}$. For such $n$, one also has
	\begin{equation}\label{eq:EL_infty_norm}
		\|\bE_L^{(d,n)}\|_{\infty}^2 = \frac{(d+1)^2}{2^{2d-1}}\|V_{\rho}^{(d-1)}(\alpha_n)\|_2^2 \leq \frac{(d+1)^2}{3\cdot 2^d}\left(2+(1+3^{-2n})^{d-1}\right).	
	\end{equation}
\end{lemma}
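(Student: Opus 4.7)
The plan is to deduce both parts of Lemma~\ref{lem:quantity_bounds} from the SVD in Lemma~\ref{lem:SVD_Decomp}, together with the orthogonality of the matching operators (Proposition~\ref{prop:matchings-are-basis}) and their norm formula (Lemma~\ref{lemma:matchings-norm}). The only non-routine input is a positivity argument used to pin down which singular vector of $\bE_L^{(d,n)}$ realizes the Schatten-$\infty$ norm.

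For the lower bound on $q_L^{(d,n)}$, I would rewrite $Q_L^{(d,n)} = \frac{d+1}{2^{d-1}}V_\rho^{(d-1)}(\alpha_n) = \frac{d+1}{2^{d-1}}\bigl(\idty + \sum_{r\ge 1}\frac{\alpha_n^{2r}}{(2r+1)!!}M_r^{(d-1)}\bigr)$ using \eqref{eq:EL_on_basis}. Since each factor $\bm S_i\cdot\bm S_j$ has operator norm $3/4$ (its spectrum is $\{-3/4,1/4\}$), the triangle inequality gives $\|M_r^{(d-1)}\|_\infty \le |\cM_r^{d-1}|(3/4)^r$. Combining with \eqref{eq:matching_count} and the identity $(2r-1)!!/(2r+1)!! = 1/(2r+1)$, and setting $u := 3\alpha_n^2/4 = 3^{1-2n}$, one obtains $\frac{2^{d-1}q_L^{(d,n)}}{d+1} \ge 1 - \sum_{r\ge 1}\frac{u^r}{2r+1}\binom{d-1}{2r}$. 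The term-by-term comparison $\sum_{r\ge 1}\frac{u^r}{2r+1}\binom{d-1}{2r} \le \frac{(1+\sqrt u)^{d-1}-1}{3}$ -- using $1/(2r+1)\le 1/3$ for $r\ge 1$ on the even-power terms and retaining the nonnegative odd-power terms of $(1+\sqrt u)^{d-1}$ on the right -- rearranges into \eqref{eq:q_L} with $\sqrt u = 3^{-n+1/2}$. Invertibility for $n$ in the stated range then follows from $(1+3^{-n+1/2})^{d-1} \le e^{(d-1)3^{-n+1/2}} \le 3 < 4$.

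For the equality in \eqref{eq:EL_infty_norm}, Lemma~\ref{lem:SVD_Decomp} already identifies the singular values of $\bE_L^{(d,n)}$ as $\frac{d+1}{2^{d-1/2}}\|V_{A/2}^{(d-1)}(\alpha_n)\|_2$ for $A\in\cB$, so it remains to show the maximum occurs at $A=\idty$. For this I would use that $\bE_L^{(d,n)}$ is a composition of completely positive maps and hence positive: from $\idty\pm\sigma^U\ge 0$ one gets $-\bE_L^{(d,n)}(\idty) \le \bE_L^{(d,n)}(\sigma^U) \le \bE_L^{(d,n)}(\idty)$. The elementary identity $\tr(A^2) - \tr(B^2) = \tr((A-B)(A+B))$ combined with the fact that the trace of a product of two positive operators is nonnegative then forces $\|\bE_L^{(d,n)}(\sigma^U)\|_2 \le \|\bE_L^{(d,n)}(\idty)\|_2$, i.e., $\|V_{S^U}^{(d-1)}(\alpha_n)\|_2 \le \|V_\rho^{(d-1)}(\alpha_n)\|_2$.

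The upper bound on $\|V_\rho^{(d-1)}(\alpha_n)\|_2^2$ is then a direct calculation: orthogonality of matching operators together with Lemma~\ref{lemma:matchings-norm} yields $\|V_\rho^{(d-1)}(\alpha_n)\|_2^2 = 2^{d-1}\sum_{r\ge 0}\frac{T^{2r}}{2r+1}\binom{d-1}{2r}$ with $T := \alpha_n^2/4 = 3^{-2n}$, and using $\frac{1}{2r+1}\binom{d-1}{2r} = \frac{1}{d}\binom{d}{2r+1}$ together with the binomial expansion gives the closed form $\frac{(1+T)^d - (1-T)^d}{2dT}$. The remaining inequality $\frac{(1+T)^d - (1-T)^d}{2dT} \le \frac{2 + (1+T)^{d-1}}{3}$ follows coefficient-by-coefficient in $T$: both sides equal $1$ at order $T^0$; for $T^{2j}$ with $j\ge 1$ the comparison reduces to $1/(2j+1) \le 1/3$; and the odd $T^{2j+1}$ terms appear only on the right with nonnegative sign. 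Multiplying through by $\frac{(d+1)^2}{2^{2d-1}}$ produces the stated bound. The step I expect to be the main obstacle is the identification of the maximum singular vector in the second paragraph, since a direct comparison of the explicit formulas for $\|V_\rho\|_2$ and $\|V_{S^U}\|_2$ appears considerably more involved than the positivity shortcut; the remaining two parts are routine binomial manipulations once the framework is in place.
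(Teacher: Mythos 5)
Your proof is correct, and for one step it takes a genuinely different route from the paper. For the lower bound on $q_L^{(d,n)}$ and the final upper bound on $\|V_\rho^{(d-1)}(\alpha_n)\|_2^2$, your argument is essentially the paper's: bound $\|M_r^{(d-1)}\|$ by $(3/4)^r|\mathcal{M}_r^{d-1}|$, absorb the $1/(2r+1)$ factor into a $1/3$ for $r\ge 1$, and dominate the even-index binomial sum by the full binomial expansion of $(1+\cdot)^{d-1}$. (Your passage through the closed form $\bigl((1+T)^d-(1-T)^d\bigr)/(2dT)$ is a cosmetic detour; the coefficient-wise comparison is the same.)

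Where you diverge is in identifying the maximal singular value. The paper explicitly evaluates $\|V_{S^U}^{(d-1)}(\alpha_n)\|_2^2$ via a recursion built from the decomposition \eqref{eq:Mr_last_factor} (this is Eq.~\eqref{eq:V_Su_norm}), and then compares the two explicit sums term by term, recording that the comparison goes the right way once $n+1\ge \ln\bigl(\tfrac{d-1}{2}\bigr)/\ln 9$. You bypass this entirely: since $\bE_L^{(d,n)}$ is a composition of CP maps, $\idty\pm\sigma^U\ge 0$ gives $-\bE_L^{(d,n)}(\idty)\le \bE_L^{(d,n)}(\sigma^U)\le \bE_L^{(d,n)}(\idty)$, and then $\tr(A^2)-\tr(B^2)=\tr\bigl((A-B)(A+B)\bigr)\ge 0$ for Hermitian $A,B$ with $A\pm B\ge 0$ immediately yields $\|\bE_L^{(d,n)}(\sigma^U)\|_2\le\|\bE_L^{(d,n)}(\idty)\|_2$. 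This is a real simplification: it removes the need to compute $\|V_{S^U}^{(d-1)}(\alpha_n)\|_2^2$ at all, and it holds for every $n\ge 0$ rather than under an additional (albeit harmless) decoration threshold. The trade-off is that the paper's computation of $\|V_{S^U}\|_2^2$ has some independent value as an explicit formula, but for the purposes of this lemma your positivity shortcut is cleaner and more robust.

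One minor point: when you write that the closed form is $\frac{(1+T)^d-(1-T)^d}{2dT}$, you have dropped the prefactor $2^{d-1}$; the subsequent inequality is stated for the sum without the prefactor, so the bookkeeping still closes, but you should keep track of it carefully to avoid confusion.
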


By Lemma~\ref{lem:SVD_Decomp}, it is simple to calculate the spectrum of $Q_L^{(d,n)}$ and $\|\E_L^{(d,n)}\|_\infty$ directly for small $d$ using Lemma~\ref{lemma:matchings-norm} and \eqref{eq:V_idty_norm}. Here, is also convenient to use \eqref{eq:Casimir} and \eqref{eq:Casimir_oneMatching}. For $1\leq d \leq 4$, this produces the values in Table~\ref{tab:small_d}. A similar calculation can be done for other small values of $d$ by apply the recursion relation from Lemma~\ref{lemma:product-m1} to write $M_r^{(d-1)}$ as a polynomial in $M_1^{(d-1)}$ and again invoking the relationship to the Casimir operator. 

Furthermore, using the recursion relation one can show that the eigenvalue of $M_r^{(d-1)}$ corresponding to the subspace of minimal spin $j_0$ is 
\begin{equation}
    (-1)^r \binom{ \floor{\half[d-1]}}{r} \frac{(2r+1)!!}{4^r}, 
\end{equation}
from which \eqref{eq:V1} implies that
\begin{equation}
    \frac{d+1}{2^{d-1}} \qty(1-3^{-2n})^{\floor{\half[d-1]}} \in \spec Q_L^{(d,n)}.
\end{equation}
We conjecture that this is $q_L^{(d,n)}$, which has been verified with the help of a computer algebra system for $d \le 30$. This would imply that $Q_L^{(d,n)}$ is invertible when $n\ge 1$ for all values of $d$. It also implies that the function $f(d)$ in Theorem~\ref{thm:explicit_bound} can be improved, although doing so would not change the asymptotic scaling of $n(\Delta(G))$.

\begin{table}
	\begin{center}
		\begin{tabular}{|c|c|c|}
			\hline
			$d$ & $q_L^{(d,n)}$ &  $\|\E_L^{(d,n)}\|_\infty$ \\
			\hline
            1 & $2$ & $\sqrt{2}$ \\
            2 & $\frac{3}{2}$ & $\frac{3}{2}$ \\
			3 & $1-3^{-2n}$ &  $\sqrt{2}(1+3^{-4n-1})^{1/2}$\\
			4 & $\frac{5}{8}(1-3^{-2n})$ & $\frac{5}{4}(1+3^{-4n})^{1/2}$\\
			\hline
		\end{tabular}
	\end{center}
	\caption{Exact values of $q_L^{(d,n)}$ and $\|\bE_L^{(d,n)}\|_\infty$ for $1\leq d \leq 4$.}
	\label{tab:small_d}
\end{table}

\begin{proof}
	Note that Lemma~\ref{lem:SVD_Decomp} and \eqref{eq:EL_SVD_decomp} imply $Q_L^{(d,n)} = \frac{d+1}{2^{d-1}}V^{(d-1)}_{\rho}(\alpha_n)$, and define
	\[
	R_L := V^{(d-1)}_{\rho}(\alpha_n) - \idty = \sum_{r=1}^{\floor{\frac{d-1}{2}}} \frac{\alpha_n^{2r}}{(2r+1)!!} M_r^{(d-1)}.
	\]
Hence, 
\begin{equation}\label{eq:spectral_bounds}
\spec Q_L^{(d,n)}\subseteq \left[\frac{d+1}{2^{d-1}}(1-\|R_L\|), \frac{d+1}{2^{d-1}}(1+\|R_L\|)\right].
\end{equation}
To bound $\|R_L\|$, first use $\|\bm{S}\cdot\bm{S}\| = \frac{3}{4}$ and \eqref{eq:matching_count} to bound the operator norm
\begin{equation} \label{eq:M_op_bound}
 \|M_r^{(m)}\| \leq \frac{3^r}{4^{r}}(2r-1)!!\binom{m}{2r}.   
\end{equation}
Then, since $\alpha_n^2 = 4\cdot 3^{-2n}$, the operator norm of $R_L$ is bounded by
\begin{align}
	\|R_L\| & \leq \sum_{r=1}^{\floor{\frac{d-1}{2}}} \frac{\alpha_n^{2r}}{(2r+1)!!} \|M_r^{(d-1)}\|  \leq \frac{1}{3}\sum_{r=1}^{d-1}\frac{1}{3^{(n-1/2)r}} \binom{d-1}{r} = \frac{(1+\frac{\sqrt{3}}{3^n})^{d-1}-1}{3} \label{eq:RL_bound}
\end{align}

Hence, \eqref{eq:q_L} holds from substituting \eqref{eq:RL_bound} into the lower bound from \eqref{eq:spectral_bounds}. That $Q_L^{(d,n)}$ is invertible follows from using $\binom{d-1}{r} \leq \frac{(d-1)^r}{r!}$ to further bound \eqref{eq:RL_bound} by
\[ 
\frac{1}{3}\sum_{r=1}^{d-1}\frac{1}{3^{(n-1/2)r}} \binom{d-1}{r} \leq  \frac{e^{\frac{d-1}{3^{n-1/2}}}}{3}
\]
which is less than one when $n > \tfrac{\ln(d-1)}{\ln(3)}-\tfrac{\ln(\ln(3))}{\ln(3)}+\frac{1}{2}.$

For the largest singular value of $\bE_L^{(d,n)}$, Lemma~\ref{lem:SVD_Decomp} implies that for any choice of $U$:
\begin{equation}\label{eq:EL_norm_general}
\|\bE_L^{(d,n)}\|_{\infty}^2 = \frac{(d+1)^2}{2^{2d-1}}\max\left\{\|V_{\rho}^{(d-1)}(\alpha_n)\|_2^2, \, \|V_{S^U}^{(d-1)}(\alpha_n)\|_2^2\right\}.
\end{equation}
In the case of $V_{\rho}^{(d-1)}(\alpha_n)$, applying the mutual orthogonality of the matching operators and Lemma~\ref{lemma:matchings-norm} it is straightforward to calculate
\begin{align}
	\|V^{(d-1)}_{\rho}(\alpha_n)\|_2^2 & = 
	2^{d-1}\sum_{r=0}^{\floor{\half[d-1]}} \qty(\frac{\alpha_n}{2})^{4r} \frac{1}{2r+1}\binom{d-1}{2r}  \nonumber \\
	& = \frac{2^{d-1}}{d}\sum_{r=0}^{\floor{\half[d-1]}} 3^{-4rn} \binom{d}{2r+1}  \label{eq:V_idty_norm}.
\end{align}
where in the last equality we have substituted $\alpha_n = 2\cdot(-3)^{-n}$.

In the case of $V_{S^U}^{(d-1)}(\alpha_n)$, first define
\begin{align*}
	W^{(d-1)}_{S^U}(r)  & := \sum_{j=1}^{d-1} S^U_j\otimes (M_{r}^{(d-2)})_{[d-1]\setminus \{j\}}, \quad 0 \leq r \leq \floor{\tfrac{d-2}{2}}
\end{align*}
As the matching operators are mutually orthogonal, these are also mutually orthogonal with respect to the Hilbert-Schmidt norm. Therefore,
\[
\|V_{S^U}^{(d-1)}(\alpha_n)\|_2^2 = \sum_{r=0}^{\floor{\frac{d-2}{2}}}\left(\frac{\alpha_n^{2r+1}}{(2r+3)!!}\right)^2\|W^{(d-1)}_{S^U}(r)\|_2^2.
\]
The permutation invariance of $W^{(d-1)}_{S^U}(r)$ moreover implies
\begin{align*}
\|W_{S^U}^{(d-1)}(r)\|_2^2  = & \frac{d-1}{2}\|M_r^{(d-2)}\|_2^2 + \binom{d-1}{2}\left\langle  S^U_{d-1}\otimes M_{r}^{(d-2)},  S^U_{d-2}\otimes (M_{r}^{(d-2)})_{[d-1]\setminus \{d-2\}} \right\rangle .
\end{align*}
The remaining inner product can be calculated by once again using \eqref{eq:Mr_last_factor} to decompose the matching operators over an appropriate index, and then invoking the orthogonality of the basis $\cB$. This produces the recursive formula
\[
\|W_{S^U}^{(d-1)}(r)\|_2^2 = \frac{d-1}{2}\|M_r^{(d-2)}\|_2^2 +\frac{(d-1)!}{2^2(d-3)!}\|W_{S^U}^{(d-3)}(r-1)\|_2^2.
\]
Iterating this identity $r$-times and again applying Lemma~\ref{lemma:matchings-norm}, one deduces 
\begin{align}
	\|V^{(d-1)}_{S^U}(\alpha_n)\|_2^2 & = \sum_{r=0}^{\floor{\half[d-2]}}\left(\frac{\alpha_n^{2r+1}}{(2r+3)!!}\right)^2\frac{(d-1)!}{(d-2r-2)!}2^{d-4r-3}\sum_{s=0}^r \frac{(2s+ 1)!!}{(2s)!!} \nonumber\\
	& = \frac{2^{d-1}}{3^{2n+1}d(d+1)} \sum_{r=0}^{\floor{\half[d-2]}} 3^{-4rn} \binom{d+1}{2r+3} \label{eq:V_Su_norm}
	%& \leq \alpha_n^22^{d-3}\sum_{r=0}^{\floor{\half[d-2]}}\left(\frac{\alpha_n}{2}\right)^{4r}\frac{r+1}{(2r+3)^2}\binom{d-1}{2r+1}\\
	%& \leq \frac{\alpha_n^22^{d-3}}{9}\sum_{r=0}^{\floor{\half[d-2]}}\left(\frac{\alpha_n}{2}\right)^{4r}\binom{d-1}{2r+1}\\
	%& =  \frac{2^{d-1}}{3^{2n+2}}\sum_{r=0}^{\floor{\half[d-2]}}\left(\frac{\alpha_n}{2}\right)^{4r}\binom{d-1}{2r+1}
\end{align}
where in the last equality, we have substituted the value for $\alpha_n$ and used 
\[\sum_{s=0}^r\frac{(2s+1)!!}{(2s)!!}=\frac{(2r+3)!!}{3(2r)!!}.\]

Comparing term by term \eqref{eq:V_idty_norm} with \eqref{eq:V_Su_norm}, one finds that the largest singular value corresponds to $\|V_{\rho}^{(d-1)}(\alpha_n)\|_2$ as long as $n+1\geq \ln(\frac{d-1}{2})/\ln(9)$. Then, arguing similarly to \eqref{eq:RL_bound}, 
\[
\|V_{\rho}^{(d-1)}(\alpha_n)\|_2^2 = 2^{d-1}\sum_{r=0}^{\floor{\half[d-1]}} \frac{3^{-4rn}}{2r+1}\binom{d-1}{2r} \leq 2^{d-1}\left(1+\frac{(1+3^{-2n})^{d-1}-1}{3}\right)
\]
Inserting this into \eqref{eq:EL_norm_general} produces \eqref{eq:EL_infty_norm}.
\end{proof}

\subsection{Proof of Theorem~\ref{thm:explicit_bound}}\label{sec:TheActualProof}
\begin{proof}[Proof of Theorem~\ref{thm:explicit_bound}]
Let $G = (V,E)$ be any simple graph such that $\Delta(G) = \sup_{v\in V} \deg(v) \geq 3$. As discussed in Section~\ref{sec:main_result}, Theorem~\ref{thm:explicit_bound} follows immediately from proving
\[
\epsilon_G(n) := \sup_{(v_L,v_R)\in E} \epsilon_{(v_L,v_R)}(n) < \frac{1}{\Delta(G)}.
\]
Set $d_{\#}=\deg(v_{\#})$ for $\#\in\{L,R\}$. Then Proposition~\ref{prop:espilon-estimate} shows that, as long as the maximum assumption is satisfied, $\epsilon_G(n) \leq \sup_{(v_L,v_R)\in E}\delta_{d_L,d_R}(n)$, where
\[
\delta_{d_L,d_R}(n) =  4 a(n) \left(\frac{1}{\sqrt{(1-b(d_L,n))(1-b(d_R,n))}} + \frac{4 a(n) + b(d_L,n)b(d_R,n)}{(1-b(d_L, n))(1-b(d_R,n))}\right).
\]
and $b(d,n)= 4a(n) \tfrac{\|\bE_L^{(d,n)}\|_\infty}{q_L^{(d,n)}}$ by \eqref{eq:b_reduction}.

Now, define the function
\begin{equation*}
    c(d,n) = 4a(n)\cdot
    \begin{cases}
        \displaystyle\frac{\|\bE_L^{(d,n)}\|_\infty}{q_L^{(d,n)}}, & 1\leq d \leq 4 \\
        \displaystyle2^{d/2-1}\frac{\sqrt{3[2+(1+3^{-2n})^{d-1}]}}{4-(1+3^{-n+1/2})^{d-1}} & d>4
    \end{cases}
\end{equation*}
The constraint $n\geq n(\Delta(G))$ is sufficient to guarantee that the denominator above is strictly positive by Lemma~\ref{lem:quantity_bounds}. It is then easy to check using the values from Table~\ref{tab:small_d} that (for $n$ fixed) this function is increasing in $d$ for $1\leq d\leq \Delta(G)$. Finally, by Lemma~\ref{lem:quantity_bounds}
\[
b(d_{\#}, n) \leq c(d_{\#}, n) \leq c(\Delta(G),n), \qquad \# \in \{L,R\}.
\]
To ease notation, set $\Delta(G) = D$. Two key bounds follow.

The first is that
\begin{align*}
\max\left\{b(d_L,n),\, b(d_R,n),\, \tfrac{b(d_L,n)b(d_R,n)}{4a(n)}\right\} & \leq \frac{3^n}{4}c(D,n)^2 = \frac{2^D}{3^{n-1}}\cdot \frac{2+(1+3^{-2n})^{D-1}}{[4-(1+3^{-n+1/2})^{D-1}]^2}.
\end{align*}
The last fraction in the final expression above is decreasing in $n$. As $n>\ln(2)D/\ln(3)$ for all $D\geq 3$, this is bounded by
\begin{equation}\label{eq:bG_bound}
    \frac{3^n}{4}c(D,n)^2 <  \frac{2^D}{3^{n-1}}\cdot \frac{2+(1+2^{-2D})^{D-1}}{[4-(1+\sqrt{3}\cdot 2^{-D})^{D-1}]^2} = \frac{2^Df(D)}{3^n}
\end{equation}
which is at most one for all $n\geq\frac{\ln(2)D+\ln(f(D))}{\ln(3)}$. This implies that the maximum assumption from Proposition~\ref{prop:espilon-estimate} is satisfied for any edge $(v_L,v_R)\in E$ when $n\geq n(\Delta(G)).$

The second key bound is that
\begin{align}
    \epsilon_G(n) & \leq 4a(n)\left(\frac{1}{(1-c(D,n))} + \frac{4a(n)+c(D,n)^2}{(1-c(D,n))^2}\right) \label{eq:c_delta}\\
    & < \frac{4(3^{n/2}-1)^2}{3^n(3^{n/2}-2)^2} + \frac{28}{3^n(3^{n/2}-2)^2} \nonumber
\end{align}
where the last bound uses $c(D,n) < 2/3^{n/2}$ by \eqref{eq:bG_bound}. The final expression is decreasing in $n$. Since $n> \ln(2)D/\ln(3)$, one finds $\epsilon_n<1/D$ for all $D \geq 5$. In the case of $D=3$ and $D=4$, one can use the values from Table~\ref{tab:small_d} to exactly evaluate \eqref{eq:c_delta}. This yields $\epsilon_G(n) < 1/D$ when $n\geq D$. This completes the proof.
\end{proof}

Let us now discuss the case in which the decoration number is a function of the edge, i.e., when each edge $e$ is decorated with $\bm{n}(e)$ vertices for some $\bm{n}: E \to \mathds{N}$. Let $n=\min_{e\in E} \bm{n}(e)$. We claim that the result of Theorem~\ref{thm:explicit_bound} still holds for the more generalized decorated model so long as $n\geq \Delta(G)$. 

Fix a pair $(v_L,v_R)$ of adjacent sites in $G$ with degrees $d_L$ and $d_R$. We show that $\epsilon_{(v_L,v_R)}(\bm{n}) := \norm{G_{v_L}G_{v_R} - G_{v_L} \wedge G_{v_R}}  \le \delta_{d_L,d_R}(n)$ where $\delta_{d_L, d_R}(n)$ is again the function from Proposition~\ref{prop:espilon-estimate}.

Enumerate the edges that are incident to $v_L$ but not to $v_R$ as $\{e_L^1,\dots, e_L^{d_L-1}\}$, and similarly the ones that are incident to $v_R$ but not to $v_L$ as $\{e_R^1,\dots, e_R^{d_R-1}\}$. Then the transfer operators corresponding to the regions $X_L$ and $X_R$ are respectively
\[
    \bE_L^{\bm{n}} = \bE^{\bm{n}(e_L^1)} \otimes \cdots \otimes \bE^{\bm{n}(e_L^{d_L-1})} \circ \bE_L^{(d_L,0)} = \bE^{\bm{n}(e_L^1)-n} \otimes \cdots \otimes \bE^{\bm{n}(e_L^{d_L-1})-n} \circ \bE_L^{(d_L,n)}
\]
and
\[
    \bE_R^{\bm{n}} = \bE_R^{(d_R,0)}  \circ \bE^{\bm{n}(e_R^1)} \otimes \cdots \otimes \bE^{\bm{n}(e_L^{d_R-1})} =  \bE_R^{(d_R,n)} \circ \bE^{\bm{n}(e_R^1)-n} \otimes \cdots \otimes \bE^{\bm{n}(e_R^{d_R-1})-n}.
\]
From the fact that $\norm{\cdot}_\infty$ is sub-multiplicative and $\norm{\bE}_\infty = 1$, it immediately follows that
\[
\norm{\bE_L^{\bm{n}}}_\infty \le \norm{\bE_L^{(d_L,n)}}_\infty \qc \norm{\bE_R^{\bm{n}}}_\infty \le \norm{\bE_R^{(d_R,n)}}_\infty.
\]
Now let 
\[
    Q_L^{\bm{n}} = \bE_L^{\bm{n}}(\idty) \qc  \quad  Q_R^{\bm{n}} = (\bE_R^{\bm{n}})^*(\rho),
\]
and let $q_L^{\bm{n}}$ and $q_R^{\bm{n}}$ their minimal eigenvalues, respectively. 
Then we have that
\[
    Q_L^{\bm{n}} = \bE^{\bm{n}(e_L^1)-n} \otimes \cdots \otimes \bE^{\bm{n}(e_L^{d_L-1})-n} (Q_L^{(d_L,n)}) \qc \text{with } Q_L^{(d_L,n)} = \bE_L^{(d_L,n)} (\idty).
\]
If $q_L^{(d_L,n)}$ is the minimal eigenvalue of $Q_L^{(d_L,n)}$, then $Q_L^{(d_L,n)} \ge q_L^{(d_L,n)} \idty$ and from the positivity of the transfer operators maps, it follows that
\[
Q_L^{\bm{n}} \ge q_L^{(d_L,n)}\ \bE^{\bm{n}(e_L^1)-n} \otimes \cdots \otimes \bE^{\bm{n}(e_L^{d_L-1})-n}\ (\idty) = q_L^{(d_L,n)}\  \idty,
\]
as $\bE(\idty)=\idty$. Hence, $q_L^{\bm{n}} \ge q_L^{(d_L,n)}$. A similar calculation for $Q_R^{\bm{n}}$ shows that $q_R^{\bm{n}} \ge q_R^{(d_R,n)}$.

The final step is to adapt the proof of Proposition~\ref{prop:espilon-estimate} to this case. Denote by
\[
    b_L(e,\bm{n}) = 4 a(\bm{n}(e)) \frac{ \norm{E_L^{\bm{n}}}_\infty }{q_L^{\bm{n}}} \qc
    b_R(e,\bm{n}) = 2 a(\bm{n}(e)) \frac{ \norm{E_R^{\bm{n}}}_\infty }{q_R^{\bm{n}}},
\]
where $e=(v_L, v_R)$.
From the previous discussions and the fact that $a(n)$ is monotone decreasing, we see that these quantities are not larger than the ones corresponding to the same graph with decoration $n$, namely, $b_L(d_L,n)$ and $b_R(d_R,n)$. From this is follows that 
$\epsilon_{(v_L,v_R)}(\bm{n})$ is upper bounded by $\delta_{d_L, d_R}(n)$.

\begin{acknowledgements}
A.L. was supported by grants PID2020-113523GB-I00 and CEX2019-000904-S, funded by MCIN/AEI/10.13039/501100011033,
by grant RYC2019-026475-I, funded by MCIN/AEI/ 10.13039/501100011033 and ``ESF Investing in your future'', 
and by Comunidad de Madrid (grant QUITEMAD-CM, ref. S2018/TCS-4342). %
A.Y. was supported by the DFG under EXC-2111--390814868. %
The authors acknowledge support of the Erwin Schrödinger International Institute for Mathematics and Physics (ESI), where part of this work was carried out during the ``Tensor Networks: Mathematical Structures and Novel Algorithms'' workshop.
They also thank Bruno Nachtergaele for the helpful discussions during the development of this work, as well as the reviewers whose careful assessments of our work led to improvements in our results and proofs.
\end{acknowledgements}

\section*{Author declarations}
\subsection*{Conflict of Interest}
The authors have no conflicts to disclose.
\subsection*{Author Contributions}
All authors contributed equally to this work.
\section*{Data Availability Statement}
Data sharing is not applicable to this article as no new data were created or analyzed in this study.

\appendix

\section{Generating function calculations}

We now state and prove the result used to calculate $\sum_{k=1}^r h(2r,k)3^k$ in Lemma~\ref{lemma:matchings-norm}.

\begin{lemma}\label{lemma:gf-even-cyles}

Fix $r\geq 1$ and let $h(2r,k)$ denote the number of permutations in $\mathfrak{S}_{2r}^e$ with exactly $k$ cycles.
Then 
\begin{equation}
    \sum_{k=1}^{r} h(2r,k) y^{k} = \qty(\frac{y}{2})^{\bar r} 2^r (2r-1)!!
\end{equation}
where $y^{\bar r} := y (y+1) \cdots (y+r-1)$ denotes the raising factorial.
\end{lemma}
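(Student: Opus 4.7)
The plan is to identify the sum as a coefficient of the exponential generating function (EGF) of permutations of $2r$ elements all of whose cycles have even length, with each cycle marked by $y$. By the exponential formula of enumerative combinatorics, if $C(x)$ is the EGF of the combinatorial structure ``single even cycle'' (seen as a labeled structure on its support), then $\exp(y\,C(x))$ is the EGF of disjoint unions of such cycles weighted by $y^{\#\text{cycles}}$, i.e. exactly $\sum_{r,k} h(2r,k)y^k \frac{x^{2r}}{(2r)!}$.

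First I would compute $C(x)$. The number of cyclic orderings of $2l$ labeled points is $(2l-1)!$, so the EGF of a single $2l$-cycle is $(2l-1)!\, x^{2l}/(2l)! = x^{2l}/(2l)$. Summing over $l\ge 1$ gives
\[
C(x) = \sum_{l\ge 1} \frac{x^{2l}}{2l} = -\frac{1}{2}\log(1-x^2).
\]
Therefore
\[
F(x,y) := \sum_{r\ge 0}\sum_{k=0}^{r} h(2r,k)\, y^k \frac{x^{2r}}{(2r)!} = \exp(y\,C(x)) = (1-x^2)^{-y/2}.
\]

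Next I would extract the coefficient of $x^{2r}$ using the generalized binomial theorem:
\[
(1-x^2)^{-y/2} = \sum_{r\ge 0}\binom{-y/2}{r}(-x^2)^r = \sum_{r\ge 0}\frac{(y/2)^{\bar r}}{r!}\, x^{2r},
\]
since $(-1)^r\binom{-y/2}{r} = \frac{(y/2)(y/2+1)\cdots(y/2+r-1)}{r!} = \frac{(y/2)^{\bar r}}{r!}$. Multiplying by $(2r)!$ and using the identity $(2r)! = 2^r\, r!\,(2r-1)!!$ yields
\[
\sum_{k=0}^{r} h(2r,k)\, y^k = \frac{(2r)!}{r!}\,(y/2)^{\bar r} = 2^r (2r-1)!!\,(y/2)^{\bar r},
\]
and the sum may be restricted to $k\ge 1$ for $r\ge 1$ since a nonempty permutation cannot have zero cycles.

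The only real obstacle is establishing the EGF cleanly; once $F(x,y)=(1-x^2)^{-y/2}$ is in hand the remainder is a routine binomial-coefficient manipulation. If one preferred to avoid the exponential formula, an alternative would be a direct recursion on $h(2r,k)$ obtained by removing the cycle containing the element $1$ (which has some even length $2l$, contributing a factor $(2r-1)(2r-2)\cdots(2r-2l+1)$), turning the claim into a straightforward induction on $r$; this would give the same identity after matching with the recursion satisfied by the raising factorial $(y/2)^{\bar r}$.
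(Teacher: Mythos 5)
Your proposal is correct and follows essentially the same route as the paper: both use the exponential formula for labelled combinatorial structures, identify the deck/cycle EGF as $-\tfrac{1}{2}\log(1-x^2)$, obtain the hand enumerator $(1-x^2)^{-y/2}$, and extract the $x^{2r}$ coefficient via the generalized binomial series together with $(2r)! = 2^r\, r!\,(2r-1)!!$. The alternative recursion you sketch at the end is a reasonable fallback but is not what the paper does.
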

\begin{proof}
We compute this generating function using the exponential formula for labelled combinatorial structures~\cite{generatingfunctionology}.
Let $d_r = (r-1)!$ be the number of cyclic permutations of $[r]$, and set 
\[ 
    \mathcal{D}(x) = \sum_{r \ge 0} d_{2r} \frac{x^{2r}}{(2r)!} = \log( \frac{1}{(1-x^2)^{1/2}}).
\]
This is the \emph{deck enumerator function} of the exponential family of even length cycles. The corresponding \emph{hand enumerator} is then
\begin{equation}\label{eq:hand_function}
    \cH(x,y) = \sum_{r\geq 0}\sum_{k= 0}^{r} h(2r,k)y^k\frac{x^{2r}}{(2r)!}  
\end{equation}
where $h(0,0)=1$ by convention. The exponential formula (see Ref.~\onlinecite[Theorem 3.4.1]{generatingfunctionology}) states that $\cD(x)$ and $\cH(x,y)$ are related via
\[ 
\mathcal{H}(x,y) = e^{y \cD (x)} = (1-x^2)^{-\frac{y}{2}}.
\]
Expanding $(1-x^2)^{-\frac{y}{2}}$ with the generalized binomial series produces
\begin{align*}
\mathcal{H}(x,y) &= \sum_{r\ge 0} \binom{-\frac{y}{2}}{r} (-1)^r x^{2r} = \sum_{r \ge 0}  \frac{y}{2}\qty(\frac{y}{2}+1) \cdots \qty(\frac{y}{2} + r-1) \frac{x^{2r}}{r!} \\ 
& = \sum_{r \ge 0} \qty(\frac{y}{2})^{\bar r} \frac{(2r)!}{r!} \frac{x^{2r}}{(2r)!} = \sum_{r \ge 0} \qty(\frac{y}{2})^{\bar r} 2^r (2r-1)!! \frac{x^{2r}}{(2r)!},
\end{align*}
which when compared with \eqref{eq:hand_function} implies the result.
\end{proof}

\section{Modifications to spectral gap estimates}
\label{sec:norm-comparison}
The aim of this section is twofold. The first is to present the modifications to Ref.~\onlinecite{decorated-aklt} that are needed to obtain Proposition~\ref{prop:espilon-estimate}. The second is to compare the two approaches in the case of the decorated AKLT models, from which we will conclude that the new approach produces a better result for the minimal decoration needed to guarantee a spectral gap for this model. As we only consider decorated lattices in this section, to simplify notation set 
\[\bE_\# = \bE_\#^{(d_\#, n)}, \qquad q_\# = q_\#^{(d_\#,n)}.\]

Let us recall the bound on $\epsilon_n$ from Ref.~\onlinecite[Proposition 3.6]{decorated-aklt} applied to our setting:
\begin{prop}[{Ref.~\onlinecite[Proposition 3.6]{decorated-aklt}}]
\label{prop:old-epsilon-estimate}
    For any edge $(v_L, v_R)$ of a simple graph $G$ with degrees $d_L$ and $d_R$, repsectively, let
    \begin{equation}\label{eq:b_op}
    b^{\rm{op}}_L(n) = 8 a(n) \frac{ \norm{\bE_L}_{\rm{op}}}{q_L} \qc b^{\rm{op}}_R(n) = 4 a(n) \frac{ \norm{\bE_L}_{\rm{op}}}{q_R}, \quad b_{LR}^{\rm op}(n) = \frac{ b^{\rm{op}}_L(n) b^{\rm{op}}_R(n)}{8a(n)}.
    \end{equation}
    If $\max\left\{b_L^{\rm{op}}(n),\, b_R^{\rm{op}}(n),\, b^{\rm{op}}_{LR}(n)\right\} <1$, then $\epsilon^{(v_L,v_R)}_n \leq \delta^{\rm{op}}(n)$ where
\begin{equation}\label{eq:old_epsilon}
\delta^{\rm{op}}(n) = 4 a(n) \left(\frac{1}{\sqrt{(1-b_L^{\rm{op}}(n))(1-b_R^{\rm{op}}(n))}} + \frac{4 a(n)(1 + b_{LR}^{\rm{op}}(n))}{(1-b_L^{\rm{op}}(n))(1-b_R^{\rm{op}}(n))}\right).
\end{equation}
\end{prop}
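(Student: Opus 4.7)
My plan is to establish this proposition by specializing the abstract TNS framework of Ref.~\onlinecite[Section 3]{decorated-aklt} to the decorated AKLT setting, following essentially the same three-stage argument that underlies Proposition~\ref{prop:espilon-estimate} but keeping the original inner product estimate of Ref.~\onlinecite[Lemma 3.3]{decorated-aklt} rather than replacing it with Lemma~\ref{lem:IPEstimates}.

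First, I would set up the reduction to the virtual bond space. The ground state spaces $\ker(H_{Y_{v_L}})$, $\ker(H_{Y_{v_R}})$, and $\ker(H_{Y_{v_L}\cup Y_{v_R}})$ all admit TNS parameterizations built from the tensors $W^d_k$ defined in \eqref{eq:undecorated_transfer_op} composed with the one-dimensional AKLT tensors on the decorating chain. Invertibility of the boundary operators $Q_L$ and $Q_R$ from \eqref{eq:QL-definition}, which is encoded in the assumption $q_L, q_R > 0$, implies these TNS are injective by Ref.~\onlinecite[Corollary 3.4]{decorated-aklt}. The maximum hypothesis $\max\{b_L^{\rm op}, b_R^{\rm op}, b_{LR}^{\rm op}\}<1$ then ensures the analogous injectivity condition for $Y_{v_L}\cup Y_{v_R}$. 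With this in place, computing $\|G_{v_L}G_{v_R} - G_{v_L}\wedge G_{v_R}\|$ reduces to computing a norm of an operator acting on the virtual Hilbert space associated to the edge $(v_L, v_R)$.

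Second, I would decompose $Y_{v_L}\cup Y_{v_R} = X_L \cup C_n \cup X_R$ as in Section~\ref{sec:transfer_ops} and use the resulting factorization of the TNS to express the relevant virtual-space operator as a composition involving $\bE_L$, $\bE^n$, and $\bE_R$. The central factor $\bE^n$ converges to its rank-one projection $\ketbra{\idty}{\rho}$ at rate $a(n)$, while the boundary factors contribute overlaps that deviate from their asymptotic limits by amounts controlled by Ref.~\onlinecite[Lemma 3.3]{decorated-aklt}. This is where the operator-norm quantities $\|\bE_\#\|_{\rm op}$ and minimal eigenvalues $q_\#$ enter, producing the parameters $b_L^{\rm op}, b_R^{\rm op}, b_{LR}^{\rm op}$ from \eqref{eq:b_op}. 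Combining these estimates via the triangle inequality and a Cauchy--Schwarz-type bound on the cross terms then yields \eqref{eq:old_epsilon}.

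The main obstacle is the careful bookkeeping of normalization factors, which is precisely where this proposition diverges from Proposition~\ref{prop:espilon-estimate}. The asymmetric prefactors $8$ and $4$ appearing in $b_L^{\rm op}$ and $b_R^{\rm op}$ reflect the convention $\rho = \idty/2$, which makes $Q_L = \bE_L(\idty)$ and $Q_R = \bE_R^*(\rho)$ inhomogeneous in their treatment of the fixed point. One must also correctly apply the complete positivity identity $\|\bE_\#\|_{\rm op} = \|\bE_\#(\idty)\|$ and distinguish $\|\bE_L(\idty)\|$ from $\|\bE_R(\idty)\|$, as emphasized in the discussion following Proposition~\ref{prop:espilon-estimate}. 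This last point is exactly what is responsible for the weaker minimum decoration bound obtained by the operator-norm approach relative to the Schatten-infinity approach used in Proposition~\ref{prop:espilon-estimate}, and its correction motivates the improvement recorded in Lemma~\ref{lem:IPEstimates}.
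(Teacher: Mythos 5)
The statement you were asked to prove is not proved in the paper at all: it is reproduced verbatim as a citation of Ref.~\onlinecite[Proposition 3.6]{decorated-aklt}, and its role in the present work is only as a foil for comparison with Proposition~\ref{prop:espilon-estimate}. So there is no ``paper's own proof'' here to match your argument against. Your proposal is a reasonable high-level reconstruction of the argument one would expect to find in the referenced paper --- injectivity of the TNS, reduction to the virtual bond space, decomposition $Y_{v_L}\cup Y_{v_R} = X_L\cup C_n\cup X_R$, convergence of $\bE^n$ to $\ketbra{\idty}{\rho}$ at rate $a(n)$, and boundary-overlap control via Ref.~\onlinecite[Lemma~3.3]{decorated-aklt} --- and it correctly identifies that the only difference from Proposition~\ref{prop:espilon-estimate} is which inner-product approximation lemma is fed into the machinery.

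That said, the proposal is a roadmap rather than a proof, and it stops exactly where the content lives. You acknowledge that ``the main obstacle is the careful bookkeeping of normalization factors,'' but you do not carry it out; the factors $8$ and $4$ in \eqref{eq:b_op} do not follow from the sketch as written. Your attribution of the asymmetry to $\rho=\idty/2$ is correct in spirit, but the mechanism is more precise than ``inhomogeneous treatment of the fixed point.'' The ratio $b_L^{\rm op}/b_R^{\rm op}=2$ arises because the $\Lambda=Y_{v_L}$ norm comparison carries a $\rho_{\min}^{-1/2}$ (see \eqref{norm_equivs}) while the $\Lambda=Y_{v_R}$ one does not, and $\rho_{\min}=1/2$; the overall factor of $2$ relative to \eqref{eq:newB} comes from the $D=2$ prefactor in Ref.~\onlinecite[Lemma~3.3]{decorated-aklt} that Lemma~\ref{lem:IPEstimates} removes. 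Also note that your last sentence conflates two independent points: the error $\|\bE_L(\idty)\|\ne\|\bE_R(\idty)\|$ is a mistake in the \emph{application} of the old proposition (it raises the minimal decoration from $3$ to $4$ without changing the proposition itself), whereas Lemma~\ref{lem:IPEstimates} is a genuinely sharper estimate that replaces $\norm{\cdot}_{\rm op}$ by $\norm{\cdot}_\infty$; the former does not ``motivate'' the latter. Finally, the statement as printed uses $\norm{\bE_L}_{\rm op}$ in both $b_L^{\rm op}$ and $b_R^{\rm op}$ precisely because it reproduces the original reference's (incorrect) claim that the two operator norms coincide --- a detail your proposal mentions but does not thread through the estimate.
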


\subsection{The alternate inner product bound}

The proof of~Proposition~\ref{prop:old-epsilon-estimate} relies on the estimate contained in Ref.~\onlinecite[Lemma~3.3]{decorated-aklt}. Here we prove a variation of that result, from which the estimate of~Proposition~\ref{prop:espilon-estimate} follows.

For each $\Lambda\in\{Y_{v_L}\cup Y_{v_R}, \, Y_{v_L}, \, Y_{v_R}\}$, let $\cK_\Lambda$ be the virtual matrix space for the ground states of $\Lambda$ in the TNS representation, namely:
\[ 
\cK_{Y_{v_L}\cup Y_{v_R}} = M_{2^{d_L-1}\times 2^{d_R-1}}(\bC) \qc \cK_{Y_{v_L}} =   M_{2^{d_L-1}\times 2}(\bC) \qc \cK_{Y_{v_R}} =  M_{2\times 2^{d_R-1}}(\bC). 
\]
These spaces parametrize the ground states in the following sense: there exists linear maps $\Gamma_\Lambda : \cK_\Lambda \to \cH_\Lambda$ such that $\ran \Gamma_\Lambda$ is exactly the subspace of ground states on region $\Lambda$ (for an explicit definition in terms of the tensors of the TNS representation, see Ref.~\onlinecite[eq. (3.4)]{decorated-aklt}).

We define positive semi-definite Hermitian form on $\cK_\Lambda$, denoted by $\langle\cdot,\cdot\rangle_\Lambda$, via
\begin{eqnarray}
\langle B,C\rangle_{Y_{v_L}\cup Y_{v_R}} & = & \Tr (Q_R B^* Q_L C) \label{eq:IP_vw}\\
\langle B, C \rangle_{Y_{v_L}} & = & \Tr (\rho B^* Q_L C) \label{eq:IP_v}\\
\langle B, C \rangle_{Y_{v_R}} & = & \Tr (Q_R B^* C)\label{eq:IP_w}
\end{eqnarray}
These are inner products as long as $Q_L=\bE_L(\idty)$ and $Q_R=(\bE_R)^*(\rho)$ are (strictly) positive-definite, which by Lemma~\ref{lem:quantity_bounds} holds for the decorated AKLT models when $n\geq \frac{\ln(d-1)}{\ln(3)}-\frac{\ln(\ln(3))}{\ln(3)}+\frac{1}{2}$. In this case, one can also verify that the maps $\Gamma_\Lambda$ are injective. They also satisfy the following approximation bound, which is a variant of Ref.~\onlinecite[Lemma~3.3]{decorated-aklt}.
\begin{lemma} \label{lem:IPEstimates}
Let $\Lambda\in\{Y_{v_L}\cup Y_{v_R}, \, Y_{v_L}, \, Y_{v_R}\}$. Then for any $B, C \in \cK_\Lambda$,
\begin{equation}\label{innerLambda}
\left| \langle \Gamma_\Lambda(B), \Gamma_\Lambda(C) \rangle - \langle B, C \rangle_\Lambda\right|  \leq  2a(n) C_\Lambda \|B\|_2\|C\|_2,
\end{equation}
where the constants are defined by
\begin{equation}
C_{Y_{v_L}\cup Y_{v_R}} = \|\E_L\|_\infty\|\E_R\|_\infty,  \qquad C_{Y_{v_L}} = \|\E_L\|_\infty,  \qquad C_{Y_{v_R}}=\|\E_R\|_\infty.
\end{equation}
\end{lemma}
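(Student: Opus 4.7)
The plan is to derive, for each region $\Lambda \in \{Y_{v_L}\cup Y_{v_R}, Y_{v_L}, Y_{v_R}\}$, an explicit tensor-network contraction formula for $\langle \Gamma_\Lambda(B), \Gamma_\Lambda(C)\rangle$ in which the decorated chain $C_n$ appears only through the transfer operator power $\bE^n: M_2(\bC) \to M_2(\bC)$. The auxiliary bilinear forms \eqref{eq:IP_vw}--\eqref{eq:IP_w} will then be seen to correspond to exactly the same formula after replacing $\bE^n$ by its rank-one asymptotic projector $\ketbra{\idty}{\rho}$, so the difference $\langle \Gamma_\Lambda(B), \Gamma_\Lambda(C)\rangle - \langle B, C\rangle_\Lambda$ is controlled entirely by the remainder $R_n := \bE^n - \ketbra{\idty}{\rho}$, whose Schatten-$\infty$ norm equals $a(n)$ by \eqref{eq:1d_convergence}.

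For the full region $\Lambda = Y_{v_L}\cup Y_{v_R}$, I would unpack the valence-bond construction of $\Gamma_\Lambda$ using the tensors $W_k^{d_L}$ and $\tilde{W}_k^{d_R}$ from Section~\ref{sec:transfer_ops}. Contracting physical indices in $\langle \Gamma_\Lambda(B), \Gamma_\Lambda(C)\rangle$ produces $\bE_L$ on $X_L$, $\bE^n$ on $C_n$, and $\bE_R$ on $X_R$, assembling into a trace of the form
\[
\langle \Gamma_\Lambda(B), \Gamma_\Lambda(C)\rangle = \tr\bigl[B^* \cdot \cT_n(C)\bigr]
\]
for a linear map $\cT_n$ built from $\bE_L$, $\bE^n$ and $\bE_R$. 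Substituting $\bE^n \mapsto \ketbra{\idty}{\rho}$ factorizes the middle contraction, reducing $\cT_n(C)$ to the combined action of $Q_L = \bE_L(\idty)$ and $Q_R = \bE_R^*(\rho)$ on $C$, and recovers $\tr(Q_R B^* Q_L C)$ by cyclicity of the trace. The analogous computations for $Y_{v_L}$ and $Y_{v_R}$ follow the same pattern: the absent half-region contributes only $\idty$ or $\rho$ at the open virtual boundary of $\Gamma_\Lambda$, reproducing \eqref{eq:IP_v} and \eqref{eq:IP_w} respectively.

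With these formulas in hand, writing $\bE^n = \ketbra{\idty}{\rho} + R_n$ in each contraction isolates the deviation as a trace $\tr[B^* \cE_n(C)]$, where $\cE_n$ is the composition of $R_n$ with the appropriate subset of $\{\bE_L, \bE_R\}$ for the region. Cauchy--Schwarz for the Hilbert--Schmidt inner product then gives
\[
\bigl|\tr[B^* \cE_n(C)]\bigr| \leq \|B\|_2 \, \|\cE_n(C)\|_2 \leq \|R_n\|_\infty\, C_\Lambda\, \|B\|_2 \|C\|_2 = a(n)\, C_\Lambda\, \|B\|_2 \|C\|_2,
\]
using submultiplicativity of the Schatten-$\infty$ norm under composition of maps on $(M_2(\bC)^{\otimes k}, \|\cdot\|_2)$, together with $\|\bE_R\|_\infty = \|\bE_L\|_\infty$ from \eqref{eq:ER}. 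The extra factor $2$ in the stated bound arises from bookkeeping of two symmetric contributions that appear when symmetrizing Cauchy--Schwarz across the $B$ and $C$ sides.

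The main obstacle will be Step~1: producing a clean derivation of the tensor-network contraction and verifying that the fixed-point substitution $\bE^n \mapsto \ketbra{\idty}{\rho}$ really does collapse to the trace identities \eqref{eq:IP_vw}--\eqref{eq:IP_w}. This requires carefully tracking how each boundary index of $\Gamma_\Lambda$ is closed against $\idty$ or $\rho$. Once that is in place, the choice of the Schatten-$\infty$ norm in the final Cauchy--Schwarz step is exactly what yields the tighter constant $C_\Lambda$ here compared to Ref.~\onlinecite{decorated-aklt}, where the induced operator norm $\|\cdot\|_{\rm op}$ from \eqref{eq:op_norm} would force an additional dimensional factor via $\|A\|_2 \leq 2^{(d-1)/2}\|A\|_{\rm op}$ whenever $A$ lives in $M_2(\bC)^{\otimes d-1}$.
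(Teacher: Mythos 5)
Your high-level plan is the same as the paper's: express both $\langle\Gamma_\Lambda(B),\Gamma_\Lambda(C)\rangle$ and $\langle B,C\rangle_\Lambda$ as transfer-operator contractions that differ only in whether $\bE^n$ or $\kettbra{\idty}{\rho}$ sits on the decorated chain, and then bound the remainder $R_n=\bE^n-\kettbra{\idty}{\rho}$ by Cauchy--Schwarz/H\"older. However, the way you organize the final estimate is not how the argument actually closes, and your explanation of the factor of $2$ is wrong.

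The paper does \emph{not} write the difference as $\tr[B^*\cE_n(C)]$ with a map $\cE_n$ on $\cK_\Lambda$. Instead (for $\Lambda = Y_{v_L}\cup Y_{v_R}$) the difference is identified as the trace of a superoperator on $M_2(\bC)$,
\[
\tr_{\cB(M_2(\bC))}\big[(\bE^n-\kettbra{\idty}{\rho})\circ\bE_R\circ S_{B,C}\circ\bE_L\big], \qquad S_{B,C}(A) := B^*AC,
\]
and is bounded via $|\tr\Phi|\le\|\Phi\|_1\le\|\bE^n-\kettbra{\idty}{\rho}\|_2\,\|\bE_R\circ S_{B,C}\circ\bE_L\|_2$ followed by two applications of $\|RT\|_2\le\|R\|_\infty\|T\|_2$ and the exact identity $\|S_{B,C}\|_2=\|B\|_2\|C\|_2$. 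The factor of $2$ has nothing to do with ``symmetrizing Cauchy--Schwarz across the $B$ and $C$ sides''; it is the dimensional constant $\sqrt{\dim M_2(\bC)}=2$ appearing in $\|\bE^n-\kettbra{\idty}{\rho}\|_2\le 2\|\bE^n-\kettbra{\idty}{\rho}\|_\infty = 2a(n)$, which shows up because the small bond dimension ($D=2$) is all that enters the estimate. Your formulation would instead require you to prove a bound like $\|\cE_n\|_\infty\le 2a(n)\|\bE_L\|_\infty\|\bE_R\|_\infty$ for the map $\cE_n$ that absorbs a partial trace over the $M_2(\bC)$ bond; this is not a straightforward submultiplicativity application, and the bound you sketch (which, read literally, would give $a(n)C_\Lambda\|B\|_2\|C\|_2$, stronger than the lemma) is not justified. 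The paper's route through $\|\cdot\|_1$ and $\|S_{B,C}\|_2$ over $\cB(M_2(\bC))$ is exactly what makes the estimate both correct and dimension-friendly, which is the whole point of switching from the operator norm $\|\cdot\|_{\rm op}$ of Ref.~\onlinecite{decorated-aklt} to the Schatten-$\infty$ norm here.
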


We discuss the differences between this and Ref.~\onlinecite[Lemma~3.3]{decorated-aklt} and then give the proof. First, since the virtual bound dimension is $D=2$ for the decorated AKLT models, the constant prefactor in the bound from Ref.~\onlinecite[Lemma~3.3]{decorated-aklt} is obtained from replacing $C_\Lambda$ with
\[
C_{Y_{v_L}\cup Y_{v_R}}' = 2\|\E_L\|_{\rm op}\|\E_R\|_{\rm op}, \quad C_{Y_{v_L}}' = 2\|\E_L\|_{\rm op}, \quad \text{and}\quad C_{Y_{v_R}}'=2\|\E_R\|_{\rm op},
\]
where $\|\cdot\|_{\rm op}$ denotes the norm induced from the operator norm from \eqref{eq:op_norm}.

 The other difference is that the bound in Ref.~\onlinecite[Lemma~3.3]{decorated-aklt} is given in terms of the operator norm of $B$ and $C$ instead of the Hilbert-Schmidt norm. However, in the previous work, the operator norm is immediately bounded from above using one of the norms induced by \eqref{eq:IP_vw}-\eqref{eq:IP_w}. The Hilbert-Schmidt norm satisfies the same bound: 
\begin{equation}\label{norm_equivs}
\|B\|_2  \leq  \frac{1}{\sqrt{q_Lq_R}}\|B\|_{Y_{v_L}\cup Y_{v_R}}, \quad
\|B\|_2  \leq  \frac{1}{\sqrt{\rho_{\rm min}q_L}}\|B\|_{Y_{v_L}}, \quad
\|B\|_2  \leq  \frac{1}{\sqrt{q_R}}\|B\|_{Y_{v_R}}
\end{equation}
where $B\in\cK_\Lambda$ for the appropriate choice of $\Lambda \in\{Y_{v_L}\cup Y_{v_R}, \, Y_{v_L}, \, Y_{v_R}\}$. The above should be compared with Ref.~\onlinecite[eq. (3.28)]{decorated-aklt}. Here, $\rho_{\rm min} := \min \spec (\rho) = 1/2$.

Using instead Lemma~\ref{lem:IPEstimates} and \eqref{norm_equivs} in Ref.~\onlinecite{decorated-aklt}, all arguments run as stated with the small modification of replacing $C_\Lambda'$ with $C_\Lambda$. This results in Proposition~\ref{prop:espilon-estimate} from Section~\ref{sec:transfer_ops}.

\begin{proof}
Similar to the proof of Ref.~\onlinecite[Lemma 3.3]{decorated-aklt}, we prove the result for $\Lambda = Y_{v_L}\cup Y_{v_R}$ as the other two cases follow from simple modifications of this case. 

Let $\cB = \{\ket{0}, \ket{1}\}$ be an orthonormal basis of $\bC^2$. Then,
\begin{eqnarray}
\langle \Gamma_{\Lambda}(B), \Gamma_{\Lambda}(C) \rangle = 
%& = &
%\sum_{\substack{\ell,r \\ i_1,\ldots,i_n}} \Tr [V_{i_1}^*\cdots V_{i_n}^*(T_r^R)^* B^*(T_\ell^L)^*] \Tr[ T_\ell^L C T_r^R V_{i_n}\cdots V_{i_1}] \nonumber\\
%& = & 
\sum_{\alpha, \beta = 0,1} \bra{\alpha} \E^{n}\circ \E_R\big[B^*\E_L(\ketbra{\alpha}{\beta})C\big] \ket{\beta}, \label{IP_gen1}
\end{eqnarray}
(see Ref.~\onlinecite[eq. (3.25)]{decorated-aklt}).
Furthermore, the identity
\begin{equation}\label{IP_gen2}
\langle B,C\rangle_\Lambda = \sum_{\alpha, \beta=0,1} \bra{\alpha} \ketbra{\idty}{\rho}\circ \E_R[B^*\E_L(\ketbra{\alpha}{\beta})C\big] \ket{\beta}
\end{equation}
can easily be seen from simplifying the RHS. 

Substituting \eqref{IP_gen1}-\eqref{IP_gen2} into the LHS of \eqref{innerLambda}, and recalling that $E_{\alpha, \beta} = \ketbra{\alpha}{\beta}$ is an orthonormal basis with respect to the Hilbert-Schmidt norm, one finds
\begin{eqnarray}
\left| \langle \Gamma_\Lambda(B), \Gamma_\Lambda(C) \rangle - \langle B, C \rangle_\Lambda\right|
& = &
\left|\sum_{\alpha, \beta = 0,1}  \left\langle E_{\alpha,\beta}, \, (\E^{n}-\ketbra{\idty}{\rho})\circ\E_R\circ S_{B,C}\circ\E_L(E_{\alpha,\beta})\right\rangle_2\right| \nonumber \\
& \leq &
\|(\E^{n}-\ketbra{\idty}{\rho})\circ\E_R\circ S_{B,C}\circ\E_L\|_1 \nonumber\\
& \leq & \|\E^{n}-\ketbra{\idty}{\rho}\|_2\|\E_R\circ S_{B,C}\circ\E_L\|_2 \label{eq:matrix_ip}
\end{eqnarray}
where we used cyclicity of the trace and the Cauchy-Schwarz inequality, denoted by $\|\cdot\|_p$ the Schatten $p$-norm, and introduced the map $S_{B,C}:  M_{2^{d_L-1}}(\bC) \to  M_{2^{d_R-1}}(\bC)$ defined by
$S_{B,C}(A) := B^*AC.$

Recalling the diagonalization of $\bE:M_2(\bC)\to M_2(\bC)$ from \eqref{eq:Spin1_transfer_op}, one finds
\[
\|\E^{n}-\ketbra{\idty}{\rho}\|_2 \leq \sqrt{\dim(M_2(\bC))} \|\E^{n}-\ketbra{\idty}{\rho}\|_{\infty} = 2a(n).
\]
Then, repeatedly applying the generalized H\"older inequality $\|RT\|_2 \leq \|R\|_\infty\|T\|_2$ to \eqref{eq:matrix_ip} produces
\begin{eqnarray}
\left| \langle \Gamma_\Lambda(B), \Gamma_\Lambda(C) \rangle - \langle B, C \rangle_\Lambda\right| & \leq & 2a(n) \|\bE_L\|_\infty \|\bE_R\|_\infty \|S_{B,C}\|_2 \, . \label{G_final_est}
\end{eqnarray}
The final result is then a consequence of substituting $\|S_{B,C}\|_2=\|B\|_2\|C\|_2$, which holds from calculating
\[
\|S_{B,C}\|_2^2 = \sum_{\gamma,\delta=1}^{2^{d_L-1}}\Big\langle B^*\ketbra{\gamma}{\delta}C, \, B^*\ketbra{\gamma}{\delta}C\Big\rangle_2 = \|B\|_2^2\|C\|_2^2
\]
where $\{\ket{\gamma}: 1\leq \gamma \leq 2^{d_L-1} \}$ denotes any orthonormal basis of $\bC^{2^{d_L-1}}$.
\end{proof}

\subsection{Comparing the two bounds for decorated AKLT models}
To compare the two approaches, it is sufficient to assume that $G$ is a simple, regular graph, i.e. $d = \deg(v)$ for all vertices. Thus, recalling  Proposition~\ref{prop:espilon-estimate} and Proposition~\ref{prop:old-epsilon-estimate}, the two approaches show that the AKLT model on $G^{(n)}$ is uniformly gapped if
\begin{equation}\label{eq:gap_conditions}
    \max\left\{b_L^\#(n), \, b_R^{\#}(n), \, b_{LR}^\#(n)\right\}<1 \quad \text{and} \quad \delta^\#(n)<1/d 
\end{equation}
where for the approach of the present work, we redefine the quantities from Proposition~\ref{prop:espilon-estimate} as $\delta^\infty(n)=\delta_{d,d}(n)$, and
\begin{equation}\label{eq:newB}
b_L^{\rm \infty}(n) = 4a(n)\frac{\|\bE_L\|_{\infty}}{q_L}, \quad b_R^{\infty}(n) = 2a(n)\frac{\|\bE_R\|_{\infty}}{q_R}, \quad b_{LR}^{\infty}(n) = \frac{b_L^{\rm \infty}(n) b_R^{\rm \infty}(n) }{4a(n)}.
\end{equation}
We prove that, in the case of decorated AKLT models, the conditions from \eqref{eq:gap_conditions} are necessarily satisfied for the approach used in the present work if they are satisfied for the approach from Ref.~\onlinecite{decorated-aklt}.

\begin{prop}\label{prop:bound_comparison} Suppose that $G=(V,E)$ is a regular, simple graph such that $\deg(v) = d$ for all $v\in V$. Then, for any $n\geq \frac{\ln(d-1)}{\ln(3)}-\frac{\ln(\ln(3))}{\ln(3)}+\frac{1}{2}$, one has $b_{LR}^{\infty}(n)< b_{LR}^{\rm op}(n)$ and
\begin{equation}\label{eq:b_comparison}
    \sqrt{(1-b_L^{\rm op}(n))(1-b_R^{\rm op}(n))} < \sqrt{(1-b_L^{\infty}(n))(1-b_R^{\infty}(n)).}
\end{equation}
Said differently, $ \delta^\infty(n) <\delta^{\rm op}(n)$.
\end{prop}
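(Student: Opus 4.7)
The plan is to reduce both claimed inequalities to a single norm comparison between $\|\bE_L\|_\infty$ and $\|\bE_L\|_{\rm op}$. A direct computation from the definitions gives
\[
\frac{b_L^\infty(n)}{b_L^{\rm op}(n)} = \frac{\|\bE_L\|_\infty}{2\|\bE_L\|_{\rm op}}, \qquad \frac{b_R^\infty(n)}{b_R^{\rm op}(n)} = \frac{\|\bE_R\|_\infty}{2\|\bE_L\|_{\rm op}},
\]
and, since $b_{LR}^\infty = b_L^\infty b_R^\infty/(4a)$ while $b_{LR}^{\rm op} = b_L^{\rm op}b_R^{\rm op}/(8a)$,
\[
\frac{b_{LR}^\infty(n)}{b_{LR}^{\rm op}(n)} = \frac{\|\bE_L\|_\infty\|\bE_R\|_\infty}{2\|\bE_L\|_{\rm op}^2}.
\]
Using $\|\bE_R\|_\infty=\|\bE_L\|_\infty$ (from \eqref{eq:ER}), the first claim $b_{LR}^\infty<b_{LR}^{\rm op}$ reduces to $\|\bE_L\|_\infty<\sqrt{2}\|\bE_L\|_{\rm op}$. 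The second claim \eqref{eq:b_comparison} then follows from the monotonicity of $(x,y)\mapsto(1-x)(1-y)$ on $[0,1)^2$ via the individual inequalities $b_L^\infty<b_L^{\rm op}$ and $b_R^\infty<b_R^{\rm op}$, which in turn require only $\|\bE_L\|_\infty<2\|\bE_L\|_{\rm op}$. The entire proposition therefore follows from the single bound $\|\bE_L\|_\infty<\sqrt{2}\|\bE_L\|_{\rm op}$.

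To make this target inequality explicit, I would use Lemma~\ref{lem:SVD_Decomp} together with the last paragraph of the proof of Lemma~\ref{lem:quantity_bounds}, which shows that, throughout the hypothesis range of $n$, the largest singular value of $\bE_L$ corresponds to the $\idty$-input direction. This gives
\[
\|\bE_L\|_\infty = \frac{d+1}{2^{d-1/2}}\|V^{(d-1)}_\rho(\alpha_n)\|_2,
\]
while complete positivity of $\bE_L$ yields $\|\bE_L\|_{\rm op}=\|Q_L^{(d,n)}\|=\tfrac{d+1}{2^{d-1}}\|V^{(d-1)}_\rho(\alpha_n)\|$. The ratio reduces, after cancellation of the prefactor, to $\tfrac{1}{\sqrt{2}}\|V^{(d-1)}_\rho(\alpha_n)\|_2/\|V^{(d-1)}_\rho(\alpha_n)\|$, so everything comes down to the comparison of Frobenius and operator norms
\[
\|V^{(d-1)}_\rho(\alpha_n)\|_2 < 2\,\|V^{(d-1)}_\rho(\alpha_n)\|
\]
for a single self-adjoint element of the commutative algebra $\cZ^{(d-1)}$.

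To establish this, I would decompose $V^{(d-1)}_\rho(\alpha_n) = \sum_{j \in J_{d-1}} \mu_j(n) Q^{(d-1,j)}$ along the Casimir spectral projections, so that $\|V\| = \max_j |\mu_j(n)|$ and $\|V\|_2^2 = \sum_j \mu_j(n)^2 \dim\ran Q^{(d-1,j)}$. The eigenvalues $\mu_j(n)$ are computable explicitly by iterating Lemma~\ref{lemma:product-m1-mr}: each $M_r^{(d-1)}$ is a polynomial in $M_1^{(d-1)}$, which acts as the scalar $\tfrac{j(j+1)}{2}-\tfrac{3(d-1)}{8}$ on the spin-$j$ subspace. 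Writing $V^{(d-1)}_\rho(\alpha_n) = \idty + R_L$ as in the proof of Lemma~\ref{lem:quantity_bounds}, the identity $\langle \idty, M_r^{(d-1)}\rangle_{\rm HS} = 0$ for $r\geq 1$ (since each $S_p$ with $p\neq\emptyset$ contains a trace of a single-site spin operator) yields $\|V\|_2^2 = 2^{d-1} + \|R_L\|_2^2$, while the explicit bound $\|R_L\| \le \tfrac{(1+\sqrt 3/3^n)^{d-1}-1}{3}$ from \eqref{eq:RL_bound} gives $\|V\|\geq 1 - \|R_L\|$. The main obstacle is that the naive inequality $\|V\|_2\le 2^{(d-1)/2}\|V\|$ is far too weak for large $d$; the argument must instead exploit both the strong concentration of $V^{(d-1)}_\rho(\alpha_n)$ near $\idty$ produced by the $3^{-n}$-suppression of the $r\geq 1$ terms in \eqref{eq:V1} and a careful accounting of the Casimir subspace dimensions via Lemma~\ref{lemma:matchings-norm} to bound $\|R_L\|_2^2$, so as to verify the factor-of-$2$ improvement uniformly across the hypothesis range. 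Once $\|V\|_2 < 2\|V\|$ is verified, the chain of implications above delivers both inequalities in the proposition.
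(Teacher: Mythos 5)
Your reduction collapses the entire proposition to the single inequality $\|\bE_L\|_\infty < \sqrt{2}\,\|\bE_L\|_{\rm op}$, equivalently $\|V_{\rho}^{(d-1)}(\alpha_n)\|_2 < 2\,\|V_{\rho}^{(d-1)}(\alpha_n)\|$. This inequality is false for large $d$ in the hypothesis range, so the argument cannot be completed. As $n\to\infty$ one has $\alpha_n\to 0$, so $V_{\rho}^{(d-1)}(\alpha_n) \to \idty_{2^{d-1}}$ and $\|V\|_2/\|V\| \to 2^{(d-1)/2}$, which exceeds $2$ already for $d\geq 4$. Even at the lower end of the hypothesis range, $\|R_L\|$ is $O(1)$ (so $\|V\|$ stays $O(1)$) while $\|V\|_2 \geq 2^{(d-1)/2}$. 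The same obstruction also kills the weaker individual inequality $b_L^\infty < b_L^{\rm op}$ needed for your monotonicity argument, since that requires $\|V\|_2 < 2\sqrt{2}\,\|V\|$. In fact, $\|\bE_L\|_{\rm op}$ being much smaller than $\|\bE_L\|_\infty$ for large $d$ is precisely the improvement the Schatten-$\infty$ norm buys, so any proof based on a uniform factor-of-two comparison between the two norms must fail.

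The paper's proof takes a different route that you should reconstruct. First, equation \eqref{eq:b_op} has a typo: the formula for $b_R^{\rm op}$ should involve $\|\bE_R\|_{\rm op}$, not $\|\bE_L\|_{\rm op}$; this is consistent with the corrected version of Ref.~\onlinecite{decorated-aklt} discussed in the text, and it is what the paper's proof actually uses. The crucial computational fact is that $\|\bE_R\|_{\rm op} = \|\bE_R(\idty)\| = \tfrac{d+1}{2}$ exactly (for every $n$), which can be enormously larger than $\|\bE_L\|_{\rm op}$. For the first claim, the comparison is then $\|\bE_L\|_\infty\|\bE_R\|_\infty < 2\|\bE_L\|_{\rm op}\|\bE_R\|_{\rm op}$; the huge $\|\bE_R\|_{\rm op}$ factor makes this easy to verify via $F_{d-1}(\alpha_n^2/4) < 2F_{d-1}(|\alpha_n|/2)$. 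For \eqref{eq:b_comparison}, the paper deliberately avoids factorizing into individual comparisons and uses the AM--GM bound $\sqrt{(1-b_L^{\rm op})(1-b_R^{\rm op})} \leq 1 - \tfrac{1}{2}(b_L^{\rm op}+b_R^{\rm op})$, then shows $\tfrac{1}{2}(b_L^{\rm op}+b_R^{\rm op}) > b_L^\infty = b_R^\infty$. The term from $b_R^{\rm op}$ contributes $2^{d-2}$ inside the parenthesis, which dominates the $\sqrt{2^{d-2}F_{d-1}(\alpha_n^2/4)}$ appearing in $b_L^\infty$; this asymmetry is exactly what your factorized approach throws away.
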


The proof will make use of the following function, which is even and strictly increasing for $x\geq 0$:
    \begin{equation}\label{eq:F}
          F_{d-1}(x) := \sum_{r=0}^{\floor{\half[d-1]}} \binom{d-1}{2r} \frac{x^{2r}}{2r+1}.
    \end{equation}

\begin{proof}[Proof of Proposition~\ref{prop:bound_comparison}]
To begin, we produce formulas of the necessary quantities for the comparison. Recall that $\alpha_n = 2(-3)^{-n}$. Then by \eqref{eq:ER},
\begin{equation}\label{eq:ELR_infty_norm}
    \|\bE_L\|_\infty= \|\bE_R\|_{\infty}=\frac{d+1}{2^{d-1/2}}\|V_{\rho}^{(d-1)}(\alpha_n)\|_2 = \frac{d+1}{2^{d/2}}\sqrt{F_{d-1}(\alpha_n^2/4)}.
\end{equation}
Here, we have invoked Lemma~\ref{lem:quantity_bounds}, and used \eqref{eq:V_idty_norm} and \eqref{eq:F} for the last equality. 

On the other hand, the norm induced by the operator norm satisfies $\|\bE_{\#}\|_{\rm op} = \|\bE_{\#}(\idty)\|$ since $\bE_{\#}$ is a completely positive map. Therefore, considering \eqref{eq:undecorated_transfer_op},
\begin{equation}\label{eq:ER_op_norm}
    \bE_R(\idty) =  \sum_{k=0}^d W_k^d(W_k^d)^* = \frac{d+1}{2}\idty \;\; \implies \;\;  \|\bE_{R}\|_{\rm op}=\frac{d+1}{2},
\end{equation}
while by Lemma~\ref{lem:SVD_Decomp},
\begin{equation}\label{eq:EL_op_norm}
   \|\bE_{L}\|_{\rm op} = \frac{d+1}{2^{d-1}}\|V_{\rho}^{(d-1)}(\alpha_n)\| \geq \frac{d+1}{2^{d-1}}F_{d-1}(|\alpha_n|/2).
\end{equation}
The lower bound in \eqref{eq:EL_op_norm} follows from using the definition of $V_{\rho}^{(d-1)}(\alpha_n)$ and \eqref{eq:right-largest-eigenvector} to calculate
\begin{align*}
	\|V_{\rho}^{(d-1)}(\alpha_n)\| & \ge \expval{V_{\rho}^{(d-1)}(\alpha_n)}{\uparrow\cdots \uparrow} 
	=   F_{d-1}\qty(|\alpha_n|/2).
\end{align*}

We can now compare the desired quantities. Considering \eqref{eq:b_op} and \eqref{eq:newB} it is clear that $b_{LR}^{\infty}(n)< b_{LR}^{\rm op}(n)$ if and only if 
$\|\bE_L\|_{\infty}\|\bE_R\|_{\infty} < 2\|\bE_L\|_{\rm op}\|\bE_R\|_{\rm op}.$ From \eqref{eq:ELR_infty_norm}-\eqref{eq:EL_op_norm}, it is easily deduced that this inequality holds since 
\[F_{d-1}(\alpha_n^2/4)= F_{d-1}(9^{-n})< 2F_{d-1}(3^{-n})=2F_{d-1}\qty(|\alpha_n|/2).\]

Now consider \eqref{eq:b_comparison}, and note that \eqref{eq:ER} implies $Q_R = \frac{1}{2}Q_L$ and $b_L^\infty(n) = b_R^\infty(n)$. Thus, by \eqref{eq:newB} and \eqref{eq:ELR_infty_norm},
\[
\sqrt{(1-b^{\infty}_L(n))(1-b^{\infty}_R(n))} = 1-b^{\infty}_L(n) = 1-\frac{ a(n)(d+1)}{2^{d-3}q_L} \cdot \sqrt{2^{d-2}F_{d-1}(\alpha_n^2/4)}.
\]
On the other hand, applying \eqref{eq:b_op} and \eqref{eq:ER_op_norm}-\eqref{eq:EL_op_norm}
\begin{align*}
	\sqrt{(1-b^{\rm op}_L(n))(1-b^{\rm op}_R(n))} & \le 1-\frac{b^{\rm op}_L(n)+b^{\rm op}_R(n)}{2} \le 1- \frac{ a(n)(d+1)}{2^{d-3}q_L}\qty( F_{d-1}\qty(\abs{\alpha_n}/2) + 2^{d-2}) .
\end{align*}
As $F_{d-1}\qty(\abs{\alpha_n}/2)>0$, \eqref{eq:b_comparison} is a consequence of using $\binom{n}{k} = \binom{n-1}{k}+ \binom{n-1}{k-1}$ for $1\leq k \leq n-1$ to bound
\[
F_{d-1}(\alpha_n^2/4)< F_{d-1}(1) < \sum_{r=0}^{\floor{\tfrac{d-1}{2}}}\binom{d-1}{2r} = \sum_{r=0}^{d-2}\binom{d-2}{r} = 2^{d-2}.
\]
\end{proof}

\bibliography{bibliography}

\end{document}